\definecolor{light-gray}{gray}{0.8}
\newtheoremstyle{myplain}
  {9pt}
  {9pt}
  {\itshape}
  {\parindent}
  {\scshape}
  {:}
  {.4em}
  {}
\newtheoremstyle{mydefinition}
  {9pt}
  {9pt}
  {\itshape}
  {\parindent}
  {\scshape}
  {:}
  {.4em}
  {}
\newtheoremstyle{myremark}
  {9pt}
  {9pt}
  {}
  {\parindent}
  {\scshape}
  {:}
  {.4em}
  {}
\theoremstyle{myplain}
\newtheorem{theorem}{Theorem}
\newtheorem{corollary}{Corollary}
\newtheorem{lemma}{Lemma}
\newtheorem{proposition}{Proposition}
\theoremstyle{mydefinition}
\newtheorem{assumption}{Assumption}
\newtheorem{definition}{Definition}
\theoremstyle{myremark}
\newtheorem{example}{Example}
\newtheorem{remark}{Remark}
\renewcommand{\cite}{\citet}
\def\argmax{\mathop{\rm arg\,max}}
\newcommand{\cA}{\mathcal{A}}
\newcommand{\cC}{\mathcal{C}}
\newcommand{\cD}{\mathcal{D}}
\newcommand{\cF}{\mathcal{F}}
\newcommand{\cL}{\mathcal{L}}
\newcommand{\cP}{\mathcal{P}}
\newcommand{\cS}{\mathcal{S}}
\newcommand{\cX}{\mathcal{X}}
\newcommand{\cY}{\mathcal{Y}}
\newcommand{\cZ}{\mathcal{Z}}
\newcommand{\cV}{\mathcal{V}}
\newcommand{\cU}{\mathcal{U}}
\newcommand{\citeposs}[1]{\citeauthor{#1}'s \citeyearpar{#1}}
\DeclareMathOperator{\Sel}{Sel}
\DeclareMathOperator{\cl}{cl}
\newcommand{\contf}{\mathbb C_\theta} 
\newcommand{\capf}{\mathbb C^*_\theta} 
\newcommand{\eX}{{\boldsymbol{X}}}
\newcommand{\eY}{{\boldsymbol{Y}}}   
\newcommand{\eV}{{\boldsymbol{V}}}
\newcommand{\eU}{{\boldsymbol{U}}}
\newcommand{\eW}{{\boldsymbol{W}}}
\newcommand{\eK}{{\boldsymbol{K}}}
\newcommand{\sfo}{\mu} 
\newcommand{\sfs}{\pi}   
\numberwithin{equation}{section}
\title{Set-Valued Control Functions\thanks{We appreciate feedback from Marinho Bertanha, Andrew Chesher, Ben Deaner, \'{A}ureo de Paula, Dongwoo Kim, Pat Kline, Francesca Molinari, Kenichi Nagasawa, Adam Rosen, Andrei Zeleneev, and participants in seminars at Johns Hopkins, Stanford, Berkeley, Yale, UCSC, UC Davis, UCL, and Warwick and workshops at Manchester and Seoul National, as well as at the North American Winter Meeting 2025, the North American Summer Meeting 2024, and the Asian Meeting 2024 of the Econometric Society. We thank Shuowen Chen, Qifan Han, and Zhanyuan Tian for their excellent research assistance. We gratefully acknowledge financial support from NSF grant SES-2018498.}}
\author{
Sukjin Han \\  School of Economics \\ University of Bristol \\ \url{vincent.han@bristol.ac.uk} 
\and
Hiroaki Kaido \\ Department of Economics \\ Boston University \\ \url{hkaido@bu.edu}  
}
\begin{document}
\onehalfspacing

\maketitle

\begin{abstract}
The \emph{control function} approach allows the researcher to identify various causal effects of interest. While powerful, it requires a strong invertibility assumption in the selection process, which limits its applicability. This paper expands the scope of the nonparametric control function approach by allowing the control function to be \emph{set-valued} and derive sharp bounds on structural parameters. The proposed generalization accommodates a wide range of selection processes involving discrete endogenous variables, random coefficients, treatment selections with interference, and dynamic treatment selections. The framework also applies to partially observed or identified controls that are directly motivated from economic models.

\vspace{0.3in}
\noindent\textbf{Keywords:} Control Function, Control Variable, Partial Identification, Spillover Effects, Dynamic Treatment Effects.
\end{abstract}

\clearpage

\section{Introduction}
Endogeneity is the main challenge in conducting causal inference with
observational data. The control function (CF) approach has been a
valuable tool in addressing endogeneity and recovering various causal
parameters. Although this approach originated in parametric models (e.g., \citealp{dhrymes1970econometrics, heckman1979sample}), it
has been proven to be a powerful tool for identification and estimation
in nonparametric models that allow causal effect heterogeneity. The
CF approach constructs control variables $V$, which define a \emph{latent
type} conditional on which endogenous explanatory variables $D$ can
be viewed as unconfounded. In observational settings, such $V$ is
typically constructed by inverting treatment selection processes so
that it is written as a function of observables---thus a \emph{control
function}. Many empirical studies build on this insight to construct
and utilize control variables \citep{olley1996dynamics,levinsohn2003estimating,ackerberg2015identification,Kline:2016aa,Card:2019aa,Abdulkadiroglu:2020aa,Bishop:2022aa}. While powerful, this approach
relies on the \emph{invertibility} of selection models. For example,
in nonparametric triangular models, invertibility requires $D$ to
be continuously distributed and the selection equation for $D$ to
be strictly monotone in a scalar unobservable variable. This type
of restriction is viewed as the most important limitation of the
CF approach \citep{Blundell:2003wi}. More generally, whether selection models are involved or not, empirical researchers encounter situations where control variables are only partially observed or identified; see below.

This paper allows the control function to be set-valued. Formally, a \emph{set-valued control function} $\eV$ is a random
closed set, constructed from observable variables, that contains the
true control variable $V$ inside it. Based on this insight, we build a general framework of sharp-bound characterization and inference that expands the scope of the CF approach to a variety of contexts. When selection processes are involved, this approach allows us to drop invertibility. This adaptation accommodates a wide range
of selection processes. Observational data are often generated through
complex selection processes. For example, $D$ can be a binary variable
generated by a generalized Roy model \citep{Eisenhauer:2015aa}. One can allow for richer heterogeneity by considering a selection
model with binary $D$ that violates the local average treatment effect
(LATE) monotonicity \citep{Imbens:1994tc} or, analogously, a model
with continuous $D$ with vector unobservables (e.g., selection with
random coefficients). Other examples are the cases where $D$ is determined
through interaction of multiple agents \citep{Tamer:2003tr,ciliberto2021market,Balat:2022aa} (e.g., due to violation of
the stable-unit treatment value assumption (SUTVA) in forming outcomes);
where $D$ and outcomes are dynamically determined over time \citep{Han:2021aa,han2023optimal}; and where $D$ results from censoring \citep{Manski:2002um} or as corner solutions. Such processes typically
violate the invertibility assumption, as the mapping from observables
to $V$ is only a correspondence. We show that the CF approach can
still be used with these selection processes to partially identify
structural (i.e., causal) parameters, such as average and quantile structural functions
for outcomes. By allowing control functions to be set-valued, we can also incorporate a wider range of examples that use controls without relying on selection models. \cite{bertanha2024causal} control for a set of true preferences using strategic reports, while \cite{auerbach2022identification} recovers a control variable from a friendship network. In other examples, controls are simply interval data (e.g., wealth, debt, biometric measures, psychological traits). Our framework can be applied to such scenarios, enabling researchers to conduct sensitivity analyses.

To our knowledge, the general form of sharp identifying restrictions under the control function assumption were unknown without requiring the full observability of the control variable. 
Our innovation is to construct a random set that contains outcome values consistent with the control function assumption by combining a set-valued control function $\eV$ with an augmented outcome equation. This is a crucial step to formulate the model's \emph{incomplete prediction},  which is the main contribution of this paper. This step enables us to utilize tools from the theory of random sets \citep{Molchanov:2017th}, such as the containment functional and Aumann expectation of the set-valued prediction, thereby establishing restrictions that lead to the sharp identified set for the structural parameters. Assuming full independence of treatments conditional on controls (Assumption \ref{as:cf_dist}), the identifying restrictions result in inequality constraints on the conditional choice probabilities. If we assume a weaker conditional mean independence (Assumption \ref{as:cf_mean}), these restrictions become conditional moment inequality restrictions. Inference methods based on such restrictions have been extensively studied \citep[see e.g.][]{Canay:2017aa,Molinari:2020un}. Hence, practitioners can apply existing methods to our identifying restrictions. To demonstrate this point, we illustrate our approach by studying individuals' HIV preventive behavior under an informational provision intervention studied by \cite{Thornton:2008tp}. This empirical illustration involves an ordered outcome, an endogenous treatment, multiple instruments, and various observed controls. We examine the effectiveness of the intervention by constructing CIs for policy-relevant parameters such as the counterfactual switching probability of individual choices.

This paper contributes to the vast literature on identification and
estimation in nonparametric models with endogenous explanatory variables. In linear models, the two-stage least
squared (TSLS) estimator can have two different interpretations: the
instrumental variable (IV) approach and the CF approach \citep{Blundell:2003wi}. The noparametric version of the IV approach is considered
in \cite{ai2003efficient, newey2003instrumental, hall2005nonparametric, chernozhukov2005iv,blundell2007semi, chen2009efficient, darolles2011nonparametric, chen2012estimation, dhaultfoeuille2015identification, torgovitsky2015identification, vuong2017counterfactual, chen2018optimal}.
Typically, this approach assumes invertibility in the outcome equation
and thus relies on a scalar unobservable, so that the IV assumption
can be utilized. The CF approach is generalized to nonparametric models by \cite{Newey:1999tu, chesher2003identification, das2003nonparametric, Blundell:2004td, Imbens:2009vs,DHaultfoeuille:2021aa,Newey:2021ab,nagasawa2024treatment}, following the adaptation to nonlinear parametric models in \cite{Newey:1987aa, RIVERS1988347, smith1986exogeneity, blundell1989estimation}. The nonparametric CF literature typically assumes a model for endogenous
explanatory variables and its invertibility in a scalar unobservable. Then, this approach generates control variables and combines it with
the CF assumption (non-nested to the IV assumption) to identify structural parameters. Although this
approach restricts selection behavior and is not applicable to discrete
treatments, its advantage is the freedom from restricting heterogeneity
directly relevant in generating causal effects. Another important
strand of the causal inference literature concerns a binary or discrete
treatment with a monotonicity assumption \citep{Imbens:1994tc, abadie2002instrumental} or equivalently
\citep{Vytlacil:2002vo} a threshold-crossing model that involves a scalar unobservable \citep{Heckman:2005aa}.

\cite{Chesher:2017vu} propose a generalized IV (GIV) framework that allows for partial identification of structural parameters in a range of complete and incomplete models (see also \cite{Chesher:2012vn, Chesher:2013vo}) under stochastic restrictions governing the relationship between the latent variables and instrumental variables.
 \cite{Chesher:2017vu} applies their analysis to single-equation IV models, extending the IV approach to partial identification. This paper proposes the CF approach to partial identification, filling the gap in the literature. Sharing the aspect of the CF literature above, we allow for arbitrary causal effect heterogeneity (e.g., multi-dimensional outcome unobservables). Overcoming the aspect of the CF literature, we accommodate discrete treatments along with heterogeneity and complexity in treatment selection (e.g., multi-dimensional selection unobservables, incompleteness of selection models).

An incomplete model can often be formulated using one of the following structures:
(a) Given covariates $X\in\mathcal X$, latent variables $U\in\mathcal U$, and parameters $\theta\in\Theta$, the model predicts a set $\eY(X,U;\theta)\subseteq \mathcal Y$ of values for the outcome  $Y\in\mathcal Y$; or (b) Given covariates $X\in\mathcal X$ and observed outcome $Y\in \mathcal Y$ and parameters $\theta\in\Theta$, the model predicts a set $\eU(X,Y;\theta)\subseteq \mathcal U$ for the latent variable  $U\in\mathcal U.$ 
One can choose how to formulate an incomplete model depending on the objects of interest, assumptions, and observability of variables. For example, many discrete choice models that specify a parametric family for $F$ can be formulated using (a) \citep{jovanovic1989observable,galichon2011set}. \cite{beresteanu2011sharp} study an extended setting, allowing for solution concepts that involve randomization, e.g., correlated equilibria.
\cite{Chesher:2017vu} employ (b) to define a set of latent variables whose selection satisfies specific stochastic restrictions. A similar approach is used in more recent work by \cite{chesher2023iv} for IV Tobit models and \cite{chesher2023identification} in the context of panel data models. 
Our approach to defining a random set closely relates to the previous work but differs in the following respects. The first step of our approach uses (b) to construct the set-valued CF as a set of unobserved control variables. The second step plugs the set-valued control function into an augmented outcome equation to define a model in structure (a). This hybrid approach allows us to extend the control function approach naturally to the current setting.

\cite{Chesher:2005tu} also considers partial identification without requiring invertibility in selection processes. He assumes that discrete endogenous variables are generated from ordered structure and focuses on local parameters. This paper in contrast focuses on global parameters, while encompassing a range of selection processes including ordered selection. Recently in a sample selection model, \cite{aradillas2024inference} considers control variables that are partially observed either as interval data or via economic theory and constructs a confidence set for partially identified parameters. The framework of the current paper is more general and focuses on a different source of set-valued CF (i.e., non-invertibility). \cite{Shaikh:2011vk, Jun:2011we, mourifie2015sharp, Mogstad:2018aa, machado2019instrumental, han2024computational} consider
partial identification in nonparametric models without requiring invertibility in selection processes; they consider either a binary treatment generated from threshold-crossing models (equivalently, under the LATE monotonicity) or a discrete treatment with similar restrictions. These models are nested within the class of models we consider, but our distinct features include the generality in selection processes and the use of the CF approach.

\section{Setup}
Let $Y\in \cY\subseteq\mathbb R^{d_Y}$ be the outcome of interest generated according to the following \emph{outcome equation}:
\begin{align}
Y=\sfo(D,X,U),	\label{eq:outcome_eq}
\end{align}
where $D\in\cD\subseteq\mathbb R^{d_D}$ is a vector of endogenous treatment variables, $X\in \cX\subseteq\mathbb R^{d_X}$ is a vector of covariates, and $U\in \mathcal U\subseteq\mathbb R^{d_U}$ is a vector of latent variables. All random variables are defined on a complete probability space $(\Omega,\mathfrak F, P)$.
The \emph{structural function} $\sfo$ determines the value of the \emph{potential outcome} $Y(d)=\sfo(d,X,U)$ that would realize when the endogenous variable is set to $ d\in \cD$.
Many policy-relevant parameters are features of the potential outcome, and hence functionals of $\sfo$. Examples are the \emph{average structural function} and the \emph{distributional structural function}: $\text{ASF}(d)\equiv E[\sfo(d,X,U)]=E[Y(d)]$ and $\text{DSF}(d)\equiv F_{\sfo(d,X,U)}=F_{Y(d)}$, respectively. Other examples are the \emph{policy-relevant structural function} and the \emph{mediated structural function}, defined later.

A vector of control variables $V\in\mathcal{V}$ ($\subseteq\mathbb R^{d_V}$ for example) is such that, the assignment of $D$ becomes independent of $U$ once we condition on $V$ and the observable covariates $X$:
\begin{align}
    D\perp U|X,V.
\end{align}
 Such variables allow the researcher to identify various causal parameters without additional parametric assumptions on $\mu$ or the distribution of unobservables.  

For this approach to work, one needs to express $V$ as a function of observable variables. Suppose $D$ is generated from a selection process and $Z$ are the vector of instrumental variables. A commonly used specification for the selection process is the additive model $D=\Pi(Z)+V$, in which one may express $V=D-\Pi(Z)$  \citep[e.g.,][]{Newey:1999tu}.
\cite{Imbens:2009vs} consider a nonseparable system, in which a single endogenous variable is modeled as $D=h(Z,\tilde V)$ of a vector of instrumental variables $Z$ and a continuously distributed \emph{scalar} latent variable $\tilde V$,
where $h$ is strictly monotonic in $\tilde V$.  They show that, under the independence of $(U,\tilde V)$ and $Z$, one may use the conditional cumulative distribution function $V\equiv F_{D|Z}(D|Z)$ as a control variable.
The key assumption is the invertibility of $h$ in the latent variable, which ensures that there is a one-to-one relationship between $\tilde V$ and $V$.\footnote{This approach can be extended to a class of simultaneous equations models that meet a certain separability condition \citep{Blundell:2013tg,Blundell:2014wo}.} 

When $D$ is binary, there are other approaches employed in the literature to use control functions without invertibility. These approaches maintain a scalar unobservable $V$ in the selection and additive separability in the outcome equation and either (i) impose parametric assumptions, such as a parametric distribution of unobservables \citep{heckman1979sample, dal2021information}; or (ii) introduce the marginal treatment effects (MTE) \citep{Heckman:2005aa} as a control function, where a parametric restriction is imposed on the MTE function to deal with discrete ``forcing'' variables \citep{brinch2017beyond}; see \cite{Kline:2019wa} for related discussions.

The previous approaches rely on either invertibility in the selection or certain parametric restrictions. The parametric restrictions are subject to misspecification and they are often combined with scalar $V$. The invertibility requirement restricts the form of the selection equation and the dimension of $V$. When $V$ is continuously distributed, the invertibility also requires $D$ to be continuous, which limits the scope of the control function assumption. Moreover, having vector $V$ is important in allowing for rich heterogeneity in the selection process. For example, a multidimensional $V$ can capture individuals' heterogeneous responses to a determinant of the treatment take-up decision by allowing for both compliers and defiers. We, therefore, aim to remove these restrictions. 

Another line of research builds control variables $V$  from observables that are not directly related to selection processes. For instance, \cite{auerbach2022identification} utilizes friendship networks to infer a control for latent student abilities. Our framework can also be employed to construct a set-valued control that relies on less stringent assumptions, thereby enabling the researcher to perform a sensitivity analysis.

\subsection{Motivating Examples}\label{ssec:examples}

A \emph{set-valued control function} $\eV$ is a random closed set that contains the true control $V$ almost surely and is a function of observable variables. We state this as a formal assumption in the next section (Assumption \ref{as:cf_set}), together with a precise measurability requirement.
 
Before proceeding, we introduce motivating examples. The examples share the following features. First, they involve control variables, conditional on which the treatment decisions can be viewed as random. Second, they do not allow the researcher to uniquely recover the control variables. Nonetheless, it is possible to construct a set-valued control function. Finally, the above features are related to the fact that the control variable $V$ may be interpreted as structural unobservables in these examples.

We start with examples involving selection processes (Examples \ref{ex:Roy1}-\ref{ex:vector_D}). When the control variable appears as a component in a selection process, the above features can be summarized in the following generalized selection equation:
\begin{align}
D=\pi(Z,X,V).	\label{eq:gen_sel_eq}
\end{align}
Note that $(D,X,Z)\mapsto V$ is, in general, a correspondence because
either $\pi(Z,X,\cdot)$ is not necessarily strictly monotonic or $V$ is not scalar. Therefore, the selection process \eqref{eq:gen_sel_eq} only restricts $V$ to the following set almost surely: $\{v:D=\pi(Z,X,v)\}\subseteq\mathbb{R}^{d_{V}}$. Motivated by this, we define the set-valued control function as follows in the following two examples:
\begin{align}
\eV(D,Z,X;\pi)\equiv\cl\{v:D=\pi(Z,X,v)\}\subseteq\mathbb{R}^{d_{V}}.	\label{eq:gen_eV}
\end{align}
We define $\eV$ as a closed set in order to utilize the theory of random sets.
In Examples \ref{ex:Roy1}-\ref{ex:vector_D}, we illustrate specific forms of \eqref{eq:gen_sel_eq} and \eqref{eq:gen_eV}.

\begin{example}[Binary and Censored Treatment Decisions]\label{ex:Roy1}
Let $D$ be a binary treatment that is determined by the selection equation
\begin{align}
	D=1\{\sfs(Z,X)\ge V\},\label{eq:roy_sel2}
\end{align}
where we normalize $V|X$ to the uniform distribution without loss of generality. The selection equation can be motivated by the generalized Roy model \citep{Eisenhauer:2015aa}. Suppose $Y=DY(1) + (1-D)Y(0)$ where $Y(d)$ follows
\begin{align}
Y(d)& =\mu(d,X)+U_{d}\quad\text{for }d=0,1.
\end{align}
We allow the unobservables $U_d$ to be treatment-specific. This makes $Y$ a function of a vector unobservable, $U\equiv(U_1,U_0)$. The treatment decision is based on the net surplus $S$ from the treatment:
\begin{align}
D  =1\{S\ge0\}\equiv1\{Y(1)-Y(0)-C\ge0\},\label{eq:roy_sel1}
\end{align}
where $C\equiv\mu_{c}(Z,X)+U_{c}$ be the cost of choosing one alternative over the other, and $Z$ is a vector of variables that shifts the cost but not the outcome.\footnote{The generalized Roy model above nests the classical Roy model where
$C$ is degenerate \citep{Heckman:1990aa} and the extended Roy model where $U_{c}$ is
degenerate \citep{Heckman:2007aa}.} We may write the surplus as $S =\sfs(Z,X)-V$, where $\sfs(Z,X)\equiv\mu(1,X)-\mu(0,X)-\mu_{c}(Z,X)$ is the observable part of the surplus, and   $V\equiv U_{c}-U_{1}+U_{0}$ is the unobserved part of the surplus. Then, we can express the treatment decision as \eqref{eq:roy_sel2}. Clearly, $V$ depends on $(U_0,U_1)$. 

Suppose we are interested in the causal effect of $D$ on $Y$. Suppose $Z$ is independent of $U$ given $(X,V)$.
Then, $(X,V)$ are valid control variables because $D$'s remaining variation is independent of $U$ conditional on them. What prevents us from applying the existing approach is that we cannot recover $V$ by inverting \eqref{eq:roy_sel2} because $D$ is binary. Nonetheless, the model restricts $V$ to the following set almost surely:
\begin{align}
\eV(D,Z,X;\sfs) & =\begin{cases}
[0,\sfs(Z,X)] & \text{if }D=1\\{}
[\sfs(Z,X),1] & \text{if }D=0,
\end{cases}\label{eq:Roy1_eV}
\end{align}
which is a set-valued analog of the control function we may condition on. 

The previous specification satisfies the LATE monotonicity, eliminating either compliers or defiers
\citep{Imbens:1994tc,Vytlacil:2002vo}. 
Next, we consider a selection model that allows richer compliance types, and thus increased heterogeneity in the population. Suppose the value of the instrument is set to $z$. Let the potential treatment be
\begin{align}
D(z) & =1\{\sfs(z,X)\ge V_{z}\}\quad\text{for }z\in\cZ,\label{eq:Roy2}
\end{align}
where $V_z$ is an unobservable specific to the value of $Z$. The observed treatment is $D=\sum_{z\in \cZ}D(z)1\{Z=z\}$. Suppose $Z$ is binary below. Given \eqref{eq:Roy2}, both compliers and defiers can have
nonzero shares:
\begin{align*}
\{D(0)=0,D(1)=1\} & =\{V_{0}>\sfs(0,X),V_{1}\le\sfs(1,X)\},\\
\{D(0)=1,D(1)=0\} & =\{V_{0}\le\sfs(0,X),V_{1}>\sfs(1,X)\}.
\end{align*}
The observed treatment $D$ is a function of $(V_0,V_1)$, satisfying
\begin{align} 
D & =1\{D(0) + (D(1)-D(0))Z\ge0\} \equiv1\{\tilde{\sfs}(Z,X)+(V_{1}-V_0)Z+V_{0}\ge0\},\label{eq:nonmono_sel}
\end{align}
where $\tilde{\sfs}(Z,X)\equiv\sfs(0,X)+Z(\sfs(1,X)-\sfs(0,X))$.\footnote{Note that $D=ZD(1) + (1-Z)D(0)=1\{Z(\pi(1,X)-V_1)+(1-Z)(\pi(0,X)-V_0)\ge0\}$.}
One may view the last expression as a \emph{random-coefficient} model, in which the individuals respond heterogeneously to interventions to $Z$ \citep{Gautier:2011va,Kline:2019wa}. Our framework allows us to proceed with mild assumptions, which may not be sufficient for point identification as in the previous work. Suppose the outcome $Y$ is generated according to \eqref{eq:outcome_eq} and $Z$ is independent of $U$ conditional on $(X,V_0,V_1)$. Then, $(X,V_0,V_1)$ are valid control variables.
By \eqref{eq:nonmono_sel}, $V\equiv(V_0,V_1)$ belongs to the following set almost surely:
\begin{align}
\eV(D,Z,X;\sfs)  =\begin{cases}
\left\{ (v_0,v_1):\tilde \sfs(Z,X)+(1-Z)v_{0}+Zv_{1}\ge0\right\}  & \text{if }D=1\\
\left\{ (v_0,v_1):\tilde \sfs(Z,X)+(1-Z)v_{0}+Zv_{1}\le 0\right\}  & \text{if }D=0.
\end{cases}\label{eq:Roy2_eV}
\end{align}

Similar to the binary case is $D$ being a censored decision as a corner solution in an agent's optimization problem. In this case, the generalized selection equation is not invertible and a corresponding set-valued control function can be constructed; see Section \ref{ssec:corner} in the Appendix for details. $\square$
 \end{example}

\begin{example}[Treatment Responses as Vectors]\label{ex:vector_D}
The next example involves a \emph{vector} of treatments generated by either (i) strategic decisions of multiple individuals \citep{Balat:2022aa} or (ii) a single agent's dynamic decisions over multiple periods \citep{Han:2021aa, han2023optimal, han2023semiparametric}. Let $D$ be the vector of binary decisions across individuals or the vector of binary treatments and previous outcomes over periods. We are interested in the effect of the entire profile $D$ on an outcome $Y$. To that end, suppose we have a vector of (individual- or time- specific) IVs, $Z$, and let
$\pi(\cdot)$ be the generalized selection function for $D$. Note that $\pi(\cdot)$ is not invertible in the corresponding unobservables due to the discreteness of $D$, as in Example \ref{ex:Roy1}. Moreover, in case (i), decisions across individuals can be generated from multiple equilibria and, in case (ii), decisions across time can involve dynamic endogeneity. These aspect further complicates the CF approach. However, we can construct appropriate control variables $V$ and corresponding $\eV(D,Z;\sfs)$ in multi-dimensional spaces. We detail these examples in Sections \ref{ssec:ex_dynamic4}-\ref{ssec:ex_strategic}.
\end{example}

The examples above constructed $\eV$ from selection processes. The next example concerns control variables that are not necessarily generated from selection models. Set-valued control functions $\eV(Z,X)$ in such settings are functions of variables other observables $(Z,X)$, where $Z$ is not necessarily an instrument. Often, it contains information on the true control $V$, and some components of $Z$ may be excluded from the outcome equation.

\begin{example}[Set-Valued Controls Without Selection]\label{ex:no_sel}
We provide three examples using the notation of our paper. 
First, control variables $V$ may simply be partially observed via interval or censored measurement. In many administrative data, information such as wealth, debt, biometric measures, and psychological traits is observed as an interval due to limitations in data collection or privacy concerns.\footnote{For example, wealth in the Health and Retirement Survey (HRS) and income in the Current Population Survey (CPS) are measured as intervals.} In this case, $\eV$ would be directly obtained from observed intervals and can be expressed as $\eV(Z,X)=[X_{L},X_{U}]$; see \cite{Manski:2002um} for a related setting.

In the context of school matching mechanisms, \cite{bertanha2024causal} estimate the causal effects of school assignment using students' local preferences as control variables $V\in \mathcal V$ (where $\mathcal V$ is the set of preference relations) to enable regression-discontinuity comparisons. The key feature of their setup is that, under capacity constraints, students have incentives to misreport their preferences. Based on students' reported partial order of preferences, they recover local preference sets $\eV(Z,X)$ (with reported preferences $Z$ and test scores $X$) that contain the true preference $V$ a.s., and subsequently characterize the bounds on the effects of school assignment.

In a social network setting, \cite{auerbach2022identification} considers a partial linear model with a nonparametric function $\lambda(\cdot)$ of social characteristics as an unknown control variable $V$, which is seldom identified to be used as a control function. Instead, he proposes to use the link function $\pi(\cdot)$ in a nonparametric link formation model that is identified from the distribution of social links. Then, under the assumption that individuals with similar link functions have similar values of control $\lambda(V)$ (Assumption 3 therein), he identifies the slope parameters. However, one may want to relax this identifying assumption and allow individuals with similar link functions to have values of $\lambda(V)$ with discrepancy bounded by a sensitivity margin. This can be achieved by constructing a set-valued control $\eV=\{(v,v'):\|\pi_v-\pi_{v'}\|_{L^2}\le \delta\}$ that contains the latent characteristics of a pair of individuals whose link functions are within a certain distance $\delta$.
Then, we can recover a set of controls, $\lambda(\eV)$, and partially identify the slope parameters. $\square$
\end{example}

\section{Model Prediction}
As a preparation for the identification analysis,
we formulate the model prediction. We show that our limited knowledge of the unobserved control variable $V$ can be formalized as an \emph{incomplete model}.

Throughout, $V:\Omega\to\mathcal V$ is a random element taking values in a Polish space $\cV$.
First, we assume that $V$ and observable covariates $X$ form control variables.
\begin{assumption}\label{as:cf_dist}
$U|D,X,V\sim U|X,V$.
\end{assumption}
By Assumption \ref{as:cf_dist}, the treatment decision is independent of $U$ once we condition on $(X,V)$. Next, we introduce \emph{random closed sets} and their \emph{measurable selections} \citep[see][]{Molchanov:2018ui}.

\begin{definition}[Random Closed Set]
A map $\eX$ from a probability space $(\Omega,\mathfrak F,P)$ to the family  $\mathcal F(\mathbb T)$ of closed subsets of a Polish space $\mathbb T$ is called a \emph{random closed set} if
\begin{align}
\eX^-(K)\equiv\{\omega\in \Omega:\eX(\omega)\cap K\ne\emptyset\}	
\end{align}
is in $\mathfrak F$ for each compact set $K\subseteq\mathbb T$. 
\end{definition}

\begin{definition}[Measurable Selections]
For any random set $\eX$, a measurable selection of $\eX$ is a random element $X$ with values in $\mathbb T$ such that $X(\omega)\in\eX(\omega)$ almost surely. We denote by $\Sel(\eX)$ the set of all selections from $\eX$. 	
\end{definition}
We assume one can construct a set-valued control function as a random closed set.
\begin{assumption}\label{as:cf_set}
	(i) There is a random closed set $\eV:\Omega\to \mathcal F(\cV)$ such that $V\in \eV$
	with probability 1; (ii) $\eV$ is a measurable function of observable variables and possibly a parameter $\sfs$.
\end{assumption}

The set-valued control function $\eV$ is a random closed-set constructed from the observables.\footnote{A singleton-valued control function in the literature is a special case of Assumption \ref{as:cf_set}. }  A leading case would be $\eV$ generated by a selection equation $D=\pi(Z,X,V)$ where $Z$ is a vector of instrumental variables excluded from $\sfo$ (e.g., Examples \ref{ex:Roy1}-\ref{ex:vector_D}).\footnote{In each example of Section \ref{ssec:examples}, observe that we use closed intervals or sets to construct random set $\eV$ that is closed.} However, $\eV$ can also be generated from other sources with corresponding observables and $\pi$ (e.g., Example \ref{ex:no_sel}). Assumption \ref{as:cf_set} is agnostic about the genesis of a set-valued control function. The set can depend on an unknown parameter $\sfs$, which can be infinite-dimensional. In some applications, $\sfs$ can be point identified from additional restrictions. We incorporate such restrictions into our identification framework below.
 We write $\eV(D,X,Z;\sfs)$ whenever it is useful to show its dependence on $(D,X,Z)$ and $\pi$. 

Let us discuss Assumptions \ref{as:cf_dist}-\ref{as:cf_set} further.
In the conventional CF approach, we use the control variable $V$ for two main purposes. First, we use $V$ to adjust for the effects of confounding factors as outlined in Assumption \ref{as:cf_dist}. Second, we condition on the subpopulation for which this assumption holds. We may use these properties simultaneously if $V$ is observable or can be recovered from other observable variables. However, in the current scenario, the second property is not available. Therefore, we use $\eV$ (recovered from other observables ensured by Assumption \ref{as:cf_set}) to condition on a ``coarser'' subpopulation. Failing to condition on $V$ can result in a loss of identifying power. Nevertheless, our framework enables the researcher to use all information available under the stated assumptions and establish sharp bounds on the parameters of interest.

 We represent $U$ as $U=Q(\eta;D,X,V)$ for a measurable function $Q:[0,1]^{d_U}\times\cD\times\cX\times\cV\to \cU\subseteq\mathbb R^{d_U}$ determined by the conditional distribution of $U$ given $(D,X,V)$ and a random vector $\eta\in\mathbb R^{d_U}$, which is independent of $(D,X,V)$ and is uniformly distributed over $[0,1]^{d_U}$. This representation holds generally. To see this, consider an example in which $U\equiv(U_0,U_1)$ is two dimensional as in Example \ref{ex:Roy1}. Let $(\eta_0,\eta_1)\sim U[0,1]^2.$  One can represent $(U_0,U_1)\sim F_{U|D,X,V}$ sequentially by letting
\begin{align}
	U_0&=Q_0(\eta;D,X,V)\equiv F_{U_0|D,X,V}^{-1}(\eta_0|D,X,V),\label{eq:rosenblatt1}\\
	U_1&=Q_1(\eta;D,X,V)\equiv F_{U_1|U_0,D,X,V}^{-1}(\eta_1|U_0,D,X,V),\label{eq:rosenblatt2}
\end{align} 
where for any cumulative distribution function $F$,  $F^{-1}(c)\equiv \inf \{u: F(u)> c\}$, which is the quantile function when $F$ is a continuous distribution.\footnote{This sequential transformation is known as the Knothe-Rosenblatt transform \citep[see, e.g.,][]{Villani:2008aa,Carlier:2010aa,Joe:2014wy}.}

By Assumption \ref{as:cf_dist}, we may drop $D$ from the right-hand side of \eqref{eq:rosenblatt1}-\eqref{eq:rosenblatt2}
\begin{align*}
	U_0&=Q_0(\eta;X,V)\equiv F_{U_0|X,V}^{-1}(\eta_0|X,V),\\
	U_1&=Q_1(\eta;X,V)\equiv F_{U_1|U_0,X,V}^{-1}(\eta_1|U_0,X,V).
\end{align*} 
This argument can be generalized to settings with any finite $d_U$.
In general, under Assumption \ref{as:cf_dist}, we may drop $D$ from $Q$'s argument and represent $U$ by
\begin{align}
	U=Q(\eta;X,V).~\label{eq:rosenblatt_rep}
\end{align}
The map $Q:[0,1]^{d_U}\times\cX\times\cV\to \cU\subseteq\mathbb R^{d_U}$ is determined by the conditional distribution of $U|X,V$, which we denote as
\begin{align*}
    F\equiv F_{U|X,V}.
\end{align*}
In \eqref{eq:rosenblatt_rep}, we may view $\eta$ as the remaining source of randomness in the potential outcome after controlling for $(X,V)$.

The observed outcome is determined by
\begin{align}
	Y=\sfo(D,X,U)=\sfo(D,X,Q(\eta;X,V)).\label{eq:correction}
\end{align}
One can view the right-hand side of \eqref{eq:correction} as an outcome equation augmented by an \emph{adjustment term}, $Q(\eta;X,V)$, which involves the control variable $V$ and a ``clean'' error term $\eta$ that is independent of $D$.\footnote{This is analogous to an additive model, in which the error term can be decomposed into a control function and an error term that is independent of the treatment.} 

Using the fact that $V$ is a measurable selection of $\eV$, we define the following random closed set:\footnote{Lemma \ref{lem:rdset} in the appendix establishes $\eY$ is a well-defined random closed set.}
\begin{align}
		\eY(\eta,D,X,\eV;\sfo,F)\equiv\cl\big\{y\in\cY:y=\sfo(D,X,Q(\eta;X,V)),V\in\Sel(\eV)\big\}.\label{eq:defY}
\end{align}
This set collects all outcome values (and their closure) compatible with the model structure for some unknown control variable $V$ taking values in the set-valued control function $\eV$. 
This formulation allows us to capture (i) the role of $V$ as a control variable entering the augmented outcome equation through $Q$ and (ii) model incompleteness due to the coarse information provided by $\eV$.
To our knowledge, summarizing the model prediction by a random set in \eqref{eq:defY} is new. 
 
Representing the model's prediction in this way has several advantages. First, $\eY$ collects all outcome values given all \emph{observable exogenous} variables $(D,X,\eV)$ and latent variables $\eta$. It represents
the prediction of an \emph{incomplete model} in the sense of \cite{jovanovic1989observable}.\footnote{\cite{jovanovic1989observable} characterizes an incomplete model by observed endogenous variables $y$, latent variables $\eta$, and a \emph{structure} $(\nu,\phi)$, where $\nu$ is the distribution of $\eta$, and $\phi$ is a relation such that $(y,\eta)\in\phi$. The observable exogenous variables are allowed to shift $\phi$. In our setting, $\phi$ corresponds to $gr(\eY)=\{(y,\eta):y\in \eY(\eta,d,x,\mathbf v ;\mu,F)\}$. The representation of $U$ in \eqref{eq:rosenblatt_rep} allows us to incorporate the structural parameter $(\mu,F)$ into the model's incomplete prediction ($\phi$ in \cite{jovanovic1989observable}), whereas the remaining randomness is captured by $\eta\sim U[0,1]^{d_U}$.}   Following the partial identification literature, we systematically obtain sharp identifying restrictions in such models in the next section. Second, $\eY$ builds on an augmented outcome equation, which often helps derive closed-form bounds; e.g., see the discussion of the next paragraph. Finally, the framework can accommodate both continuous and discrete outcomes. We provide further details in Section \ref{sec:applications}.

\begin{remark}
Assumption \ref{as:cf_dist} plays an important role in obtaining identifying restrictions for structural parameters via $\eY$. Each measurable selection of $\eY$ is represented by $Y=\mu(D,X,Q(\eta;X,V)),$ which separates the model into two components with distinct roles.  The structural function $\mu$ captures the effect of the treatment $D$, while the adjustment term $Q$ depends only on $(X,V)$ and unobserved heterogeneity $\eta$, but not on $D$.  This separation, made possible by Assumption \ref{as:cf_dist} and the representation of $U$ via $Q$, clarifies that variation in $D$ affects $Y$ solely through $\mu$ once $(X,V)$ are held fixed. In this sense, $Q$ acts as a control that isolates the structural effect of $D$. 

Since $Q$ does not depend on $D$, it also facilitates recovering structural parameters. For example, we may 
express structural quantities such as the \emph{average conditional response} $E[Y(d)|X=x,V=v]$ by integrating out $\eta$
\begin{align}
E[Y(d)|X=x,V=v]=  \int_{[0,1]^{d_U}} \mu(d,x,Q(\eta;x,v))d\eta.
\end{align}  
After characterizing the sharp identification region for $\theta$,
we use this property to obtain bounds on various structural functions of interest (see Section \ref{ssec:functionals}).
\end{remark}

\section{Identification}
Let $P_0$ be the joint distribution of the observable variables $(Y,D,X,Z)$. Let $\theta\equiv(\sfo,F,\sfs)$ collect the structural parameters.
 Let $\Theta\equiv\mathsf M\times\mathsf F\times \mathsf \Pi$ be the parameter space for $\theta$, which embodies a priori restrictions on the parameter. As discussed earlier, some models provide additional restrictions on $\sfs$.\footnote{Consider Example \ref{ex:Roy1}. Under an additional independence assumption $Z\perp V|X$, $\Pi_r(P_0)=\{\sfs\in\Pi:\sfs(z,x)=P_0(D=1|Z=z,X=x)\}$.} We let $\mathsf \Pi_r(P_0)\subset \mathsf \Pi$ be the set of selection parameters satisfying them. 

We define the sharp identification region for $\theta$ as follows.
\begin{definition}[Sharp Identification Region under Full Independence]
The \emph{sharp identification region} $\Theta_I(P_0)\subset  \mathsf M\times\mathsf F\times \mathsf \Pi_r(P_0)$  is a subset of $\Theta$ such that each of its elements $\theta=(\sfo,F,\sfs)$ satisfies the following statements: (i) For any $Y\sim P_0(\cdot|D,X,Z)$, one can represent the outcome as $Y=\sfo(D,X,U)$ for some $U$ whose conditional law $F$ satisfies Assumption \ref{as:cf_dist} for some $V:\Omega\to\cV$. (ii) The control variable $V$ is a measurable selection of a set-valued control function $\eV$ satisfying Assumption \ref{as:cf_set}.
\end{definition}
The main result (Theorem \ref{thm:identification}) of this section characterizes $\Theta_I(P_0)$ through inequality restrictions on $\theta$. For this, we introduce the \emph{containment functional} $\contf$ of the random set $\eY$. For any closed set $A\subset \cY$ and $(d,x,z)\in\cD\times\cX\times\cZ$,  let
\begin{equation}
	\contf(A|D=d,X=x,Z=z)
 \equiv \int_{[0,1]^{d_U}}1\big\{\eY(\eta,d,x,\eV(d,x,z);\sfo,F)\subseteq A\big\}d\eta.\label{eq:containment}
\end{equation}
 This functional uniquely determines the distribution of $\eY$ \citep{Molchanov:2017th}. Since $\eta$ is uniformly distributed over $[0,1]^{d_U}$, the right-hand side of \eqref{eq:containment} can be computed analytically or by simulation (see Section \ref{sec:applications}). 
 
The containment functional $\contf$ characterizes the distribution of \emph{all} measurable selections of $\eY$ in the following sense:\footnote{This equivalence holds up to an ordered coupling \citep[][Chapter 2]{Molchanov:2018ui}.}
 \begin{multline}
Y \in \eY(\eta,d,x,\eV(d,x,z);\sfo,F),~ a.s.\\
~\Leftrightarrow~P_0(Y\in A|d,x,z)\ge \contf(A|d,x,z), ~\forall A\in\cF(\cY),~(d,x,z)-a.s.\label{eq:artstein}
\end{multline} 
This inequality restriction is known as \emph{Artstein's inequality} \citep[see, e.g.,][Theorem 2.13]{Molchanov:2018ui}, which is the central device to derive sharp identifying restrictions in incomplete models. 

Under Assumption \ref{as:cf_set}(i), the observed outcome $Y$ is a measurable selection of $\eY$, and Artstein's inequality relates the distribution of $Y$ with the distribution of $\eY$. The left-hand side $P_0(Y\in A|D,X,Z)$ of the inequality can be recovered from a large sample of the observable variables $(Y,D,X,Z)$. The right-hand side $\contf(A|D,X,Z)$ can be computed from model primitives. Taken together, they provide identifying restrictions. We illustrate them through examples (see Sections \ref{ssec:ex_dynamic4} and \ref{ssec:ex_multinomial}). 

By the definition of $\eY$, Artstein's inequality is equivalent to the existence of $Y$ such that
\begin{align}
Y=\mu(D,Q(\eta,V))=\mu(D,U)~,
\end{align}
for some $(U,V)$ satisfying Assumptions \ref{as:cf_dist}-\ref{as:cf_set}, ensuring the sharpness of the restriction.

The following theorem characterizes the sharp identification region.
\begin{theorem}\label{thm:identification}
Suppose Assumptions \ref{as:cf_dist}-\ref{as:cf_set} hold. Then, the sharp identification region for the structural parameter $\theta=(\sfo,F,\sfs)$ is
\begin{align}
	\Theta_I(P_0)=\{\theta\in\Theta:P_0(Y\in A|D,X,Z)\ge \contf(A|D,X,Z),~a.s.
	~\forall A\in \cF(\cY),~\sfs\in \mathsf \Pi_r(P_0)\}.
\end{align}	
\end{theorem}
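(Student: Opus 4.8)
The plan is to prove the two set inclusions that together characterize $\Theta_I(P_0)$ by reducing the defining conditions (i)--(ii) of the sharp identification region to the single statement that the observed $Y$ is almost surely a measurable selection of the prediction set $\eY$, and then invoking Artstein's inequality \eqref{eq:artstein} to convert that selection statement into the stated containment-functional inequalities. Throughout, $\sfs\in\Pi_r(P_0)$ is carried along passively: condition (ii) already requires $\eV$ to satisfy Assumption \ref{as:cf_set}, the a priori restrictions on $\sfs$ are built into $\Pi_r(P_0)$, and $\eY$ depends on $\sfs$ only through $\eV$, so both the region and the inequality set are intersected with $\{\sfs\in\Pi_r(P_0)\}$ in the same way.

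For the forward inclusion I would fix $\theta\in\Theta_I(P_0)$ and extract the latent variables from its defining properties: condition (i) supplies $U$ with conditional law $F$ satisfying Assumption \ref{as:cf_dist}, and condition (ii) supplies a measurable selection $V\in\Sel(\eV)$. Applying the Knothe--Rosenblatt representation \eqref{eq:rosenblatt_rep}, which is legitimate precisely because Assumption \ref{as:cf_dist} lets us drop $D$ from the conditioning, I would write $U=Q(\eta;X,V)$ with $\eta\sim U[0,1]^{d_U}$ independent of $(D,X,V)$. Substituting into the outcome equation gives $Y=\sfo(D,X,Q(\eta;X,V))$ for $V\in\Sel(\eV)$, so by the definition \eqref{eq:defY} of $\eY$ we have $Y\in\eY(\eta,D,X,\eV;\sfo,F)$ almost surely. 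The ``$\Rightarrow$'' direction of \eqref{eq:artstein} then yields $P_0(Y\in A\mid D,X,Z)\ge\contf(A\mid D,X,Z)$ for every closed $A$ and $(D,X,Z)$-a.s., placing $\theta$ in the inequality set.

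For the reverse inclusion I would start from a $\theta$ satisfying the containment inequalities and run Artstein's inequality in the ``$\Leftarrow$'' direction to obtain, on a probability space supporting the coupling, a version of $Y$ distributed as $P_0(\cdot\mid D,X,Z)$ that is almost surely a selection of $\eY$. By the definition \eqref{eq:defY} of $\eY$, this means that for almost every $\omega$ there exists $v\in\eV(\omega)$ with $Y=\sfo(D,X,Q(\eta;X,v))$. The delicate point is to upgrade this pointwise existence to a single \emph{measurable} control variable: I would apply a measurable-selection theorem to the correspondence $\omega\mapsto\{v\in\eV(\omega):Y(\omega)=\sfo(D,X,Q(\eta;X,v))\}$, which is nonempty almost surely, to extract a measurable $V\in\Sel(\eV)$ realizing the outcome equation. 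Setting $U\equiv Q(\eta;X,V)$ then gives $Y=\sfo(D,X,U)$; and because $\eta$ is independent of $(D,X,V)$, the conditional law of $U$ given $(D,X,V)$ coincides with that given $(X,V)$ and equals $F$, which verifies Assumption \ref{as:cf_dist} and hence conditions (i)--(ii), placing $\theta$ in $\Theta_I(P_0)$.

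The main obstacle is this measurable-selection step in the reverse direction: Artstein's inequality guarantees only a selection $Y\in\Sel(\eY)$, and converting it into a measurable control $V$ requires verifying that the correspondence above is jointly measurable and closed-valued, so that the fundamental measurability theorem for random closed sets applies; closedness follows from the closedness of $\eV$ and continuity/measurability of $\sfo(D,X,Q(\eta;X,\cdot))$ that underlie the well-definedness of $\eY$ (the cited Lemma \ref{lem:rdset}). A secondary point to handle with care is the coupling: the equivalence in \eqref{eq:artstein} holds only up to an ordered coupling, so I would confirm that the constructed $(U,V)$ can be transported back to the original space carrying $(Y,D,X,Z)\sim P_0$ without disturbing the observed joint law, which is exactly what makes the resulting restrictions \emph{sharp} rather than merely necessary.
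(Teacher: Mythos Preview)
Your approach is essentially the same as the paper's: show that under Assumptions \ref{as:cf_dist}--\ref{as:cf_set} the observed $Y$ is a measurable selection of $\eY$, and then invoke Artstein's inequality as an ``if and only if'' to obtain the inequality characterization. The paper's own proof is considerably terser than yours: after noting $Y\in\Sel(\eY)$ it simply cites Artstein's inequality (Theorem A.1 in \cite{Molinari:2020un}) for the equivalence and concludes, without explicitly constructing $V$ in the reverse direction or discussing the coupling. Your more careful treatment of the reverse direction---applying a measurable-selection theorem to the correspondence $\omega\mapsto\{v\in\eV(\omega):Y(\omega)=\sfo(D,X,Q(\eta;X,v))\}$ and then defining $U\equiv Q(\eta;X,V)$ to verify Assumption \ref{as:cf_dist}---is a legitimate elaboration of a step the paper leaves implicit. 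One residual issue you flag but do not fully resolve: because $\eY$ is defined as a \emph{closure}, a selection $Y\in\Sel(\eY)$ need not lie in the pre-closure image, so the correspondence above could in principle be empty on a non-null set; establishing that this does not occur (or that it is irrelevant because the definition of $\Theta_I(P_0)$ is at the level of laws, not pointwise representations) would complete the argument, and is likewise not addressed in the paper's proof.
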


Artstein's inequality is introduced to the identification literature in econometrics by \cite{galichon2011set} and has been extensively used. As in other work, we use this result to convert the model's set-valued prediction into a system of inequality restrictions that do not involve the unobserved control variable $V$, making the resulting restrictions amenable to estimation.  Practitioners can use \eqref{eq:artstein} to make inference for the elements of $\Theta_I(P_0)$ or their functions.
For example, one may use inference methods for conditional moment inequalities \citep{Andrews:2013aa,Chernozhukov:2013aa} or likelihood-based inference methods \citep{Chen_2018,Kaido:2022aa}. We provide an empirical illustration utilizing a likelihood-based inference method in Section \ref{sec:empirical_illustration}.

\begin{remark}
For a given $(d,x,z)$, the number of the inequalities in \eqref{eq:artstein} is finite as long as $\cY$ is a finite set. Furthermore, it often suffices to impose a subset of inequalities to characterize $\Theta_I(P_0)$. Such a subset $\cA\subseteq\cF(\cY)$ is called the \emph{core determining class} \citep{galichon2011set}. The smallest core determining class only depends on support of $\eY(\cdot,D,X,Z;\sfo,F)$ and does not depend on $P_0$ \citep{Luo:2017ab,Ponomarev22thesis}.\footnote{\cite{Ponomarev22thesis} provides an algorithm based on the connectedness of suitable subgraphs to determine the smallest core determining class. See also \cite{Chesher:2017vu,BONTEMPS2020373}.}

If $Y$ is continuous, \eqref{eq:artstein} involves infinitely many inequalities.\footnote{While we do not pursue this here, an alternative approach to deal with continuous variables would be to characterize the sharp identification region through an optimal transport problem instead of Artstein's inequalities \citep[see][]{li2024finitesampleinferenceincomplete}.} Nevertheless, under a weaker conditional mean independence assumption, a commonly used empirical specification admits a sharp characterization with finitely many inequalities; see Section \ref{sec:mean_indep}.
\end{remark}

\subsection{Conditional Mean Restrictions} \label{sec:mean_indep}
So far, we worked with the control function assumption in the form of conditional independence assumption (Assumption \ref{as:cf_dist}). A weaker conditional mean independence assumption is also considered in the literature  \citep[see e.g.][]{Newey:1999tu,Pinkse:2000vg}. This section explores identifying restrictions that can be obtained from this assumption.

We focus on a scalar outcome $Y$.   Let $U\equiv(U_d,d\in\cD)$ and $U_D=\sum_{d\in \cD}U_d1\{D=d\}$.\footnote{For continuous $D$, we may define $U_D$ by $U_D=\int_{\cD} U_{s} d\delta_{D}(s)$, where $\delta_D$ is a Dirac measure at $D$.} Consider the following additive model:
\begin{align}
	Y=\sfo(D,X)+U_D,\label{eq:separable_mu}
\end{align}
This way, $Y$ is a function of vector $U$, making this model a special case of \eqref{eq:outcome_eq}. It nests the linear model $\mu(d,x)=\alpha d+x'\beta$ with scalar unobservable $U$ as a special case.\footnote{A similar argument can be applied to a nonadditive model $Y=\sfo(D,X,U)$ for which $U$ is a scalar and $\sfo$ is invertible with respect to $U$. We focus on the additive model only for notational simplicity.}  

Suppose the following  mean independence analog of Assumption \ref{as:cf_dist} holds.
\begin{assumption}\label{as:cf_mean}
For each $d\in\cD$, $E[|U_d|]<\infty$, and $E[U_d|D,X,V]= E[U_d|X,V], ~a.s.$
\end{assumption}

For each $d\in\cD$, let $\lambda_d(X,V)\equiv E[U_d|X,V]$ and $\eta_d\equiv U_d-E[U_d|X,V].$ Under Assumption \ref{as:cf_mean}, we may write
\begin{align}
	E[Y|D=d,X=x,V=v]=\sfo(d,x)+\lambda_d(x,v)~,
\end{align}
and $Y=\sfo(D,X)+\lambda_d(X,V)+\eta_d$. We note that $\lambda_d$ is a known function of $F$ and plays the role of the adjustment term similarly to $Q$.

We also assume that $U$ is continuously distributed, which helps us derive tractable identifying restrictions. This assumption can be dropped when $\eY$ defined below is interval-valued almost surely.
\begin{assumption}\label{as:Fuv}
$U|D,X,V$ has a strictly positive density with respect to Lebesgue measure on $\mathbb R^{d_U}$ almost surely. 
\end{assumption}

We now define the sharp identification region as follows.
\begin{definition}[Sharp Identification Region under Mean Independence]
The \emph{sharp identification region under mean independence} $\Theta_I(P_0)\subset  \mathsf M\times\mathsf F\times \mathsf \Pi_r(P_0)$  is a set such that each $\theta=(\sfo,F,\sfs)\in \Theta_I(P_0)$ satisfies the following statement: (i) For any $Y$ whose conditional mean is $E_{P_0}[Y|D,X,Z]$, one can represent the outcome as in \eqref{eq:separable_mu}, where $U$'s conditional law $F$ satisfies Assumptions \ref{as:cf_mean} and \ref{as:Fuv} for some $V:\Omega\to\cV$. (ii) The control variable $V$ is a measurable selection of a set-valued control function $\eV$ satisfying Assumption \ref{as:cf_set}.	
\end{definition}

Let $\eta\equiv(\eta_d,d\in\cD)$. Define
\begin{align}
	\eY(\eta,D,X,Z;\mu,F)\equiv\cl\big\{y\in\cY:y=\sfo(D,X)+\lambda_D(X,V)+\eta_D,V\in\Sel(\eV)\big\}.\label{eq:defeY_mean}
\end{align}
Since $Y\in \Sel \eY$ by Assumption \ref{as:cf_set}, the observed conditional mean $E_{P_0}[Y|D,X,Z]$ belongs to the set of the conditional mean of measurable selections of $\eY(\eta,D,X,Z;\mu,F)$ for some $\theta\in\Theta_I(P_0)$. To use this observation, we introduce the \emph{conditional Aumann expectation} of a random set. 

A random closed set $\eX$ is said to be \emph{integrable} if $\eX$ has at least one integrable selection.
We define the Aumann (or selection) expectation of an integrable random closed set following \cite{beresteanu2011sharp} and \cite{Molinari:2020un}. For this, we let $\Sel^1(\eX)$ denote the set of integrable selections of $\eX$.
\begin{definition}\label{def:aumann}
The \emph{Aumann expectation} of an integrable random closed set $\eX$ is given by
\begin{align}
\mathbb E[\eX]\equiv\cl\Big\{E[X]:,X\in \Sel^1(\eX)\Big\}.	
\end{align}
For each sub $\sigma$-algebra $\mathfrak B\subset \mathfrak F$, the conditional Aumann expectation of $X$ given $\mathfrak B$ is the $\mathfrak B$-measurable random closed set $\boldsymbol{R} \equiv \mathbb E(\eX|\mathfrak B)$ such that the family of $\mathfrak B$-measurable integrable selections of $\boldsymbol{R}$, denoted $\Sel^1_{\mathfrak B}(\boldsymbol{R})$, satisfies
\begin{align}
\Sel^1_{\mathfrak B}(\boldsymbol{R})\equiv\cl\Big\{E[X|\mathfrak B]:,X\in \Sel^1(\eX)\Big\},
\end{align}
where the closure in the right-hand side is taken in $L^1$.
\end{definition}

Under the maintained assumptions, the model's prediction is summarized by 
\begin{align}
E_{P_0}[Y|D,X,Z]\in \mathbb E[\eY(\eta,D,X,Z;\sfo,F)|D,X,Z],~a.s.	
\end{align}
This condition is equivalent to
\begin{align}
bE_{P_0}[Y|D,X,Z]\le s(b, \mathbb E[\eY(\eta,D,X,Z;\sfo,F)|D,X,Z])	,~b\in\{-1,1\},\label{eq:suppfunc_restriction}
\end{align}
where $s(b,K)=\sup_{k\in K}bk$ is the \emph{support function} of $K$. 
As pointed out in the literature, directly working with the conditional Aumann expectation operator can be computationally demanding.
We therefore use Assumption \ref{as:Fuv} to ensure the convexification property \citep[][Theorem A.2.]{Molinari:2020un} of $\mathbb E[\eY(\eta,D,X,Z;\sfo,F)|D,X,Z]$.
The convexification property allows us to interchange the expectation and support function operations and obtain the tractable restriction in the following theorem.

\begin{theorem}\label{thm:mean_id}
Suppose Assumptions \ref{as:cf_set}-\ref{as:Fuv} hold. Suppose $E_{P_0}[|Y|]<\infty$. 
Then,  the sharp identification region is
\begin{multline}
	\Theta_I(P_0)=\big\{\theta\in\Theta:\sfo(d,x)+\lambda_L(d,x,z)\le E_{P_0}[Y|D=d,X=x,Z=z]\\
	\le \sfo(d,x)+\lambda_U(d,x,z),~\sfs\in \mathsf \Pi_r(P_0)\big\},\label{eq:aumann}
\end{multline}	
where 
\begin{align}
	\lambda_L(d,x,z)\equiv\inf_{v\in \eV(d,x,z;\sfs)}\lambda_d(x,v),~~\lambda_U(d,x,z)\equiv\sup_{v\in \eV(d,x,z;\sfs)}\lambda_d(x,v).
\end{align}
\end{theorem}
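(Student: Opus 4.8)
The plan is to establish the two set inclusions separately, using that for a scalar outcome the random set $\eY$ in \eqref{eq:defeY_mean} is interval-valued and that $Y$ is one of its measurable selections. Throughout I note that, since $\eV$ is a function of $(D,X,Z)$, the endpoints $\lambda_L(d,x,z)$ and $\lambda_U(d,x,z)$ are deterministic given $(d,x,z)$, so the only residual randomness in $\eY$ given $(D,X,Z)=(d,x,z)$ enters through the clean error $\eta_d$.

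For necessity, suppose $\theta\in\Theta_I(P_0)$. By Assumptions \ref{as:cf_mean} and \ref{as:cf_set}, the representation $Y=\sfo(D,X)+\lambda_D(X,V)+\eta_D$ with $V\in\Sel(\eV)$ exhibits $Y$ as a selection of $\eY$, whence $E_{P_0}[Y|D,X,Z]\in\mathbb E[\eY(\eta,D,X,Z;\sfo,F)|D,X,Z]$. I would then convert this membership into the support-function inequalities \eqref{eq:suppfunc_restriction} and invoke Assumption \ref{as:Fuv} together with the convexification property \citep[][Theorem A.2]{Molinari:2020un} to interchange the conditional expectation and the support function, i.e. $s(b,\mathbb E[\eY|d,x,z])=E[s(b,\eY)|d,x,z]$ for $b\in\{-1,1\}$. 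Because $\eY=\sfo(d,x)+\eta_d+\{\lambda_d(x,v):v\in\eV(d,x,z;\sfs)\}$, its support functions are $s(1,\eY)=\sfo(d,x)+\eta_d+\lambda_U(d,x,z)$ and $s(-1,\eY)=-\sfo(d,x)-\eta_d-\lambda_L(d,x,z)$ by the very definitions of $\lambda_U$ and $\lambda_L$. Taking conditional expectations and using that $\eta_d=U_d-E[U_d|X,V]$ is conditionally mean zero given $(D,X,Z)$ — the mean-independence counterpart of the clean-error property maintained in the full-independence case, so $E[\eta_d|D,X,Z]=E[\eta_d]=0$ — the $\eta_d$ terms vanish and the two inequalities in \eqref{eq:aumann} follow.

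For sharpness, fix $\theta$ satisfying \eqref{eq:aumann} and set $m(d,x,z)\equiv E_{P_0}[Y|D=d,X=x,Z=z]-\sfo(d,x)$, which by hypothesis lies in $[\lambda_L(d,x,z),\lambda_U(d,x,z)]$. The goal is to build a control variable $V\in\Sel(\eV)$ and a conditional law $F$ consistent with the given $\lambda_d$ such that the induced $U$ reproduces the data, $E[U_d|D=d,X=x,Z=z]=m(d,x,z)$, while Assumptions \ref{as:cf_mean}, \ref{as:Fuv}, and \ref{as:cf_set} hold. Since $\lambda_L$ and $\lambda_U$ are the infimum and supremum of $\lambda_d(x,\cdot)$ over $\eV(d,x,z;\sfs)$, the set of attainable values $\int\lambda_d(x,v)\,d\rho(v)$ as $\rho$ ranges over probability measures on $\eV(d,x,z;\sfs)$ is exactly the interval $[\lambda_L(d,x,z),\lambda_U(d,x,z)]$; hence there is a conditional law of $V$ supported on $\eV(d,x,z;\sfs)$ with $E[\lambda_d(x,V)|d,x,z]=m(d,x,z)$, for instance by randomizing between near-minimizing and near-maximizing selections. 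Equipping $U$ with a continuous conditional law given $(X,V)$ with mean $\lambda_d(X,V)$ and mean-independent of $(D,Z)$ given $(X,V)$ then yields a representation matching $E_{P_0}[Y|D,X,Z]$ and satisfying all the assumptions.

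The step I expect to be the main obstacle is the measurable construction in the sharpness direction: turning the pointwise existence of the mixing measure $\rho_{d,x,z}$ into a genuine random element $V$ that is $(D,X,Z)$-measurable and lies in $\eV$ almost surely. This requires a measurable selection argument — the fundamental selection theorem for random closed sets or Filippov's implicit-function lemma — and the convexity guaranteed by Assumption \ref{as:Fuv} is precisely what makes the entire interval $[\lambda_L,\lambda_U]$ reachable by such a selection. Once $(V,F)$ is produced, checking that the constructed distribution satisfies \eqref{eq:separable_mu} and Assumptions \ref{as:cf_mean}--\ref{as:cf_set} and reproduces the observed conditional mean is routine, and the restriction $\sfs\in\mathsf \Pi_r(P_0)$ carries through both directions unchanged.
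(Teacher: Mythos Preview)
Your necessity argument is essentially identical to the paper's: both show $Y\in\Sel^1(\eY)$, pass to the Aumann expectation and support-function inequalities, and invoke the convexification property \citep[][Theorem A.2]{Molinari:2020un} to interchange expectation and support function, arriving at the explicit bounds. The paper is slightly more careful about one technical point you elide: it explicitly constructs the underlying probability space $\Omega=\mathbb R^{d_U}\times\mathbb R^{d_D}\times\mathbb R^{d_X}\times\mathbb R^{d_Z}$ and verifies that $\mathbb F$ is atomless over $\mathfrak B=\sigma(D,X,Z)$ using Assumption \ref{as:Fuv}, which is what licenses the convexification step. Your appeal to Assumption \ref{as:Fuv} is correct in spirit but skips this bookkeeping. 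On the other hand, you are more explicit than the paper about the $\eta_D$ term, which the paper silently drops in the passage from $E[\sup_{Y\in\Sel(\eY)}Y|\mathfrak B]$ to $\sfo(d,x)+\sup_{v}\lambda_d(x,v)$; your justification that $E[\eta_d|D,X,Z]=0$ is a bit loose (Assumption \ref{as:cf_mean} alone gives $E[\eta_d|D,X,V]=0$, not mean independence from $Z$), but the paper's probability-space construction effectively builds this in by treating $\eta$ as exogenous.

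Where you genuinely diverge is sharpness. The paper does not construct $(V,F)$ at all; it treats the chain $\theta\in\Theta_I(P_0)\Leftrightarrow E_{P_0}[Y|\mathfrak B]\in\mathbb E[\eY|\mathfrak B]\Leftrightarrow$ support-function inequalities $\Leftrightarrow$ explicit bounds as a sequence of equivalences, so sufficiency is implicit in the definition of the conditional Aumann expectation as the closure of $\{E[Y'|\mathfrak B]:Y'\in\Sel^1(\eY)\}$. Your explicit construction---mixing near-extremal selections of $\eV$ to hit the target $m(d,x,z)$ and then endowing $U$ with conditional law $F$ given $(X,V)$---is a valid alternative that makes the existence of a compatible structure concrete rather than appealing to abstract selection-expectation machinery. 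Your anticipated obstacle (measurable selection) is real, and your proposed tools (the fundamental selection theorem or Filippov's lemma) are the right ones. One small correction: in the construction you should set the conditional law of $U$ given $(X,V)$ equal to the \emph{given} $F$ in $\theta$, not merely ``a continuous law with mean $\lambda_d(X,V)$''; otherwise the constructed $\theta$ need not match the one you started with.
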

 
\subsection{Causal and Counterfactual Objects}\label{ssec:functionals}
Based on Theorems \ref{thm:identification} or \ref{thm:mean_id}, one can construct bounds on functionals of $\theta$. Let $W\equiv (X,V)$ and let $F_W$ be its distribution. Given  $\varphi:\mathbb R\to\mathbb R$, let \begin{align}
	\kappa(d) \equiv E[\varphi(Y(d))]=\int\int\varphi(\sfo(d,x,Q(\eta;w)))d\eta dF_W(w).\label{eq:kappa}
\end{align}
The average and distributional structural functions are special cases of $\kappa$.

For the average structural function $\text{ASF}(d)\equiv E[\sfo(d,X,U)]=E[Y(d)]$ considered by \cite{Blundell:2003wi}, we may set $ \varphi(Y(d))=Y(d)$, which yields
\begin{align}
	\text{ASF}(d)\equiv\int \int\sfo(d,x,u)dF(u|w) dF_W(w)
	=\int \int\sfo(d,x,Q(\eta;w))d\eta dF_W(w).
\end{align}
The average treatment effect (ATE) is then $\text{ATE}(d,d')=\text{ASF}(d)-\text{ASF}(d')$.
For the distributional structural function \citep{Chernozhukov:2020aa}, we may set $\varphi(Y(d))=1\{Y(d)\le y\}$, which gives
\begin{multline}
	\text{DSF}(y,d)\equiv \int\int 1\{\sfo(d,x,u)\le y\}dF_U(u|w)dF_W(w)\\
=\int \int 1\{\mu(d,x,Q(\eta;w))\le y\}d\eta dF_W(w).
\end{multline}
The \emph{quantile structural function} (QSF), the $\tau$-th quantile of $Y(d)$, can be obtained using $\text{QSF}(d)\equiv\text{DSF}^{-1}(\tau,d)$ \citep{Imbens:2002vs}.

 The following proposition characterizes the identification region for $\kappa$.
\begin{theorem}\label{thm:functional}
Suppose the conditions of Theorem \ref{thm:identification} or \ref{thm:mean_id} hold. Suppose $\varphi$ is bounded, and the underlying probability space is non-atomic. Then, the sharp identification region for $\kappa$ is
\begin{align}
	\mathfrak K_I(d)=\bigcup_{\theta\in\Theta_I(P_0)}[\underline{\kappa}(d;\theta), \overline{\kappa}(d;\theta)],
\end{align}
where
\begin{align}
\overline{\kappa}(d;\theta)&\equiv	E[\sup_{v\in \eV(D,X,Z;\sfs)}\int\varphi(\sfo(d,X,Q(\eta;X,v)))d\eta],\\
\underline{\kappa}(d;\theta)&\equiv E[\inf_{v\in \eV(D,X,Z;\sfs)}\int\varphi(\sfo(d,X,Q(\eta;X,v)))d\eta],
\end{align}
and the expectation above is taken with respect to the distribution of $(D,X,Z)$.
\end{theorem}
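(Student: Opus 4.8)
The plan is to recognize $\kappa(d)$ as the expectation of a selection of a \emph{real-valued} random closed set and then to read off the interval bounds with the Aumann-expectation machinery already used for Theorem~\ref{thm:mean_id}. Write $h(d;x,v)\equiv\int_{[0,1]^{d_U}}\varphi(\sfo(d,x,Q(\eta;x,v)))\,d\eta$, so that $\kappa(d)=E[h(d;X,V)]$ and, by Assumption~\ref{as:cf_dist} and the representation \eqref{eq:rosenblatt_rep}, $h(d;X,V)=E[\varphi(Y(d))\mid X,V]$. Fixing $\theta=(\sfo,F,\sfs)\in\Theta_I(P_0)$, I would define the real random closed set
\[
\boldsymbol G(d)\equiv\cl\{\,h(d;X,v):v\in\eV(D,X,Z;\sfs)\,\}\subseteq\mathbb R .
\]
Because $V\in\Sel(\eV)$ by Assumption~\ref{as:cf_set}, the random variable $h(d;X,V)$ is a measurable selection of $\boldsymbol G(d)$, and boundedness of $\varphi$ makes every selection integrable, so the conditional Aumann expectation $\mathbb E[\boldsymbol G(d)\mid D,X,Z]$ is well defined.

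The outer (validity) inclusion is the easy direction. Conditioning on $(D,X,Z)$ and using $V\in\eV(D,X,Z;\sfs)$ a.s. gives $\inf_{v\in\eV}h(d;X,v)\le E[h(d;X,V)\mid D,X,Z]\le\sup_{v\in\eV}h(d;X,v)$ a.s.; integrating over $(D,X,Z)$ yields $\kappa(d)\in[\underline\kappa(d;\theta),\overline\kappa(d;\theta)]$. Here a measurable maximum theorem is needed to ensure that $v\mapsto h(d;X,v)$ attains measurable suprema and infima over the random set $\eV$, so that $\underline\kappa,\overline\kappa$ are themselves measurable and finite. Taking the union over $\theta\in\Theta_I(P_0)$ then shows $\mathfrak K_I(d)\subseteq\bigcup_\theta[\underline\kappa(d;\theta),\overline\kappa(d;\theta)]$.

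For the inner (sharpness) inclusion I would fix $\theta$ and a target $t\in[\underline\kappa(d;\theta),\overline\kappa(d;\theta)]$ and construct a selection attaining $\kappa(d)=t$. The endpoints are approached by a measurable selection of (near-)maximizers and minimizers of $h(d;X,\cdot)$ over $\eV$ (again the measurable maximum theorem), and intermediate values are produced by partitioning $\Omega$ into two events of suitably chosen probability and using the maximizing selection on one and the minimizing selection on the other. Non-atomicity of $(\Omega,\mathfrak F,P)$ is exactly what lets the splitting probability sweep continuously, so that the conditional Aumann expectation convexifies to the full segment (the same convexification property invoked for Theorem~\ref{thm:mean_id}); hence $\mathbb E[\boldsymbol G(d)]=[\underline\kappa(d;\theta),\overline\kappa(d;\theta)]$.

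The step I expect to be the main obstacle is verifying that the extremizing selection $V$ constructed above remains compatible with $\theta\in\Theta_I(P_0)$, i.e.\ that it can be completed to a full structure reproducing $P_0(Y\mid D,X,Z)$ (resp.\ $E_{P_0}[Y\mid D,X,Z]$ under mean independence). This is delicate because $\kappa(d)$ is a counterfactual average of $Y(d)$ while the data constrain only the observed outcome; naively, on the event $\{D=d\}$ one might fear the contribution $E[h(d;X,V)\mid D=d,X,Z]$ is pinned to the observed $E[\varphi(Y)\mid D=d,X,Z]$, which would make $\overline\kappa$ too large. The resolution I would pursue is that Assumptions~\ref{as:cf_dist}/\ref{as:cf_mean} restrict $U$ only \emph{relative to $D$} and impose nothing on the dependence of the residual noise $\eta$ on $Z$; this leftover $\eta$–$Z$ coupling freedom decouples the $\eta$-integrated quantity $h(d;X,V)$ from the observed conditional law. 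Concretely, I would invoke the sharpness of Theorem~\ref{thm:identification} (resp.\ Theorem~\ref{thm:mean_id}): since $Y\in\Sel(\eY)$ is guaranteed by Artstein's inequality \eqref{eq:artstein}, the admissible selection $V$ driving the counterfactual can be paired, through an ordered coupling of $(\eta,Z)$, with an observed outcome of conditional law $P_0(Y\mid D,X,Z)$, so that no restriction beyond $V\in\eV$ is imposed on the extremization. Granting this reconciliation, the union over $\theta\in\Theta_I(P_0)$ of the segments $[\underline\kappa(d;\theta),\overline\kappa(d;\theta)]$ is exactly $\mathfrak K_I(d)$.
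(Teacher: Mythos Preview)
Your approach coincides with the paper's: define the real-valued random set of values $h(d;X,v)$ over $v\in\eV$ (the paper calls this $\eK(d;\theta)$, using $\eW=\{X\}\times\eV$ in place of your $\boldsymbol G(d)$), recognize $\mathfrak K_I(d;\theta)$ as its Aumann expectation, and apply the convexification theorem under non-atomicity to obtain the closed interval with endpoints $\underline\kappa(d;\theta),\overline\kappa(d;\theta)$ via the support-function identity. The paper's proof is considerably briefer than yours on the sharpness side: it simply \emph{asserts} that ``the sharp identification region for $\kappa(d)$ is $\mathfrak K_I(d)=\bigcup_{\theta\in\Theta_I(P_0)}\mathfrak K_I(d;\theta)$'' where $\mathfrak K_I(d;\theta)$ ranges over \emph{all} selections $W\in\Sel(\eW)$, and does not separately verify that each extremizing selection can be completed to a structure reproducing $P_0(\cdot\mid D,X,Z)$. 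In other words, the ``main obstacle'' you flag is not addressed in the paper's proof; the paper treats the free choice of $V\in\Sel(\eV)$, for fixed $\theta\in\Theta_I(P_0)$, as part of the model's incompleteness and hence as already accounted for by Theorems~\ref{thm:identification}/\ref{thm:mean_id}. Your proposed resolution via the ordered coupling behind Artstein's inequality is in the right spirit, but note that the paper does not spell it out either.
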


The identification region for $\kappa$ is expressed as a union of intervals. Practically, one may only be interested in the upper and lower endpoints of $\mathfrak K_I(d)$. They are given by the following corollary.
\begin{corollary}
Suppose the conditions of Theorem \ref{thm:functional} hold. Then, the tight upper and lower bounds of $\mathfrak K_I(d)$ are   \begin{align}
\overline{\kappa}(d)&\equiv\sup_{\theta\in \Theta_I(P)}E[\sup_{v\in \eV(D,X,Z;\sfs)}\int\varphi(\sfo(d,X,Q(\eta;X,v)))d\eta],\label{eq:functional_bounds1}\\
\underline{\kappa}(d)&\equiv	\inf_{\theta\in \Theta_I(P)}E[\inf_{v\in \eV(D,X,Z;\sfs)}\int\varphi(\sfo(d,X,Q(\eta;X,v)))d\eta].\label{eq:functional_bounds2}
\end{align}
If $F_W$ is point identified, the tight upper and lower bounds are 
\begin{align}
\overline{\kappa}(d)&\equiv\sup_{\theta\in\Theta_I(P)}E_{\eta,W}[\varphi(\sfo(d,X,Q(\eta;W)))]\label{eq:functional_bounds3}\\
\underline{\kappa}(d)&\equiv\inf_{\theta\in\Theta_I(P)}E_{\eta,W}[\varphi(\sfo(d,X,Q(\eta;W)))].\label{eq:functional_bounds4}
\end{align}
\end{corollary}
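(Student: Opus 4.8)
The plan is to obtain the result as a direct consequence of Theorem \ref{thm:functional}, which writes the sharp identification region as the union of intervals $\mathfrak K_I(d)=\bigcup_{\theta\in\Theta_I(P_0)}[\underline{\kappa}(d;\theta),\overline{\kappa}(d;\theta)]$. By definition the tight upper and lower bounds are $\sup\mathfrak K_I(d)$ and $\inf\mathfrak K_I(d)$, so the entire task reduces to computing the supremum and infimum of a union of intervals.

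First I would record the elementary identity that, for any family of intervals $\{[a_\theta,b_\theta]\}_{\theta}$,
\begin{align}
\sup\bigcup_{\theta}[a_\theta,b_\theta]&=\sup_\theta b_\theta, & \inf\bigcup_{\theta}[a_\theta,b_\theta]&=\inf_\theta a_\theta.\nonumber
\end{align}
The upper direction holds because each point of the union is dominated by its own right endpoint $b_\theta\le\sup_\theta b_\theta$; tightness holds because each $b_\theta$ itself lies in the union, ruling out any smaller upper bound. The infimum claim is symmetric. Taking $a_\theta=\underline{\kappa}(d;\theta)$ and $b_\theta=\overline{\kappa}(d;\theta)$ and inserting the closed-form endpoint expressions from Theorem \ref{thm:functional} produces \eqref{eq:functional_bounds1}--\eqref{eq:functional_bounds2}.

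For the point-identified case, the key observation is that $\kappa(d)$ in \eqref{eq:kappa} depends on the control variable only through the joint law $F_W$ of $W=(X,V)$, being an expectation of $\varphi(\sfo(d,X,Q(\eta;W)))$ against $F_W$. The interval $[\underline{\kappa}(d;\theta),\overline{\kappa}(d;\theta)]$ in Theorem \ref{thm:functional} therefore encodes exactly the slack coming from the unknown $F_W$: its endpoints are attained by selecting $V\in\Sel(\eV)$ pointwise to extremize the integrand, as reflected in the $\inf_{v\in\eV}$ and $\sup_{v\in\eV}$ inside the expectation. Once $F_W$ is point identified this slack vanishes, since every selection consistent with the known marginal returns the same value $\kappa(d;\theta)=E_{\eta,W}[\varphi(\sfo(d,X,Q(\eta;W)))]$. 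Hence the region collapses to the image $\{\kappa(d;\theta):\theta\in\Theta_I(P_0)\}$, and the same sup/inf computation gives \eqref{eq:functional_bounds3}--\eqref{eq:functional_bounds4}.

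The first part is mechanical, so the only place demanding genuine care is the point-identified reduction: I must confirm that pinning down the marginal law of $V$ truly forces $\kappa(d)$ to be single-valued at each $\theta$, so that the two extreme selections underlying $\underline{\kappa}$ and $\overline{\kappa}$ can no longer both be admissible. The clean reason is that $\kappa(d)$ is an expectation determined by the distribution $F_W$ and not by any particular coupling of $V$ with the remaining variables; the work is in verifying that this is precisely the degree of freedom that the non-atomicity assumption of Theorem \ref{thm:functional} turns into the full interval, so that removing it leaves only the singleton $\{\kappa(d;\theta)\}$.
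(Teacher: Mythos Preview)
Your proposal is correct and follows essentially the same approach as the paper. The paper does not supply a formal proof of this corollary; the explanatory paragraph immediately following it makes exactly your two observations: the first part is immediate from the union-of-intervals representation in Theorem~\ref{thm:functional}, and for the second part the paper simply asserts that point identification of $F_W$ forces $\overline{\kappa}(d;\theta)=\underline{\kappa}(d;\theta)$ for each $\theta$, which you have justified in slightly more detail.
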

Theorem \ref{thm:functional} does not presume point identification of $F_W$ because $V$ is unobservable, which leads to general bounds in \eqref{eq:functional_bounds1}-\eqref{eq:functional_bounds2}.
In some examples, $F_W$ is point identified even if $V$ itself is not uniquely recovered.\footnote{In Example \ref{ex:Roy1}, the distribution of $V$ is normalized to $U[0,1]$.} If so, for each $\theta\in\Theta_I(P_0)$,
$\overline{\kappa}(d;\theta)=\underline{\kappa}(d;\theta).$ This allows us to simplify the bounds as in \eqref{eq:functional_bounds3}-\eqref{eq:functional_bounds4}.

In addition to the structural parameter \eqref{eq:kappa}, one can consider a policy that only changes the selection behavior. Suppose a policy sets $Z$ (e.g., a tuition subsidy) to $z$, and the treatment selection under this policy is $D(z)=\sfs(z,X,V)$. The \emph{policy-relevant structural function (PRSF)} would be
\begin{align}
    \kappa(z)\equiv E[\varphi(Y(D(z))]=\int\int\varphi(\sfo(\sfs(z,w),x,Q(\eta;w)))d\eta dF_W(w).
\end{align}
The PRSF is related to the policy-relevant treatment effect (PRTE) and marginal PRTE introduced in \cite{Heckman:2005aa} and \cite{carneiro2010evaluating}.

One can consider another related structural function. Suppose $D=(D_1,D_2)$, and let $Y(d_1,d_2)$ denote the counterfactual outcome given $(d_1,d_2)$ and $D_{2}(d_1)$ denote the counterfactual treatment of $D_2$ given $d_1$. Then the \emph{mediated structural function (MSF)} would be
\begin{align}
    \kappa(d_1,d_1')\equiv E[\varphi(Y(d_1,D_{2}(d_1')))]=\int\int\varphi(\sfo(d_1,\sfs_2(d_1',z,w),x,Q(\eta;w)))d\eta dF_{Z,W}(z,w),
\end{align}
where we allow $d_1\neq d_1'$. The MSF can be used to define the direct causal effect of one treatment and the indirect causal effect mediated by another treatment. This scenario is relevant in Example \ref{ex:vector_D} on strategic interaction (e.g., a player's decision being mediated by the opponent's decision) and on dynamic treatment effects (e.g., a previous treatment being mediated by the previous outcome; \citealp{han2023semiparametric}).
One can derive bounds on these objects in a similar manner.

\section{Applications of the Identification Results}\label{sec:applications}
We illustrate the use of Theorems \ref{thm:identification} and \ref{thm:mean_id} through examples.\footnote{In the illustrations, we pair continuous outcome variables with the generalized Roy model and strategic treatment decisions. We pair discrete outcomes with other examples. These choices are arbitrary.
Theorems \ref{thm:identification} and \ref{thm:mean_id} allow the researcher to combine various outcome variable types, selection models, and other sources of controls.} We present examples with various selection processes (Sections \ref{ssec:roy}-\ref{ssec:ex_strategic}) and an example with an incomplete control (Section \ref{ssec:ex_school}). Appendix \ref{sec:additional_examples} provides further examples.

\subsection{Generalized Roy Model with a Continuous Outcome}\label{ssec:roy}
We revisit Example \ref{ex:Roy1}. Let $U\equiv (U_{1},U_{0})$, and recall that 
\begin{align*}
	D=1\{\sfs(Z,X)\ge V\};
\end{align*}
hence $U$'s conditional mean independence from $D$ holds as long as $U$ is mean independent of the instrument $Z$. Suppose $Y$ satisfies \eqref{eq:separable_mu}.
Let  $\lambda_d(X,V)\equiv E[U_d|X,V]$ for $d\in\cD$, and let 
\begin{align}
	\lambda_L(d,x,z)\equiv\inf_{v\in \eV(d,x,z;\sfs)}\lambda_d(x,v),~~\lambda_U(d,x,z)\equiv\sup_{v\in \eV(d,x,z;\sfs)}\lambda_d(x,v).
\end{align}
The model's prediction is
\begin{align}
	\eY(\eta,D,X,\eV;\sfo,F)\equiv\cl\{y\in\cY:y=\sfo(D,X)+\lambda_D(X,V)+\eta_D,V\in\Sel(\eV)\}.\label{eq:Roy2eY}
\end{align}
By Theorem \ref{thm:mean_id}, we obtain the following inequalities:
\begin{align}
	 E_{P_0}[Y|D=d,X=x,Z=z]&\le \sfo(d,x)+\lambda_U(d,x,z)\label{eq:ex5_1}\\
	 E_{P_0}[Y|D=d,X=x,Z=z]&\ge \sfo(d,x)+\lambda_L(d,x,z).\label{eq:ex5_2}
\end{align}	
Rearranging them and taking their intersections across $z$ give the following result.
\begin{corollary}\label{cor:ex1}
Suppose $E_{P_0}[|Y|]<\infty$. Suppose  $U_0,U_1|X,Z$ have a density with respect to Lebesgue measure, and $E[U_d|Z,X,V]=E[U_d|X,V],d=0,1$. 
Then, $\Theta_I(P_0)$ is the set of parameter values $\theta \equiv (\mu,F,\pi)$ such that, for almost all $(d,x,z)$,
\begin{multline}
 \sup_{z\in\cZ}\Big\{E_{P_0}[Y|D=d,X=x,Z=z]-\lambda_U(d,x,z)\Big\}	\\
 \le \sfo(d,x)\le \\
 \inf_{z\in\cZ}\Big\{E_{P_0}[Y|D=d,X=x,Z=z]-\lambda_L(d,x,z)\Big\}\label{eq:Roy1_intersection_bounds}.
\end{multline}
\end{corollary}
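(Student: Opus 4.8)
The plan is to read the corollary off Theorem~\ref{thm:mean_id} directly: that theorem already produces a sharp two-sided bound on $\mu(d,x)$ for each $(d,x,z)$, so the only work is to verify that its hypotheses hold under the weaker primitive conditions imposed here and then to aggregate the instrument-specific bounds. Concretely, I would first confirm that the present setting is an instance of the additive model \eqref{eq:separable_mu} with the set-valued control function of \eqref{eq:Roy1_eV}, apply Theorem~\ref{thm:mean_id} to obtain the pair \eqref{eq:ex5_1}-\eqref{eq:ex5_2}, and finally rearrange each inequality and intersect over $z\in\cZ$.

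The verification of the hypotheses proceeds assumption by assumption. Assumption~\ref{as:cf_set} holds with $\eV(d,x,z;\pi)$ the closed interval in \eqref{eq:Roy1_eV}, which is a measurable function of $(D,X,Z)$. For Assumption~\ref{as:cf_mean}, I would derive mean independence from $D$ out of the assumed mean independence from $Z$: since $D=1\{\pi(Z,X)\ge V\}$ is a measurable function of $(Z,X,V)$, we have $\sigma(X,V)\subseteq\sigma(D,X,V)\subseteq\sigma(Z,X,V)$, and the tower property applied to $E[U_d\mid Z,X,V]=E[U_d\mid X,V]$ yields $E[U_d\mid D,X,V]=E[U_d\mid X,V]$. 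For Assumption~\ref{as:Fuv}, I would argue it may be dropped here: because $\eV$ is an interval and the density of $U_0,U_1\mid X,Z$ makes $v\mapsto\lambda_d(x,v)$ continuous, the image $\{\lambda_d(x,v):v\in\eV(d,x,z;\pi)\}=[\lambda_L(d,x,z),\lambda_U(d,x,z)]$ is an interval, so the prediction set $\eY$ in \eqref{eq:Roy2eY} is interval-valued almost surely; by the remark following Assumption~\ref{as:Fuv}, the positive-density requirement is then superfluous, since the conditional Aumann expectation of an already-convex set needs no convexification.

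With the hypotheses in place, Theorem~\ref{thm:mean_id} delivers \eqref{eq:ex5_1}-\eqref{eq:ex5_2}. Rearranging each to isolate the $z$-free quantity $\mu(d,x)$ gives, for each admissible $z$, the lower bound $E_{P_0}[Y\mid D=d,X=x,Z=z]-\lambda_U(d,x,z)$ and the upper bound $E_{P_0}[Y\mid D=d,X=x,Z=z]-\lambda_L(d,x,z)$. Since $\mu(d,x)$ cannot depend on $z$, requiring these for every $z$ is equivalent to requiring the single most-binding pair, i.e.\ the essential supremum of the lower bounds and the essential infimum of the upper bounds; this is precisely \eqref{eq:Roy1_intersection_bounds}. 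Because isolating $\mu(d,x)$ and intersecting over $z$ are equivalence-preserving, the resulting parameter set coincides with the $\Theta_I(P_0)$ characterized by Theorem~\ref{thm:mean_id}, so sharpness is inherited and no separate sharpness argument is needed.

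The principal obstacle is not the algebra but the step licensing the removal of Assumption~\ref{as:Fuv}: one must be sure that the density of $U_0,U_1\mid X,Z$ (the primitive actually assumed) genuinely forces $\eY$ to be interval-valued, which rests on the continuity of $\lambda_d(x,\cdot)$ and on $\eV$ being a connected interval. Should $\lambda_d(x,\cdot)$ fail to map $\eV$ onto an interval, the conditional Aumann expectation could be non-convex and the clean two-sided form \eqref{eq:Roy1_intersection_bounds} would break down; confirming connectedness of the image is therefore the crux. The tower-property reduction for Assumption~\ref{as:cf_mean} and the intersection over $z$ are, by contrast, routine.
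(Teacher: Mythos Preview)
Your approach matches the paper's: verify the hypotheses of Theorem~\ref{thm:mean_id}, apply it to obtain \eqref{eq:ex5_1}--\eqref{eq:ex5_2}, rearrange to isolate $\mu(d,x)$, and take the supremum/infimum over $z\in\cZ$. The paper's own proof is terser---it simply asserts that Assumptions~\ref{as:cf_mean}--\ref{as:Fuv} ``hold by hypothesis'' without your tower-property derivation or the interval-valuedness detour---so the obstacle you flag concerning Assumption~\ref{as:Fuv} does not arise in the paper's argument at all.
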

The identifying restrictions \eqref{eq:Roy1_intersection_bounds} take the form of intersection bounds on $\mu$. For each $z$, $E_{P_0}[Y|D=d,X=x,Z=z]-\lambda_U(d,x,z)$ defines a lower bound on $\mu(d,x)$. Since $z$ is excluded
from $\mu$, we can intersect the lower bounds across all values of $z$.  The upper bound is formed similarly. 

It is worth noting that  \eqref{eq:Roy1_intersection_bounds} restricts the parameter vector $\theta \equiv(\sfo,F,\sfs)$ jointly because $\lambda_L,\lambda_U$ are functions of $(F,\sfs)$. Therefore, they are also useful for bounding $(F,\sfs)$. Furthermore, if $Z\perp V|X$, $\sfs$ is point identified as the propensity score $\pi(z,x)=P_0(D=1|Z=z,X=x)$. Hence, in this case, \eqref{eq:Roy1_intersection_bounds} gives joint restrictions on $(\sfo,F)$. 

\begin{remark}
The terms $\lambda_U,\lambda_L$ can be seen as \emph{adjustment terms} to account for the effects of $V$. To see this, suppose $\eV$ is a singleton $\{V(D,X,Z;\sfs)\}$ (e.g., because $D=\pi(Z,X)+V$). Then,
\begin{align}
\lambda_L(d,x,z)=\lambda_U(d,x,z)=\lambda_d(x,z)=
E[U_d|X=x,V=v],
\end{align}
In this case,  \eqref{eq:ex5_1}-\eqref{eq:ex5_2} reduce to 
\begin{align}
E[Y|D=d,X=x,Z=z]=\sfo(d,x)+E[U_d|X=x,V=v].
\end{align}
Hence, it justifies regressing $Y$ on $(D,X)$ with an additive correction term \citep{Newey:1999tu}. This argument works only when $\eV$ is singleton-valued. In the general setting with a set-valued control function, one can work with the intersection bounds in \eqref{eq:Roy1_intersection_bounds}.
\end{remark}

\subsection{Dynamic Treatment Effects}\label{ssec:ex_dynamic4}
We consider a model of dynamic treatment decisions with imperfect compliance 
\citep{Robins:1997aa,Han:2021aa}. 
In the initial period, binary treatment $D_1$ (e.g., a medical treatment) and binary outcome $Y_1$ (e.g., the presence of symptoms)  are generated according to
	\begin{align}
D_{1} & =1\{\sfs_{1}(Z_{1},X)\ge V_{1}\},\label{eq:ex_dynamic1} \\
Y_{1} & =1\{\mu_{1}(D_{1},X)\ge U_{1}\}.\label{eq:ex_dynamic2}
\end{align}
In the next period,
the observed treatment status is determined based on the initial treatment and outcome:
\begin{align}
D_{2} & =1\{\sfs_{2}(Y_{1},D_{1},Z_{2},X)\ge V_{2}\}.\label{eq:ex_dynamic3}
\end{align}
Finally, the outcome in period 2 is determined by
\begin{align}
Y_{2} & =1\{\mu_{2}(Y_{1},D_{1},D_{2},X)\ge U_{2}\}.\label{eq:ex_dynamic4}
\end{align}
For each $t$, $U_{t}$ and $V_{t}$ are normalized to $U[0,1]$ conditional
on $X=x$. 

Consider the effect of the initial outcome and treatment history $D=(Y_1,D_1,D_2)$ on $Y_2$. One may be concerned about endogeneity because $U_2$ may depend on  $(U_1,V_1,V_2)$. For example, $U_1$ and $U_2$ may share a time invariant component. Another possibility is that $U_2$ may be related to $(V_1,V_2)$ through the agent's dynamic treatment take-up decisions.

Below, we let $Y\equiv Y_2$, $U\equiv U_2$ and let $V\equiv(U_1,V_1,V_2)$ be unobserved control variables; also let $Z\equiv(Z_1,Z_2)$, and let $\sfs\equiv(\mu_1(\cdot),\sfs_1(\cdot),\sfs_2(\cdot))$. 
 Inspecting the system of selection equations, 
the assignment of $D=(Y_1,D_1,D_2)$ is independent of $U_2$ conditional on $(X,V)$ as long as the instrumental variables $Z$ are independent of $U_2$.

For notational simplicity, we rewrite \eqref{eq:ex_dynamic4} as
\begin{align}
	Y=1\{\sfo(D,X)\ge U\},
\end{align}
and derive the model prediction $\eY$ as follows.
First, let $U=Q(\eta|X,V)=F^{-1}(\eta|X,V)$.	
Then,
\begin{align}
	Y&=1\{\sfo(D,X)\ge Q(\eta|X,V)\}\notag\\
	&=1\{F(\sfo(D,X)|X,V)\ge\eta\}\notag\\
	&=1\{H(D,X,V)\ge \eta\},\label{eq:dynamic_augout}
\end{align}
 where $H(d,x,v)\equiv F(\sfo(d,x)|x,v)$ is the average response conditional on $(x,v)$. 

Next, the dynamic selection and outcome equations allow us to restrict $V$ to the following set:
\begin{align}
	\eV(D,Z,X;\pi)=\eV_{U_1}(D,X;\sfo_1)\times \eV_{1}(D,Z_1,X;\sfs_1)\times \eV_{2}(D,Z_2,X;\sfs_2),~\label{eq:ex_dynamic_eV}
\end{align}
where 
\begin{align*}
	&\eV_{U_1}(D,X;\sfo_1)\equiv\begin{cases}
		[\mu_1(D_1,X),1] & \text{if }  Y_1=0\\
		[0,\mu_1(D_1,X)] & \text{if }  Y_1=1,
	\end{cases}~~
	\eV_{1}(D,Z_1,X;\sfs_1)\equiv\begin{cases}
		[\sfs_1(Z_1,X),1] & \text{if }  D_1=0\\
		[0,\sfs_1(Z_1,X)] & \text{if }  D_1=1,
	\end{cases}\\
	&\eV_{2}(D,Z_2,X;\sfs_2)\equiv\begin{cases}
		[\sfs_2(Y_1,D_1,Z_2,X),1] & \text{if }  D_2=0\\
		[0,\sfs_2(Y_2,D_2,Z_2,X)] & \text{if }  D_2=1.
	\end{cases}
\end{align*}
 By the augmented outcome equation \eqref{eq:defY} and \eqref{eq:dynamic_augout},
we obtain the following model prediction:
 \begin{align}
 \eY(\eta,D,X, \eV;\sfo,F)=\begin{cases}
 	\{0\} & \eta> \sup_{V\in\Sel(\eV)}H(D,X,V)\\
 	\{0,1\} & \inf_{V\in\Sel(\eV)}H(D,X,V)<\eta\le \sup_{V\in\Sel(\eV)}H(D,X,V)\\
 	\{1\} &\eta \le  \inf_{V\in\Sel(\eV)}H(D,X,V).
 \end{cases}\label{eq:eYrep}	
 \end{align}
This expression exhibits an  \emph{incomplete threshold-crossing structure} shown in Figure \ref{fig:inc_cf1}.\footnote{The incomplete threshold-crossing structure also appears in semiparametric binary choice models with interval-valued covariates \citep{Manski:2002um}. \citeposs{Manski:2002um} model is considerably different from ours; they consider $Y=1\{W'\theta+\delta X^{*}+\epsilon>0\}$, where $W$ is exogenous, $X^{*}$ is observed as an interval (i.e. $X^{*}\in [X_L,X_U]$), $\delta>0$ and $\epsilon$ satisfies a quantile independence condition. See also \cite{Molinari:2020un} (Section 3.1.1) for an extensive discussion of their model.}
If $\eta$ is below the lower threshold, the model predicts $\eY=\{1\}$, whereas $\eY=\{0\}$ if $\eta$ is above the upper threshold. The model predicts $\eY=\{0,1\}$ if $\eta$ is between the two thresholds.
\begin{figure}[htbp]
\begin{center}
\begin{tikzpicture}[scale=0.8, domain=0:15,>=latex]
	\draw[->] (0,0) -- (0,7) node[left] {$\eta$};
	\draw[-] (0,0) -- (6,0);
	\draw[dashed] (0,2.4) -- (6,2.4) ;
	\draw[dashed] (0,4) -- (6,4) ;	
	\draw (0,2.4) node[anchor=east] {$  \inf_{v\in\eV(d,x,z;\sfs)}H(d,v)$};
	\draw (3,1.2) node {$ \{1\}$};
	\draw (3,3.2)   node { $\{0,1\}$};
	\draw (3,5)   node { $\{0\}$};
	\draw (0,4) node[anchor=east] {$  \sup_{v\in\eV(d,x,z;\sfs)}H(d,v)$};
	\draw (0,0) node[anchor=east] {$0$};
	\draw (0,6) -- (6,6);
	\draw (0,6) node[anchor=east] {$1$};
\end{tikzpicture}
\caption{An incomplete threshold-crossing structure.}
\label{fig:inc_cf1}
{\footnotesize Note: The figure shows the value of $\eY(\eta|d,\eV;\nu,F)$ as a function of $\eta$. }
\end{center}
\end{figure}

The containment functional of $\eY$ in \eqref{eq:eYrep} satisfies
\begin{align*}
 \contf(\{1\}|D=d,X=x,Z=z)&=F_\eta(\eY(\eta,D,X,\eV;\sfo,F)\subseteq \{1\}|D=d,X=x,Z=z)\\
 &=\inf_{v\in\eV(d,x,z;\sfs)}H(d,x,v),\\
 \contf(\{0\}|D=d,X=x,Z=z)&=F_\eta(\eY(\eta,D,X,\eV;\sfo,F)\subseteq \{0\}|D=d,X=x,Z=z)\\
 &=1-\sup_{v\in\eV(d,x,z;\sfs)}H(d,x,v).	
\end{align*}
Theorem \ref{thm:identification} then implies simple identifying restrictions:
\begin{align}
\inf_{v\in\eV(d,x,z;\sfs)}H(d,x,v)\le P(Y=1|D=d,X=x,Z=z)\le 	\sup_{v\in\eV(d,x,z;\sfs)}H(d,x,v).\label{eq:dynamic_ineq1}
\end{align}

We may apply this argument sequentially to \eqref{eq:ex_dynamic2}--\eqref{eq:ex_dynamic4}. The next step is to take $Y=D_2$ as an outcome, $D=(Y_1,D_1)$ as a treatment, $U=V_2$ as a latent variable in the outcome equation, and $V=(V_1,U_1)$ as control variables, which generates inequalities of the form  \eqref{eq:dynamic_ineq1}. Finally, we may apply the same argument to the outcome and selection equations in period 1. Corollary \ref{cor:ex4} in the Appendix characterizes the sharp identification region for $\theta$.

\subsection{Treatment Responses with Social Interactions}\label{ssec:ex_strategic}

 We consider settings where other individuals' treatment status affects one's outcome through spillover or equilibrium effects. Let individuals be indexed by $j=1,\dots,J$. Let $D=(D_{1},\dots,D_{J})$
be a vector of treatment decisions across individuals and let $D_{-j}$ be the vector $D$ without the element $D_{j}$.
Consider the effect of the entire profile $D$ on some outcome $Y$. For example, $D$ indicates entries of potential market participants (e.g., airlines) and  $Y$ is a market-level outcome (e.g., pollution).
Suppose the observed
treatments $D$ satisfy 
\begin{align}
D_{j} & =1\{\sfs_{j}(D_{-j},Z_{j},X)\ge V_{j}\},~j=1,\dots,J,\label{eq:strategic_selection}
\end{align}
where $V_{j}|X=x$ is normalized to $U[0,1]$.\footnote{The joint distribution of $V=(V_1,\dots,V_J)$ is unrestricted.} 

Many empirical settings require relaxing the \emph{Stable Unit Treatment Value Assumption (SUTVA)} \citep{Rubin1973} (or, equivalently, the \emph{Individualistic Treatment Response (ITR)} of \cite{Manski:2013aa}), particularly when outcomes are shaped by interactions across individuals. 
One way to motivate \eqref{eq:strategic_selection} is to allow for such interdependence through strategic selection. Let $Y_{j}(d_{1},\dots,d_{J})$
be the potential outcome of individual $j$ when $D$ is set to $(d_{1},\dots,d_{J})$.
Previous examples assume an individual's outcome only depended
on their own treatment (i.e., SUTVA or ITR) that $Y_{j}(d_{1},\dots,d_{J})=Y_{j}(d_{j})$. We istead
allow each individual's outcome to depend on the entire vector of
treatments received by the individuals, capturing spillover effects that are central in many empirical applications  \citep{Graham:2011aa,Aronow:2017aa}.

For simplicity, consider two individuals. Each 
faces a binary action $d_j$. For individual $j$ and $(d_1,d_2)\in \cD=\{0,1\}^2$, let
\begin{align}
Y_{j}(d_{1},d_{2}) & =\sfo_{j}(d_{1},d_{2},X)+U_{j,d_{1},d_{2}}.\label{eq:ex_strategic_outcome}
\end{align}
The observed outcome is generated according to 
$Y_{j}=\sum_{(d_{1},d_{2})\in\cD}1\{D_{1}=d_{1},D_{2}=d_{2}\}Y_{j}(d_{1},d_{2}).$
Suppose the individuals are involved in Roy-type decisions: 
\begin{align*}
D_{1} & =1\{Y_{1}(1,D_{2})-Y_{1}(0,D_{2})\ge\sfo_{c1}(Z_{j},X)+U_{c1}\},\\
D_{2} & =1\{Y_{2}(D_{1},1)-Y_{2}(D_{1},0)\ge\sfo_{c2}(Z_{j},X)+U_{c2}\}.
\end{align*}
Namely, each individual chooses 1 if the payoff of choosing 1 over 0 weakly exceeds its cost, given the other individual's action. A key difference from the previous examples is the presence of externalities in the selection process. This selection process is compatible with \eqref{eq:strategic_selection}
with 
\begin{align*}
\sfs_{j}(D_{-j},Z_{j},X) & \equiv\sfo_{j}(1,D_{-j},X)-\sfo_{j}(0,D_{-j},X)-\mu_{cj}(Z_{j},X)\\
V_{j} & \equiv U_{cj}-U_{j,1,D_{-j}}-U_{j,0,D_{-j}}.
\end{align*}
The individuals' social/strategic interaction is captured by the impact of the
other individual's treatment status on player $j$'s payoff, which
corresponds to $\sfs_{j}(1,z_{j},x)-\sfs_{j}(0,z_{j},x)$.

Multiple solutions to the simultaneous equation system \eqref{eq:strategic_selection} may exist, which makes the selection process set-valued \citep{Tamer:2003tr,ciliberto2021market,Balat:2022aa}. For example, suppose the selection process involves strategic substitution, i.e., $\sfs_j(1,z_j,x)-\sfs_{j}(0,z_j,x)\le 0$ for $j=1,2$.\footnote{A similar argument can be applied to games of strategic complementarity and even to models with incoherent predictions.}
The model's prediction is $D\in G(V_1,V_2|Z,X,;\sfs)$ with
\begin{align}
G(v_1,v_2|z,x;\pi)\equiv
\begin{cases}
\{(0, 0)\} & (v_1,v_2)\in S_{\sfs,(0,0)}(z,x)\\
\{(0, 1)\} & (v_1,v_2)\in S_{\sfs,(0,1)}(z,x)\\
\{(1, 0)\} & (v_1,v_2)\in S_{\sfs,(1,0)}(z,x)\\
\{(1, 1)\} & (v_1,v_2)\in S_{\sfs,(1,1)}(z,x)\\
\{(1, 0), (0, 1)\} & (v_1,v_2)\in S_{\sfs,\{(1,0),(0,1)\}}(z,x).
\end{cases}
\label{eq:strategic_sub}
\end{align}
Figure \ref{fig:ex_entry} summarizes the subsets $S_{\sfs,(0,0)}(z,x),\dots, S_{\sfs,\{(1,0),(0,1)\}}(z,x)$.
This model differs from other examples because the selection process itself is incomplete.

We construct a generalized selection equation as in \eqref{eq:gen_sel_eq} by introducing a random variable $V_s:\Omega\to\{0,1\}$ representing an unknown \emph{selection mechanism}. 
Without loss of generality, suppose that the treatment status $D=(1,0)$ ($D=(0,1)$) is selected when $V_s=1$ ($V_s=0$) and multiple values of $D$ are predicted.\footnote{One may represent the selection mechanism by a latent random variable defining a mixture without loss of generality. See \cite{Tamer:2010aa}, \cite{Ponomareva:2011aa}, \citet[Example 2.6]{Molchanov:2018ui} and \cite[][p.377]{Molinari:2020un}.} 
We do not impose any restrictions on the distribution of $V_s$, reflecting the researcher's agnosticism against the selection. We then let $V\equiv(V_1,V_2,V_s)$ be a vector of controls.

Suppoose that $Z$ is independent of  $U\equiv(U_{j,d_1,d_2}, U_{c_j})_{(d_1,d_2)\in \cD,j=1,2}$ given the control variables $(X, V_1, V_2, V_s)$.
 The set-valued control function can be constructed as follows:
\begin{align}
\boldsymbol{V}(D,Z,X;\sfs) & =\begin{cases}
S_{\pi,(0,0)}(Z,X)\times\{0,1\} & \text{if }D=(0,0)\\
\left[S_{\pi,(0,1)}(Z,X)\times\{0,1\}\right]\cup\left[S_{\pi,\{(1,0),(0,1)\}}(Z,X)\times\{0\}\right] & \text{if }D=(0,1)\\
\left[S_{\pi,(1,0)}(Z,X)\times\{0,1\}\right]\cup\left[S_{\pi,\{(1,0),(0,1)\}}(Z,X)\times\{1\}\right] & \text{if }D=(1,0)\\
S_{\pi,(1,1)}(Z,X)\times\{0,1\} & \text{if }D=(1,1).
\end{cases}\label{eq:ex_strategic_svcf}
\end{align}
In words, if $D=(0,0)$ is realized, it implies that $(V_1,V_2)\in S_{\pi,(0,0)}(Z,X)$, regardless of the value of $V_s$. Therefore, $V$ belongs to $S_{\pi,(0,0)}(Z,X)\times\{0,1\}$. Similarly, if $D=(0,1)$ realizes, then either $(0,1)$ is uniquely predicted due to $(V_1,V_2)\in S_{\pi,(0,1)}(Z,X)$ (and $V_s$ unrestricted) or $(0,1)$ is selected from the set of treatment statuses due to $(V_1,V_2)\in S_{\pi,\{(1,0),(0,1)\}}(Z,X)$ and $V_s=0$. The other cases can be analyzed similarly.

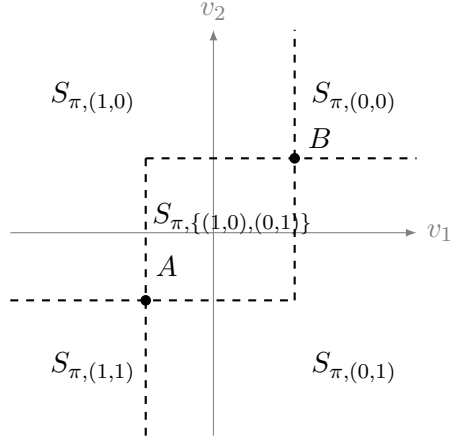
\begin{figure}[h]
	\centering
	\caption{Level sets of $v\mapsto G(v|z;\pi)$ and set-valued CF}
	\begin{tikzpicture}
	[scale=0.9, domain=-3:3,>=latex]
	\draw[gray,->] (-3,0) -- (3,0) node[right] {$v_1$}; 
	\draw[gray,->] (0,-3) -- (0,3) node[above] {$v_2$}; 
	\draw[thick, dashed] (-1,-3) -- (-1,1.1);
	\draw[thick, dashed] (-3,-1) -- (1.2,-1);
	\filldraw (-1,-1) circle (2pt);
	\draw (-1,-0.5) node[right]{$A$} ; 
	\draw[thick, dashed] (1.2,-1) -- (1.2,3);
	\draw[thick, dashed] (-1,1.1) -- (3,1.1);
	\filldraw (1.2,1.1) circle (2pt);
	\draw (1.25,1.4) node[right]{$B$}; 
	\draw (-1,-2) node[left] {$S_{\sfs,(1,1)}$};  
	\draw (1.3,2) node [right]{$S_{\sfs,(0,0)}$};
	\draw (-1,2) node[left] {$S_{\sfs,(1,0)}$}; 
	\draw (1.3,-2) node[right] {$S_{\sfs,(0,1)}$}; 
	\draw (-1.05,0.2) node [right]{$S_{\sfs,\{(1,0),(0,1)\}}$}; 
	\end{tikzpicture}
	\begin{minipage}{0.8\textwidth}
		
	\bigskip
	{\footnotesize Note: $A\equiv(\sfs_{1}(1,z_1,x), \sfs_2(1,z_1,x))$; $B\equiv(\sfs_{1}(0,z_1,x), \sfs_{2}(0,z_2,x))$.  
 The subsets are defined as follows.\begin{align*}
S_{\sfs,(0,0)}(z,x)&\equiv\{v:v_1> \sfs_1(0,z_1,x),v_2>\sfs_2(0,z_2,x)\}\\
S_{\sfs,(0,1)}(z,x)&\equiv\{v:\sfs_1(1,z_1,x)<v_1\le \sfs_1(0,z_1,x),v_2\le \sfs_2(1,z,x)\}\cup \{v:\sfs_1(0,z_1,x)<v_1,v_2\le \sfs_{2}(0,z_2,x)\}\\
S_{\sfs,(1,0)}(z,x)&\equiv\{v:v_1\le \sfs_{1}(1,z_1,x),v_{2}> \sfs_{2}(1,z_2,0)\}\cup \{v: \sfs_{1}(1,z_1,x)<v_1\le \sfs_{1}(0,z_1,x),v_2> \sfs_2(0,z_2,x)\}\\
S_{\sfs,(1,1)}(z,x)&\equiv\{v:v_1\le\sfs_{1}(1,z_1,x),v_2\le \sfs_{2}(1,z_2,x)\}\\
S_{\sfs,\{(0,1),(1,0)\}}(z,x)&\equiv\{v:\sfs_{j}(0,z_j,x)< v_j\le \sfs_{j}(1,z_j,x),j=1,2\}.
\end{align*} \par}
	\end{minipage}
			\label{fig:ex_entry}
\end{figure} 

We work with the conditional mean-independence assumption.
Recall $D\equiv(D_1,D_2)$. Define the model prediction
\begin{align}
	\eY(\eta,D,X,\eV;\sfo,F)\equiv\cl\{y\in\cY:y=\sfo(D,X)+\lambda_D(X,V)+\eta_D,V\in\Sel(\eV)\},
\end{align}
where $\lambda_{d}(x,v)$ is the conditional mean function of $U_{d}|X,V$. Let us rewrite the set-valued control function in \eqref{eq:ex_strategic_svcf} as a union of two random sets.
\begin{align}
\eV(D,X,Z;\sfs)=[\tilde{\eV}_0(D,X,Z;\sfs)\times \{0\}]\cup [\tilde{\eV}_1(D,X,Z;\sfs)\times \{1\}],\label{eq:ex_strategic_svcf_union}
\end{align}
where
\begin{align}
\tilde{\eV}_0(D,X,Z;\sfs)\equiv\begin{cases}
S_{\pi,(0,1)}(Z)\cup S_{\pi,\{(1,0),(0,1)\}}(Z) & \text{if }D=(0,1)\\
S_{\pi,(d_1,d_2)}(Z) & \text{if }D=(d_1,d_2),~ (d_1,d_2)\ne(0,1),
\end{cases}
\end{align}
and
\begin{align}
	\tilde{\eV}_1(D,X,Z;\sfs)\equiv\begin{cases}
S_{\pi,(1,0)}(Z)\cup S_{\pi,\{(1,0),(0,1)\}}(Z) & \text{if }D=(1,0)\\
S_{\pi,(d_1,d_2)}(Z) & \text{if }D=(d_1,d_2),~(d_1,d_2)\ne(1,0).
\end{cases}
\end{align}
As in the previous examples, the sharp identification region for $\theta \equiv(\mu,\pi,F)$ involves the supremum and infimum of a function $f$ over $v\in \eV(d,x,z;\sfs)$. Eq. \eqref{eq:ex_strategic_svcf_union} suggests that the supremum, for example, can be written as 
\begin{align}
\sup_{v\in \eV(d,x,z;\sfs)}f(v_1,v_2,v_s)=\max\{\sup_{(v_1,v_2)\in \tilde{\eV}_0(d,x,z)}f(v_1,v_2,0),\sup_{(v_1,v_2)\in \tilde{\eV}_1(d,x,z)}f(v_1,v_2,1)\}.\label{eq:ex_strategic_max}
\end{align}
We use \eqref{eq:aumann} from Theorem \ref{thm:mean_id} and argue as in Example \ref{ex:Roy1} to characterize the sharp identification region.

\begin{corollary}\label{cor:ex3}
Suppose $E_{P_0}[|Y|]<\infty.$
Suppose $U\equiv(U_{00},U_{10},U_{01},U_{11})$ has a strictly positive conditional density given $(X,V)$.
Suppose, for each $(d_1,d_2)\in\cD$, $E[U_{d_1,d_2}|Z,X,V]=E[U_{d_1,d_2}|X,V],~a.s.$ Then, $\Theta_I(P_0)$ is the set of parameter values $\theta \equiv(\mu,\pi,F)$ such that, for almost all $(d,x)$,
\begin{multline}
 \sup_{z\in\cZ}\Big\{E_{P_0}[Y|D=d,X=x,Z=z]-\lambda_U(d,x,z)\Big\}	\\
 \le \sfo(d,x)\le \\
 \inf_{z\in\cZ}\Big\{E_{P_0}[Y|D=d,X=x,Z=z]-\lambda_L(d,x,z)\Big\},\label{eq:ex_strategic_intersection_bounds}
\end{multline}
where
\begin{align}
	\lambda_U(d,x,z)&\equiv\max\big\{\sup_{(v_1,v_2)\in\tilde\eV_0(d,x,z;\sfs)}{\lambda_d(x,v_1,v_2,0)},\sup_{(v_1,v_2)\in\tilde\eV_1(d,x,z;\sfs)}{\lambda_d(x,v_1,v_2,1)}\big\},\label{eq:ex_strategic1}\\
	\lambda_L(d,x,z)&\equiv\min\big\{\inf_{(v_1,v_2)\in\tilde\eV_0(d,x,z;\sfs)}{\lambda_d(x,v_1,v_2,0)},\inf_{(v_1,v_2)\in\tilde\eV_1(d,x,z;\sfs)}{\lambda_d(x,v_1,v_2,1)}\big\}.\label{eq:ex_strategic2}
\end{align}	
\end{corollary}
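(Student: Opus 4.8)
The plan is to obtain the corollary as a direct specialization of Theorem \ref{thm:mean_id} to the strategic-interaction setup, followed by an algebraic simplification of the adjustment terms using the union decomposition \eqref{eq:ex_strategic_svcf_union} and an intersection over the excluded instrument $z$. First I would check that the hypotheses of Theorem \ref{thm:mean_id} hold. Assumption \ref{as:cf_set} is satisfied because the display \eqref{eq:ex_strategic_svcf} constructs a random closed set $\eV(D,Z,X;\sfs)$ containing the true control $V\equiv(V_1,V_2,V_s)$ almost surely, with measurability inherited from the level sets $S_{\sfs,\cdot}$ and the indicator of the selection mechanism $V_s$. The assumed conditional mean independence $E[U_{d_1,d_2}\mid Z,X,V]=E[U_{d_1,d_2}\mid X,V]$ delivers Assumption \ref{as:cf_mean}, the strict positivity of the conditional density of $U$ given $(X,V)$ is Assumption \ref{as:Fuv}, and $E_{P_0}[|Y|]<\infty$ completes the requirements, so Theorem \ref{thm:mean_id} applies.

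Applying the theorem yields, for almost every $(d,x,z)$, the two-sided bound $\sfo(d,x)+\lambda_L(d,x,z)\le E_{P_0}[Y\mid D=d,X=x,Z=z]\le \sfo(d,x)+\lambda_U(d,x,z)$, where $\lambda_U(d,x,z)=\sup_{v\in\eV(d,x,z;\sfs)}\lambda_d(x,v)$ and $\lambda_L(d,x,z)=\inf_{v\in\eV(d,x,z;\sfs)}\lambda_d(x,v)$ with $\lambda_d(x,v)=E[U_d\mid X=x,V=v]$. Next I would simplify these adjustment terms. Writing $\eV(D,X,Z;\sfs)=[\tilde\eV_0\times\{0\}]\cup[\tilde\eV_1\times\{1\}]$ as in \eqref{eq:ex_strategic_svcf_union} and using that the supremum of a function over a union of sets equals the maximum of the suprema over each piece, the selection coordinate $v_s$ enters only through the last argument of $\lambda_d$, evaluated at $0$ on $\tilde\eV_0\times\{0\}$ and at $1$ on $\tilde\eV_1\times\{1\}$. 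This is precisely the content of \eqref{eq:ex_strategic_max} applied to $f=\lambda_d(x,\cdot)$, and it produces \eqref{eq:ex_strategic1} for the supremum and the symmetric \eqref{eq:ex_strategic2} for the infimum.

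Finally, since $\sfo(d,x)$ does not depend on $z$, I would rearrange the two inequalities to isolate $\sfo(d,x)$—the upper bound gives $\sfo(d,x)\ge E_{P_0}[Y\mid d,x,z]-\lambda_U(d,x,z)$ and the lower bound gives $\sfo(d,x)\le E_{P_0}[Y\mid d,x,z]-\lambda_L(d,x,z)$—and then intersect the implied lower and upper bounds over all $z\in\cZ$. This delivers exactly \eqref{eq:ex_strategic_intersection_bounds}, with lower endpoint $\sup_z\{E_{P_0}[Y\mid d,x,z]-\lambda_U(d,x,z)\}$ and upper endpoint $\inf_z\{E_{P_0}[Y\mid d,x,z]-\lambda_L(d,x,z)\}$. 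Because the bounds of Theorem \ref{thm:mean_id} are sharp and hold simultaneously across $z$ for each fixed $(\sfo,F,\sfs)$, and because $\lambda_U,\lambda_L$ depend only on $(F,\sfs)$ and are therefore held fixed while $z$ varies, intersecting the pointwise-in-$z$ intervals preserves sharpness, and the characterization of $\Theta_I(P_0)$ is exactly \eqref{eq:ex_strategic_intersection_bounds}.

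The main obstacle is bookkeeping rather than conceptual: one must handle the decomposition of the sup/inf over the union carefully in the presence of the selection-mechanism coordinate $V_s$, ensuring $\lambda_d$ is evaluated at the correct value of $v_s$ on each piece, and one must confirm that intersecting the sharp pointwise-in-$z$ bounds discards no identifying content—equivalently, that no additional cross-$z$ restrictions arise beyond those already captured by holding $(F,\sfs)$ fixed.
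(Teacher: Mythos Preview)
Your proposal is correct and follows essentially the same route as the paper: verify the assumptions of Theorem \ref{thm:mean_id}, apply it to obtain the two-sided bound with $\lambda_U,\lambda_L$ defined as sup/inf over $\eV$, then use the union decomposition \eqref{eq:ex_strategic_svcf_union} via \eqref{eq:ex_strategic_max} to split the optimization over $(v_1,v_2,v_s)$ into the two pieces in \eqref{eq:ex_strategic1}--\eqref{eq:ex_strategic2}, and finally rearrange and intersect over $z$ exactly as in the proof of Corollary \ref{cor:ex1}. The paper's own proof is terser---it simply points to Corollary \ref{cor:ex1} for the main argument and invokes \eqref{eq:ex_strategic_max} for the decomposition---so your slightly more explicit discussion of why intersecting over $z$ preserves sharpness is a welcome elaboration rather than a departure.
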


\begin{remark}
One may impose further restrictions on the relationship between $U$ and $V_s$ via a priori restrictions on $F$. An example is to assume $U$ is independent (or mean independent) of the selection mechanism conditional on other control variables. This assumption is plausible if $V_s$ is viewed as a signal that is only relevant for the treatment decision (e.g., firms' profitability), but irrelevant for the outcome (e.g., pollution level). Imposing this assumption allows us to exclude $v_s$ from $\lambda_d$: $\lambda_d(x,v_1,v_2,1)=\lambda_d(x,v_1,v_2,0)\equiv\lambda_d(x,v_1,v_2)$. 
\end{remark}

\begin{remark}
With additional assumptions, it is possible to point identify $\sfs$.
Specifically, suppose $(V_{1},V_{2})\perp(Z_{1},Z_{2})|X$, and for each $j$, $Z_j=(Z_{j,k},Z_{j,-k})$ contains a continuous component $Z_{jk}$ supported on $\mathbb R$. Furthermore, $supp(Z_j,X|Z_{-j})=supp(Z_j,X)$, $a.s.$ \cite{Tamer:2003tr} shows, if one can vary the continuous component to push the choice probabilities toward extreme values, i.e., $\pi_j$ is such that $\lim_{z_{j,k}\to-\infty}\sfs_j(0,z_{j,k},z_{j,-k},x)=0$, and $\lim_{z_{j,k}\to\infty}\sfs_j(1,z_{j,k},z_{j,-k},x)=1$, then, $\sfs$ is point identified. This argument, however, requires a variable with a large support (for each player).
\end{remark}

\begin{remark}
The recent work by \cite{ciliberto2021market} simultaneously estimates the parameters of the incomplete entry model and an outcome model (demand and supply). Their approach can be viewed as a generalization of the simultaneous estimation of a triangular system. In contrast, we use the incomplete entry model to construct a set-valued control function, which is a generalization of the control function approach to a triangular system. Moreover, we are interested in identifying treatment effects, while their focus is recovering market primitives.    
\end{remark}

\subsection{Effects of School Assignment under Capacity Constraints}\label{ssec:ex_school}
To illustrate how the framework applies to causal inference on institutional assignment problems with preferences and constraints, we consider the effect of school choice on educational outcomes. Following \cite{bertanha2024causal}, consider a continuum population of students and a set of four schools,
$\mathcal{J}\equiv\{1,\ldots,4\}$ with an exam school (school 4), a magnet school (school 2), other exam or magnet schools (schools 1 and 3). Assume all schools are acceptable to students (i.e., preferred to outside options). Let $Q$ denote the true preference
relation of a student over the set of options $\mathcal{J}$.\footnote{Within this section, we use $Q$ and $P$ to denote preferences, maintaining notational consistency with \cite{bertanha2024causal}. We also use $\boldsymbol{Q}$ for a set-valued control function.} For example, if $Q=(4,2,3,1)$,
then 4 is preferred to 2 (i.e., $4Q2$), 2 is preferred to 3 (i.e., $2Q3$), and so on. 

Let $Y(d)$ be the potential
outcome (e.g., GPA) of a student being assigned to option $d\in\mathcal{J}$. Let $S\equiv\left(S_{1},\ldots,S_{4}\right)\subseteq\mathbb{R}^{4}$
be a vector of placement scores of a student and, for scores $S$ of a student and admission cutoffs
$c\equiv\left(c_{1},\ldots,c_{4}\right)$ of schools, the
set of feasible options of the student is $B(S)\equiv\left\{ j\in\mathcal{J}:S_{j}\geq c_{j}\right\}$.\footnote{For simplicity, assume that placement scores and cutoffs do not have
ties. \cite{bertanha2024causal} consider a more general framework.} Finally, for a given preference $Q$ and a focal school $j\in\mathcal{J}$, define
a \emph{local preference} $Q_{j}\equiv(k,l)\in\mathcal{J}\times\mathcal{J}$
where $k$ is the most preferred school in the set of feasible options plus
$j$ (i.e., $B(S)\cup\{j\}$) and $l$ is the most preferred school
in the set of feasible options minus $j$ (i.e., $B(S)\setminus\{j\}$). Continuing the example with $Q=(4,2,3,1)$, if the student passes all exams except that for school 3 (i.e., $B(S)=\{1,2,4\}$), then the local preference at exam cutoff $c_4$ of focal school 4 is $Q_4=(4,2)$.

Based on students' reported preferences, a centralized assignment mechanism (e.g., a deferred acceptance algorithm) can be employed to match students to schools. If students' reported preferences $P$ are equal to their true preferences
$Q$, then the \emph{local} preference $P_{4}=Q_{4}$ serves as a control
variable, and conditioning on it suffices to apply a regression discontinuity design (RDD). 
Let $Y(l)$ be the counterfactual GPA after being assigned to school $l\in \{2,4\}$. Conditional on the local preference and under the continuity of the distribution of $Y(l)$ ($l\in \{2,4\}$) with respect to the placement score for school 4, $S_4$, one can identify the
effect on GPA of assigning the magnet school (school $4$) relative to the exam school (school $2$) at the cutoff \citep{kirkeboen2016field, Abdulkadiroglu:2019aa}:
\begin{align*}
&E\left[Y(4)-Y(2)\mid Q_{4}=(4,2),S_{4}=c_{4}\right]\\
&=\lim_{\epsilon\downarrow0}E\left[Y\mid P_{4}=(4,2),S_{4}=c_{4}+\epsilon\right]-\lim_{\epsilon\uparrow0}E\left[Y\mid P_{4}=(4,2),S_{4}=c_{4}+\epsilon\right].    
\end{align*}

In reality, school assignment mechanisms face various constraints (e.g., capacity constraints, application costs), in which case, students
have incentives to strategically misreport their preferences \citep{Agarwal:2018aa,Fack:2019aa}. This case is the main focus of \cite{bertanha2024causal}. Suppose
students submit a partial order $P$ of their true preferences
$Q$, subject to a cap of three on the number of schools they can apply to. Then, we do not observe the full preference ordering $Q$, but only a partial order $P$ of $Q$ with $|P|\leq 3$. This introduces ambiguity in using true (local) preferences as controls, which nonetheless can be accommodated by a set-valued control function.

Fix school $4$ with cutoff $c_{4}$.  Consider students who
report preferences over three schools, $P=(4,2,3)$, and have placement scores $S=\left(S_{4},S_{-4}\right)$.
Their best feasible options above and below $c_4$ are $B(S)\cup\{4\}=\{1,2,4\}$ and $B(S)\setminus\{4\}=\{1,2\}$, respectively. Assuming that students are matched to their best feasible options according to $P$, those with $S_4\ge c_4$ are matched to school 4 and those with $S_4 < c_4$ are matched to school 2. In both cases, the reported local preference at $c_4$ is $P_{4}=(4,2)$, which is no longer necessarily equal to true local preference $Q_{4}$. 

Upon observing $P$ and $S$, we can only infer that the true local preference $Q_{4}$ 
belongs to 
\begin{align}\label{local_pref}
    \boldsymbol{Q}_{4}(P,S) & \equiv \begin{cases}
\{(4,2),(4,1)\} & \text{ if }S_{4}\geq c_{4},\\
\{(4,2)\} & \text{ if }S_{4}<c_{4}.
\end{cases}
\end{align}
The intuition is simple: when $S_4\ge c_4$, assignment to school 4 is consistent with multiple underlying preferences (i.e., $Q_4=(4,1)$ or $(4,2)$), whereas when $S_4<c_4$, assignment to school 2 rules out $(4,1)$.

Hence, unlike in the truth-telling benchmark where $P_{4}=Q_{4}=(4,2)$, we can only recover a set of $Q_{4}$'s that is consistent with reported $P=(4,2,3)$. Under the assumption that $Q_{4}\in\boldsymbol{Q}_{4}$ with probability 1 conditional
on $S_{4}$ (cf. Assumption 6(i) in \cite{bertanha2024causal}), the set-valued control $\boldsymbol{Q}_{4}$ can be used to construct sharp bounds on the effect of school assignment, $E\left[Y(4)-Y(2)\mid Q_{4}=(4,2),S_{4}=c_{4}\right]$.

Assume $Y(l)=\mu(l,S_{4})+U$ for $l\in \{2,4\}$ and let $\lambda(Q_{4},S_{4})\equiv E[U|Q_{4},S_{4}]$.\footnote{We can allow for a more general specification $Y(l)=\mu(l,Q_{4},S_{4})+U_l$ for $l\in\{2,4\}$ where $\mu$ is also a function of $Q_4$ and $U_l$ replaces $U$, following Section \ref{sec:mean_indep}. With the vector unobservables, $(U_2,U_4)$, one can define $\lambda_l(Q_{4},S_{4})\equiv E[U_l|Q_{4},S_{4}]$ for $l\in\{2,4\}$. Then, the RDD approach will require assuming $\lambda_4(Q_{4},c_{4})=\lambda_2(Q_{4},c_{4})$, while our partial identification approach do not rely on such an assumption.} Then, the parameter of interest can be expressed as 
$$E\left[Y(4)-Y(2)\mid Q_{4}=(4,2),S_{4}=c_{4}\right]= \mu(4,c_{4})-\mu(2,c_{4}).$$
Conditional on $Q_{4}=(4,2)$ and $S_{4}\geq c_{4}$, we may write 
$$Y=\mu(4,S_{4})+\lambda(Q_{4},S_{4})+\eta,$$
where $\eta\equiv U-\lambda(Q_{4},S_{4})$ satisfies $E[\eta|Q_{4},S_{4}]=0$. Therefore, conditional on $P_{4}=(4,2)$ and $S_{4}\ge c_{4}$, the observed outcome belongs to
\begin{align*}
\boldsymbol{Y}(\eta,P,S;\mu,\lambda) & \equiv\text{cl}\left\{ y\in\mathcal{Y}:y=\mu(4,S_{4})+\lambda(Q_{4},S_{4})+\eta,Q_{4}\in\text{Sel}(\boldsymbol{Q}_{4})\right\}. 
\end{align*}
Note that we can evaluate $\mu(l,S_4)$ at $l=4$ because, according to the definition \eqref{local_pref}, school $4$ is unambiguously determeined as the first element of $Q_4$ when $P_{4}=(4,2)$ and $S_{4}\ge c_{4}$.\footnote{In fact, this is true in the general formulation of $\boldsymbol{Q}_{j}$ in Proposition 2 of \cite{bertanha2024causal}.} That is, school $4$ is the only school truthfully preferred; otherwise, the true preference would not be consistent with the earlier implication that students with $S_4\ge c_4$ are matched to school $4$.

Fix $P_{4}=(4,2)$ and $s_{4}\ge c_{4}$ below. 
Since $Y \in \boldsymbol{Y}(\eta,P,S;\mu,\lambda)$, using the Aumann expectation, we have $E[Y|P,S] \in\mathbb{E}[\boldsymbol{Y}(\eta,P,S;\mu,\lambda)|P,S]$, which yields a moment inequality via support function (with $b=1$):
\begin{align*}
E[Y|P=p,S_{4}=s_4,S_{-4}=s_{-4}]\le
\sup_{q\in\boldsymbol{Q}_{4}(p,s_4,s_{-4})}\{\mu(4,s_4)+\lambda(q,s_4)\},
\end{align*}
where the right hand side uses $E[\sup_{Y\in\text{Sel}(\boldsymbol{Y}(\eta,P,S;\mu,\lambda))}Y|P=p,S_{4}=s_4,S_{-4}=s_{-4}]
=\sup_{q\in\boldsymbol{Q}_{4}(p,s_4,s_{-4})}\{\mu(4,s_4)+\lambda(q,s_4)\}$.

Assume that $\mu(l,\cdot)$ and $\lambda(Q_{4},\cdot)$ are continuous, which implies the usual continuity of $E[Y(l)|Q_4,\cdot]$ for $l\in\{2,4\}$. Taking the limit, we have
\begin{multline}
\lim_{s_4\downarrow c_{4}}E[Y|P=p,S_{4}=s_4,S_{-4}=s_{-4}] \\
\le\lim_{s_4\downarrow c_4} \sup_{q\in\boldsymbol{Q}_{4}(p,s_4,s_{-4})}\{\mu(4,s_4)+\lambda(q,s_4)\}
= \sup_{q\in\boldsymbol{Q}_{4}(p,c_{4},s_{-4})}\{\mu(4,c_{4})+\lambda(q,c_{4})\}.
\end{multline}
A similar argument can be made with $b=-1$ and $s_4\ge c_{4}$, as
well as $b\in\{1,-1\}$ and $s_4<c_{4}$. 

Consequently, we obtain in total four inequalities  for each conditioning value. Note that $\lambda(q,c_{4})$ enters all four inequalities. When $\boldsymbol{Q}_4$ is a singleton, $\lambda(q,c_{4})$ would cancel out when differencing the inequalities to recover the treatment effect. The researcher can further assume separability, such as $\mu(l,S_{4})=\mu_{l} + \tilde\mu(S_{4})$, and derive intersection bounds on $\mu_{l}$. Such a separability assumption would become more plausible once additional covariates $(X,S_{-4})$ are incorporated: $\mu(l,S,X)=\mu_{l}(S,X) + \tilde\mu(S,X)$. 

\begin{remark}
The bounds resulting from our approach are sharp relative to the maintained assumption. They need not coincide with the analytical bounds (that may not be sharp) and the additional sharp bounds derived in \cite{bertanha2024causal}, reflecting differences in both the bounding strategies and the underlying assumptions.
\end{remark}

\section{Empirical Illustration}\label{sec:empirical_illustration}
We illustrate the framework by revisiting \citet{Thornton:2008tp}, who studies whether completing voluntary counseling and testing and learning one’s HIV test result affects health preventive behavior in Malawi. We use the application to show how the framework maps a real empirical design into a structural model and delivers inference for policy-relevant objects.

This exercise is intended as a methodological illustration. The main value of our approach in this setting is that it yields interpretable causal objects even when the linear-IV coefficient is difficult to map to a single local average treatment effect (LATE), and it makes transparent how the empirical conclusions depend on maintained assumptions about selection and treatment response.\footnote{\citet{Thornton:2008tp} studied this problem using a linear IV model with interaction terms and found a modest difference between the effects of learning an HIV-positive diagnosis and those of learning an HIV-negative diagnosis. The study also reported that the TSLS estimate for the HIV-negative group was small.
The conventional interpretation of the TSLS estimand as a LATE should be treated with caution in this setting, because the empirical design involves multiple continuous instruments and continuous covariates. In such environments, the TSLS estimand can be interpreted as a positively weighted average of LATEs only under fairly restrictive assumptions \citep{MogstadTorgovitskyWalters2021aer,blandholetal2022nber,sloczynski2022notinterpretlineariv}.}

The original study was based on the Malawi Diffusion and Ideational Change Project (MDICP), which provided randomized monetary incentives (vouchers) to individuals. These incentives encouraged individuals to visit voluntary counseling and testing (VCT) centers to learn about their HIV status and redeem the vouchers. During a follow-up survey conducted at the respondents’ homes, they were offered to purchase condoms at a discounted price.

The treatment indicator $D_i$ equals one if respondent $i$ completed VCT and learned their HIV test result.
The outcome variable $Y_i$ represents the number of condom purchases. We focus on individuals who were sexually active at the time of the survey and purchased either 0, 3, or 6 condoms, which corresponds to 0, 1, or 2 packs of condoms.\footnote{This subsample accounts for 94\% of the sample of sexually active individuals. Those outside this sample either purchased condoms at a premium (not by packs) or purchased more than 2 packs, which was rare.}
The instrumental variables are $Z_{amt}$, the amount of the monetary voucher, and $Z_{dist}$, the distance to the VCT center, the location of which was randomized.
We used the same set of observable controls as in the original study, a dummy variable for the HIV diagnosis ($X_{HIV}=1$ if positive) and additional controls ($\tilde X$):  a dummy for male, age, simulated average distance to the VCT center, and a district dummy variable. Let $X\equiv(X_{HIV},\tilde X)$.

Consider a simple ordered choice model of condom purchases:\footnote{We treat $Y$ as an ordered discrete variable, which differs from \cite{Thornton:2008tp} who treated $Y$ as a continuous variable.}
\begin{align}
Y=\begin{cases}
0 & \text{if}~ \mu(D,X)+U \le c_{L},\\
3 & \text{if}~ c_L< \mu(D,X)+U \le c_{U},\\
6 & \text{if}~ \mu(D,X)+U > c_U.
\end{cases}	
\end{align}
Here $U$ captures latent demand for prevention. Selection into learning one’s HIV status is described by
\begin{align}
D=1\{\pi(Z,X,V)\ge 0\}. \label{eq:app_selection}
\end{align}
The unobservable $V$ summarizes latent resistance to learning one’s status. Allowing $U$ and $V$ to be dependent accommodates self-selection, since individuals with stronger latent demand for prevention may also be more or less likely to obtain their test results.

We assume that the conditional distribution of $U$ given $V$ can be written as
\begin{align}
U=Q(\eta;X,V)=g(V)+Q(\eta), \label{eq:g_Q}
\end{align}
for a measurable function $g$ and an invertible function $Q$.\footnote{Equivalently, $U\mid V$ belongs to a location family with location parameter $g(V)$.}
Because $Y$ takes only three ordered values, identification reduces to bounding the conditional probabilities of the lowest and highest outcomes, $Y=0$ and $Y=6$. Proposition \ref{prop:empirical} in Appendix \ref{app_sec:application} provides the formal sharp characterization. 

To make inference computationally tractable, we add further structure to functions $\mu$ and $\pi$. The systematic component of condom demand as
\begin{align}
\mu(d,x)=\mu_1 d+\mu_{int}(d\times x_{\text{HIV}})+\tilde{x}'\beta,
\end{align}
where the interaction term allows the effect of learning one’s status to differ between HIV-positive and HIV-negative individuals. 

For treatment selection, our baseline specification is the single-index model
\begin{align}
\pi(z,x,v)=\tilde{\pi}(z,x)-v,
\end{align}
where $\tilde{\pi}(z,x)$ is nonparametrically identified from the propensity score and is estimated by a series Logit estimator. This specification imposes a \emph{common} latent ranking of individuals’ propensity to learn their status across instrument values.

We then relax this common-ranking restriction with a random-coefficient specification,
\begin{align}
\pi(z,x,v)=v_0+v_1 z_{\text{dist}}+z_{\text{amt}}+x'\gamma,
\end{align}
where $v=(v_0,v_1)$ is a realization of $(V_0,V_1)$ and we assume $V_1\le 0$ almost surely.\footnote{The coefficient on $Z_{\text{amt}}$ is assumed to be positive and normalized to one.}
This model allows individuals to trade off voucher amount and travel distance differently, and therefore admits richer selection patterns than the baseline single-index specification. Appendix \ref{app_sec:application} provides further details on how we model the dependence between $U$ and $V$ in each selection model.

We conduct inference for structural objects: the average structural function (ASF), the average treatment effect (ATE), and the \emph{counterfactual switching probability}. Let $Y(d,x_{HIV})$ denote the counterfactual outcome under treatment status $d$ and HIV status $x_{HIV}$. Then the (structural) switching probability is defined as the share of switchers: $P(Y(0,x_{HIV})=0,\;Y(1,x_{HIV})>0).$ 
These structural objects can be expressed as functions of $\theta$.

For each object,
we construct a confidence interval (CI) $CI_n$ using the universal inference method by \cite{KaidoZhang:2024aay}.  For any $\varphi:\Theta\to\mathbb R$, this interval
 covers each value of $\varphi(\theta)$ in its sharp identification region (see $\mathfrak K_I(d)$ in Theorem \ref{thm:functional}) with at least a prescribed confidence level in any finite samples. This method is suitable for our setting because it is designed for functions of $\theta$, robust to incompleteness, and computationally feasible in settings with a moderate number of discrete and continuous covariates. We provide details of the implementation in Appendix  \ref{app_sec:application}.

Tables \ref{tab:baseline} and \ref{tab:swprob} address two policy-relevant questions under the baseline single-index selection model: how learning one’s HIV status changes expected condom demand (the ATE), and what fraction of respondents would switch from buying no condoms to buying some condoms because of the intervention (the counterfactual switching probability). Two patterns emerge. First, the data are only weakly informative about effects for HIV-positive respondents. Second, for the HIV-negative majority, both the sign and the magnitude of the estimated effect depend importantly on the maintained restrictions. Panel A of Table \ref{tab:baseline} illustrates this point. For HIV-negative individuals, the ASF under learning ($d=1$) is tightly bounded around 0.5, whereas the counterfactual ASF without learning ($d=0$) is substantially higher, $[1.327,2.141]$. The resulting ATE is therefore strictly negative, $[-1.598,-0.874]$. Under the baseline model, learning one’s HIV-negative status is thus associated with a reduction in condom demand.

Because HIV-negative individuals constitute the main prevention-relevant group, we next examine how this conclusion changes as additional restrictions are imposed on treatment response and selection. We illustrate that imposing additional identifying restrictions is straightforward within our framework without needing to prove sharpness for each case. We begin with monotone treatment selection (MTS) \citep{manski_pepper00}, which requires that for each $d=0,1$,
\[
E[Y(d)\mid D=1]\ge E[Y(d)\mid D=0].
\]
A natural interpretation is that individuals who choose to learn their HIV status are, holding treatment fixed, more health-conscious and therefore more likely to purchase condoms. Under our specification, this restriction corresponds to $\rho\le 0$. Panel B of Table \ref{tab:baseline} shows that imposing MTS has little effect: the ASF and ATE intervals remain close to those in the baseline specification, suggesting that MTS adds little identifying content in this application.

We next impose monotone treatment response (MTR) \citep{manski1990nonparametric,manski1997monotone} for the HIV-negative group:
\[
Y(1)\ge Y(0)\mid X_{HIV}=0,\quad a.s.
\]
This restriction encodes the view that, all else equal, learning that one is HIV-negative cannot reduce preventive behavior. In our specification, this corresponds to $\mu_1\ge 0$. Panel C of Table \ref{tab:baseline} shows that MTR substantially tightens and shifts the estimates. For HIV-negative individuals, the ATE becomes strictly positive and tightly bounded, $[0.633,0.814]$. This sign reversal is driven by the maintained assumption rather than by the data alone.
Finally, Panel D combines MTS and MTR. Under both restrictions, the HIV-negative ATE remains positive but is much smaller, with CI $[0.03,0.09]$. Taken together, the results from Table \ref{tab:baseline} show that, under the baseline selection model, the effect for HIV-negative individuals is highly assumption-sensitive. Even when monotonicity restrictions force the effect to be positive, the implied increase in condom purchases is modest. By contrast, the effects for HIV-positive individuals remain much less precisely estimated throughout, possibly due to the limited sample size.

Table \ref{tab:swprob} tells a similar story for extensive-margin responses. Without additional restrictions (Panel A), the structural switching probability is bounded above by 0.191 for HIV-positive individuals but is degenerate at zero for HIV-negative individuals, so the baseline model provides no evidence of upward switching for the latter group. Imposing MTS (Panel B) leaves the bounds essentially unchanged, again indicating that MTS contributes little additional identifying power.

Under MTR for HIV-negative individuals (Panel C), the switching probability becomes strictly positive for both groups. For HIV-negative individuals, the interval tightens to $[0.156,0.216]$. This occurs because once learning is restricted from lowering preventive behavior, the model necessarily assigns positive mass to individuals who move from no purchase to some purchase. When MTS and MTR are imposed jointly (Panel D), the HIV-negative interval shrinks to $[0,0.03]$, so zero is no longer ruled out and any extensive-margin response is small at most. For HIV-positive individuals, the bounds remain wide, $[0,0.352]$, and continue to include zero.

Overall, the switching results reinforce the main message from the ASF and ATE estimates: under the baseline selection model, evidence of behavioral change is highly sensitive to the maintained monotonicity restrictions, and for HIV-negative individuals the implied mass of switchers remains limited.

\begin{table}[htbp]
\begin{center}
    \begin{tabular}{lccc}
        \hline
        \noalign{\vskip 2pt}
       &  & HIV+  & HIV$-$ \\[1pt]
        \hline
        \noalign{\vskip 1pt}
      & $d=1$ & $[0.211, 1.538]$ & $[0.482, 0.513]$ \\
Panel A: (Baseline)      
      & $d=0$ & $[0.482, 2.864]$ & $[1.327, 2.141]$ \\
      & ATE   & $[-2.382, 0.693]$ & $[-1.598, -0.874]$ \\[1pt]
\hline
\noalign{\vskip 1pt}
      & $d=1$ & $[0.211, 1.538]$ & $[0.482, 0.513]$ \\
Panel B: (MTS)      
      & $d=0$ & $[0.482, 2.864]$ & $[1.357, 2.111]$ \\
      & ATE   & $[-2.322, 0.754]$ & $[-1.598, -0.814]$ \\[1pt]
\hline
\noalign{\vskip 1pt}
      & $d=1$ & $[1.116, 2.985]$ & $[1.266, 1.327]$ \\
Panel C: (MTR for HIV$-$)      
      & $d=0$ & $[0.121, 1.296]$ & $[0.452, 0.633]$ \\
      & ATE   & $[0.332, 2.744]$ & $[0.633, 0.814]$ \\[1pt]
\hline
\noalign{\vskip 1pt}
      & $d=1$ & $[0.362, 1.779]$ & $[0.754, 0.905]$ \\
Panel D: (MTS \& MTR for HIV$-$)      
      & $d=0$ & $[0.271, 2.291]$ & $[0.754, 0.905]$ \\
      & ATE  & $[-1.417, 1.417]$ & $[0.03, 0.09]$ \\[1pt]
\hline
    \end{tabular}
    \caption{95\% CIs for ASFs and ATE (Baseline Selection Model)}
    \label{tab:baseline}
\end{center}
\end{table}

\begin{table}[htbp]
\begin{center}
    \begin{tabular}{lccc}
        \hline
        \noalign{\vskip 2pt}
       &  & HIV+  & HIV$-$ \\[1pt]
        \hline
        \noalign{\vskip 1pt}
Panel A: (Baseline)      & Switch Pr.   & $[0, 0.191]$ & $[0, 0]$ \\[1pt]
\hline
\noalign{\vskip 1pt}
Panel B: (MTS)           & Switch Pr.   & $[0, 0.196]$ & $[0, 0]$ \\[1pt]
\hline
\noalign{\vskip 1pt}
Panel C: (MTR for HIV$-$)      & Switch Pr.   & $[0.08, 0.618]$ & $[0.156, 0.216]$ \\[1pt]
\hline
\noalign{\vskip 1pt}
Panel D: (MTS \& MTR for HIV$-$)      & Switch Pr.   & $[0, 0.352]$ & $[0, 0.03]$ \\[1pt]
\hline
    \end{tabular}
    \caption{95\% CIs for the Switching Probabilities (Baseline Selection Model)}
    \label{tab:swprob}
\end{center}
\end{table}

\begin{table}[htbp]
\begin{center}
    \begin{tabular}{lccc}
        \hline
        \noalign{\vskip 2pt}
       &  & HIV+  & HIV$-$ \\[1pt]
        \hline
        \noalign{\vskip 1pt}
      & $d=1$ & $[0.452, 2.593]$ & $[0.784, 1.266]$ \\
Panel A: (Baseline)      
      & $d=0$ & $[0.241, 2.382]$ & $[0.814, 1.477]$ \\
      & ATE   & $[-1.719, 2.020]$ & $[-0.573, 0.392]$ \\[1pt]
\hline
\noalign{\vskip 1pt}
      & $d=1$ & $[0.302, 2.683]$ & $[0.543, 1.297]$ \\
Panel B: (MTS)    
      & $d=0$ & $[0.241, 2.352]$ & $[0.783, 1.839]$ \\
      & ATE   & $[-1.719, 2.020]$ & $[-1.297, 0.271]$ \\[1pt]
\hline
\noalign{\vskip 1pt}
      & $d=1$ & $[0.452, 2.472]$ & $[1.025, 1.297]$ \\
Panel C: (MTR for HIV$-$)      
      & $d=0$ & $[0.392, 2.231]$ & $[0.874, 1.236]$ \\
      & ATE   & $[-1.417, 1.900]$ & $[0.030, 0.452]$ \\[1pt]
\hline
\noalign{\vskip 1pt}
      & $d=1$ & $[0.422, 2.502]$ & $[0.905, 1.296]$ \\
Panel D: (MTS \& MTR for HIV$-$)      
      & $d=0$ & $[0.271, 2.295]$ & $[0.754, 1.263]$ \\
      & ATE   & $[-1.598, 2.020]$ & $[0.030, 0.392]$ \\[1pt]
\hline
    \end{tabular}
    \caption{95\% CIs for ASFs and ATE (Random Coefficient Selection Model)}
    \label{tab:rc}
\end{center}
\end{table}

\begin{table}[htbp]
\begin{center}
    \begin{tabular}{lccc}
        \hline
        \noalign{\vskip 2pt}
       &  & HIV+  & HIV$-$ \\[1pt]
        \hline
        \noalign{\vskip 1pt}
Panel A: (Baseline)      & Switch Pr.   & $[0, 0.508]$ & $[0, 0.116]$ \\[1pt]
\hline
\noalign{\vskip 1pt}
Panel B: (MTS)           & Switch Pr.   & $[0, 0.497]$ & $[0, 0.085]$ \\[1pt]
\hline
\noalign{\vskip 1pt}
Panel C: (MTR for HIV$-$)      & Switch Pr.   & $[0, 0.467]$ & $[0, 0.126]$ \\[1pt]
\hline
\noalign{\vskip 1pt}
Panel D: (MTS \& MTR for HIV$-$)      & Switch Pr.   & $[0, 0.497]$ & $[0, 0.111]$ \\[1pt]
\hline
    \end{tabular}
    \caption{95\% CIs for the Switching Probabilities (Random Coefficient Selection Model)}
    \label{tab:swprob_rc}
\end{center}
\end{table}

The results above are obtained under the parsimonious baseline single-index selection model. We now assess how sensitive these conclusions are to relaxing that structure.

Table \ref{tab:rc} reports the corresponding results under the random-coefficient (RC) selection model, which allows individuals to differ in how they trade off voucher amount and distance to the VCT center. The main message is that the negative effect for HIV-negative individuals is not robust to this more flexible specification. In Panel A, the ASF under treatment for HIV-negative individuals is no longer tightly concentrated around 0.5; instead, it ranges from $[0.784,1.266]$, while the ASF under no treatment is of similar magnitude, $[0.814,1.477]$. As a result, the ATE is no longer sign-identified, with CI $[-0.573,0.392]$. This suggests that the sharp negative effect under the baseline model is driven in substantial part by its restrictive common-ranking structure.

This pattern persists across Panels A--D. For HIV-negative individuals, the ATE intervals remain wider than under the baseline model and either include zero or become only modestly positive under the MTR restriction. For HIV-positive individuals, the data remain weakly informative throughout: the ATE intervals are wide in every specification and consistently include zero.

Table \ref{tab:swprob_rc} shows a similar pattern for the switching probability. For HIV-negative individuals, the share of switchers remains small across all specifications, with upper bounds between 0.085 and 0.126 in Panels A--D. Relative to the baseline model, these bounds are looser and, in particular, the strictly positive lower bound obtained under MTR in Panel C disappears. Thus, once the selection model is made more flexible, the data no longer support a robustly positive mass of HIV-negative individuals who switch from buying no condoms to buying some condoms in response to the intervention.

Taken together, the RC results imply that the intervention is unlikely to induce large changes in condom demand among the policy-relevant HIV-negative population despite its nontrivial implementation cost \citep[][Sec. IV-C]{Thornton:2008tp}. More broadly, the comparison between Tables \ref{tab:baseline} and \ref{tab:rc} highlights that conclusions that appear tight under a parsimonious selection model can become substantially weaker once richer forms of selection heterogeneity are allowed.

\section{Concluding remarks}
Observational data are often generated through complex decision processes. Allowing control functions to be set-valued, this paper expands the scope of the control function approach. The proposed framework accommodates, for example, selection processes that involve rich heterogeneity, dynamic optimizing behavior, or social interaction. Our identifying restrictions are inequalities on the conditional choice probabilities. One can conduct inference using moment-based methods or likelihood-based inference methods. Practitioners can use the results of this paper for various purposes. First, they can evaluate social programs nonparametrically, while taking into account potentially complex treatment selection processes. Second, the bounds in our main identification results can easily be combined with a range of shape restrictions and parametric assumptions, allowing practitioners to conduct a sensitivity analysis to assess the additional identifying power of specific assumptions. The tools from random set theory enable us to guarantee sharpness of bounds one obtains in such a sensitivity analysis, without needing to prove sharpness case after case.

\bibliography{HK_references}

\appendix

\section{Comparison with the IV Approach}

\cite{Chesher:2017vu} applies their method to a single-equation IV model and employ the IV assumption. They characterize identified sets for structural parameters, applying random set theory to the level set of unobservables. In this section, we compare our approach with theirs. The main propose of the comparison is to illustrate that the two frameworks are non-nested and complementary.

We summarize the characterization of the identified set in \cite{Chesher:2017vu} with notation close to ours. Let $Y$ be a vector of endogenous variables, $Z$ be a vector of exogenous variables (e.g., IVs), and $U$ be a vector of structural unobservables. Then define a random closed set of $U$ as
\begin{align*}
\boldsymbol{U}(Y,Z;h) & \equiv\{u:h(Y,Z,u)=0\},
\end{align*}
where $h$ is a structural function, the features of which are of interest. Assume $(h,F_{U|Z})\in\mathcal{M}$ where $\mathcal{M}$ incorporates identifying assumptions. \cite{Chesher:2017vu} use the following Artstein's inequality: For $F_{U|Z}(\cdot|z)\in\mathsf F_{U|Z}$
being the distribution of one of the measurable selections of $\boldsymbol{U}(Y,Z;h)$,
\begin{align*}
F_{U|Z}(B|z) & \ge\mathbb{C}_{h}(B|z)
\end{align*}
holds for all closed sets $B\in\mathcal{F}(\mathcal{U})$ where
$\mathbb{C}_{h}(B|Z)\equiv P[\boldsymbol{U}(Y,Z;\mu)\subseteq B|Z]$. Then, the identified set can be characterized as
\begin{align*}
\{(h,F_{U|Z})\in\mathcal{M}:F_{U|Z}(B|Z) & \ge\mathbb{C}_{h}(B|Z),\text{ }a.s.\text{ }\forall B\in \mathcal{F}(\mathcal{U})\}.
\end{align*} They also provide characterization using the Aumman expectation of
$\boldsymbol{U}(Y,Z;h)$.

On the other hand, we characterize the identified set as
\begin{align*}
\{(\mu,F_{U|V}):P_{0}(A|D,Z) & \ge\mathbb{C}_{\mu,F}(A|D,Z),\text{ }a.s.\text{ }\forall A\in\mathcal{F}(\mathcal{Y}),\text{ }\pi\in\Pi_{r}(P_{0})\},
\end{align*}
where $\mathbb{C}_{\mu,F}(A|D,Z)\equiv F_{\eta}(\eY(\eta,D,\eV;\mu,F_{U|V})\subseteq A)$ and $X$ is suppressed. We also provide characterization using the Aumman expectation of $\eY$.

The two approaches share similar features in that only \emph{sets} of unobservables can be recovered from observed data, while stochastic restrictions are imposed on true unobservables. However, the two differ in several ways. First, the CF assumption is imposed on $F_{U|D,V}$ whereas the IV assumption is
imposed on $F_{U|Z}$. Note that the two stochastic assumptions are not nested. 

Second, the CF assumption is imposed in generating $\eY(\eta,D,\eV;\mu,F_{U|V})$, while the IV assumption is imposed when applying Artstein's inequality using $\boldsymbol{U}(Y,Z;h)$. Therefore, even if we were to use the IV approach by treating our $(U,V)$ as their $U$ and $\mu$ as part of $h$, it is unknown how the CF assumption can be utilized in their framework. Third, due to this difference, the containment functional is compared to the observed distribution to construct the identified set in our setting, while it is compared to the unobserved distribution in theirs. This may have implications on implementation in practice. 

In sum, the two frameworks offer complementary tools to practitioners for robust causal analyses. Practitioners can select the most appropriate approach based on the specific model at hand and their belief on the stochastic nature of their problem.

\section{Details on Dynamic Treatment Effects and Additional Examples}\label{sec:additional_examples}
\subsection{Dynamic Treatment Effects}
We complete the characterization of the sharp identification region for the dynamic treatment effect example in Section \ref{ssec:ex_dynamic4}.
Define the following objects
\begin{align*}
H_{Y_2}(y_1,d_1,d_2,x,v;\theta)&\equiv F_{U_2|X,U_1,V_1,V_2}(\sfo_2(y_1,d_1,d_2,x)|x,u_1,v_1,v_2)  \\
H_{D_2}(y_1,d_1,z_2,x,\tilde v;\theta)&\equiv F_{V_2|X,U_1,V_1}(\sfs_2(y_1,d_1,z_2,x)|x,u_1,v_1)\\
H_{Y_1}(d_1,x,v_1;\theta)&\equiv F_{U_1|X,V_1}(\sfo_1(d_1,x)|x,v_1),
\end{align*}
where $\tilde v\equiv(u_1,v_1)$. Recall  $\eV$ was defined as in \eqref{eq:ex_dynamic_eV}. Also define
\begin{align*}
   	\eV_{D_2}(Y_1,D_1,Z,X;\theta)&\equiv\eV_{U_1}(D,X;\sfo_1)\times \eV_{1}(D,Z_1,X;\sfs_1)\\ 
    \eV_{Y_1}(D_1,Z_1,X;\theta)&\equiv\eV_{1}(D,Z_1,X;\sfs_1).
\end{align*}
Then, we can use \eqref{eq:artstein} from Theorem 1 and characterize the sharp identification region as follows. 
\begin{corollary}\label{cor:ex4}
	
Suppose  (i) $U_1\perp Z_1|X,V_1$ (ii) $V_2\perp Z_1|X,U_1,V_1$ and (iii) $U_2\perp (Z_1,Z_2)|X,V_2,U_1,V_1$.  Then, $\Theta_I(P_0)$ is the set of parameter values $\theta \equiv(\sfo_1,\sfo_2,\sfs_1,\sfs_2,F)$ such that, for almost all $(d,x,z)$,
\begin{multline}
\inf_{v\in\eV(d,x,z;\theta)}H_{Y_2}(y_1,d_1,d_2,x,v;\theta)\\\le P_0(Y_2=1|Y_1=y_1,D_1=d_1,D_2=d_2,X=x,Z=z)\\
\le \sup_{v\in\eV(d,x,z;\theta)}H_{Y_2}(y_1,d_1,d_2,x,v;\theta),\label{eq:binary_artstein1}
\end{multline}
\vspace{-1cm}
\begin{multline}
\inf_{\tilde v\in\eV_{D_2}(y_1,d_1,x,z;\theta)}H_{D_2}(y_1,d_1,x,z_2,\tilde v;\theta)\\\le P_0(D_2=1|Y_1=y_1,D_1=d_1,X=x,Z=z)\\
\le \sup_{\tilde v\in\eV_{D_2}(y_1,d_1,x,z;\theta)}H_{D_2}(y_1,d_1,x,z_2,\tilde v;\theta),\label{eq:binary_artstein2}
\end{multline}
and
\begin{multline}
\inf_{v_1\in\eV_{Y_1}(d_1,x,z_1;\theta)}H_{Y_1}(d_1,x, v_1;\theta)\\\le P_0(Y_1=1|D_1=d_1,X=x,Z_1=z_1)\\
\sup_{v_1\in\eV_{Y_1}(d_1,x,z_1;\theta)}H_{Y_1}(d_1,x, v_1;\theta).\label{eq:binary_artstein3}
\end{multline}
\end{corollary}
In the corollary, the conditional independence assumptions (i), (ii), (iii) for the IVs are useful in constructing informative bounds on $\mu_1$, $\pi_2$, and $\mu_2$, respectively. As in Example \ref{ex:Roy1}, $\sfs_1$ can be point identified from the selection equation in period 1 if $V_1\perp Z_1|X$.

\subsection{Multinomial Choice with a Generalized Selection Model}\label{ssec:ex_multinomial}
Suppose an individual chooses an option $Y$ out of mutually exclusive alternatives $\cY=\{1,\dots, J\}$ by maximizing her utility:\footnote{Similar to Section \ref{ssec:roy}, one can allow further heterogeneity by replacing $U_j$ with $U_{j,D}$ in this model.}
\begin{align}
	Y\in \argmax_{j\in \cY}\sfo_j(D,X)+U_j.
\end{align}
 The individual's utility from alternative  $j$ depends on whether she is enrolled in a certain program $(D=1)$ or not $(D=0)$.\footnote{It is also possible to let $\mu$ be a function of individual-specific unobservables (e.g., random coefficients) and treat them as part of $U$. For simplicity, we do not pursue this extension here.}
For example, \cite{Sosa-Rubi:2009aa} analyze the choice of pregnant women in Mexico who choose sites for their obstetric care. The treatment of interest is enrollment in a public health insurance program that provides access to health services for vulnerable populations. 

 Suppose there is a binary instrument (e.g., eligibility), and $D=D(1)Z+D(0)(1-Z)$ with
\begin{align}
	D(z)=1\{\sfs(z,X)\ge V_z\}.
\end{align}
This is the flexible selection model introduced in Example \ref{ex:Roy1}. Allowing for flexibility is relevant in this context, as the insurance program may not be mandatory for the eligible or exclusive against the non-eligible. The set-valued control function is as in \eqref{eq:Roy2_eV}. 

Let $V=(V_0,V_1)$ and let $U_k=Q_k(\eta;X,V),k=1,\dots,J$.
This model's prediction is
 \begin{multline}
 \eY(\eta,D,X,\eV;\sfo,F)\\
 \equiv\Big\{j\in \cY:\sfo_{j}(D,X)\ge \inf_{V\in \Sel(\eV)}\Big(\max_{k\ne j}[\sfo_k(D,X)+Q_k(\eta;X,V)]-Q_{j}(\eta;X,V)\Big)\Big\}.\label{eq:multinom1}
 \end{multline}
Each element of $\eY$ is the maximizer of the utility index $\sfo_k(D,X)+Q_k(\eta;X,V),k=1,\dots,J$ for some  $V\in\Sel(\eV)$. When $\eV$ is singleton-valued, 
\begin{align}
\eY(\eta,D,X,\eV;\sfo,F)=\argmax_{j\in\cY}\sfo_{j}(D,X)+Q_j(\eta;X,V).\label{eq:multinom2}
\end{align}
The model prediction in \eqref{eq:multinom2} nests \citeposs{Petrin:2010aa} specification, which assumes the additive separability of $Q_j$ between functions of $V$ and $\eta$: $Q_j(\eta;X,V)=g(V;\lambda)+Q_j(\eta_j)$.\footnote{In their notation, $g(V;\lambda)$ is $CF(\mu_n;\lambda)$, and $Q_j(\eta_j)$ is $\tilde\varepsilon_{nj}$. Their specification only allows $V$ to shift the location of the conditional distribution of $U$. They show that this specification holds for several parametric models of $U|V$.}

Let $\theta\equiv (\mu_1,\dots,\mu_J,\pi,F)$. Also, let $A\subseteq \cY$. One can show the containment functional is
\begin{multline}
	 \contf(A|D=d,X=x,Z=z)\\
	 =\sum_{\{j_1,\dots,j_m\}\subset A}F_\eta\Big(\sfo_{j_\ell}(d,x)\ge \inf_{v\in \eV(d,x,z;\sfs)}\Big(\max_{k\ne j_\ell}[\sfo_k(d,x)+Q_k(\eta;x,v)]-Q_{j_\ell}(\eta;x,v)\Big),\ell=1,\dots,m\Big).\label{eq:multinom3}
\end{multline}
The containment functional can be computed by simulating $\eta\sim U[0,1]^J.$ The following corollary characterizes $\Theta_I(P_0)$, applying Theorem \ref{thm:identification}.

\begin{corollary}\label{cor:ex2}
Suppose $U\perp Z|X,V$. 
Then, $\Theta_I(P_0)$ is the set of parameter values $\theta \equiv(\mu_1,\dots,\mu_J,\pi,F)$ such that, for almost all $(d,x,z)$,
\begin{multline}
	P_0(Y\in A|D=d,X=x,Z=z)\ge\\
	\sum_{B\subseteq A}F_\eta\Big(\Big\{\sfo_{j_\ell}(d,x)\ge \inf_{v\in \eV(d,x,z;\sfs)}\Big(\max_{k\ne j_\ell}[\sfo_k(d,x)+Q_k(\eta;x,v)]-Q_{j_\ell}(\eta;x,v)\Big)\Big\}\\
\cap \Big\{\sfo_{j_m}(d,x)< \inf_{v\in \eV(d,x,z;\sfs)}\Big(\max_{k\ne j_m}[\sfo_k(d,x)+Q_k(\eta;x,v)]-Q_{j_m}(\eta;x,v)\Big)\Big\}, j_\ell\in B,j_m\in A\setminus B\Big),\\
	~A\subseteq\{1,\dots,J\}.\label{eq:cor2}
\end{multline}
\end{corollary}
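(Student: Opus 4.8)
The plan is to specialize Theorem~\ref{thm:identification} to this model and then evaluate the containment functional in closed form. First I would check the two hypotheses of Theorem~\ref{thm:identification}. Assumption~\ref{as:cf_set} is inherited from Example~\ref{ex:Roy1}: with $V=(V_0,V_1)$ and the binary-instrument rule $D=D(1)Z+D(0)(1-Z)$, $D(z)=1\{\sfs(z,X)\ge V_z\}$, the set $\eV$ in \eqref{eq:Roy2_eV} is a measurable function of $(D,X,Z)$ (and $\sfs$) that contains $V$ almost surely. For Assumption~\ref{as:cf_dist}, observe that $D$ is a deterministic function $g(Z,X,V)$ of $(Z,X,V)$, so the $\sigma$-algebra generated by $(D,X,V)$ is coarser than that generated by $(Z,X,V)$; the maintained condition $U\perp Z\mid X,V$ then gives $P(U\in\cdot\mid D,X,V)=\E[P(U\in\cdot\mid Z,X,V)\mid D,X,V]=P(U\in\cdot\mid X,V)$, which is exactly Assumption~\ref{as:cf_dist}.

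Once both assumptions hold, Theorem~\ref{thm:identification} reduces the problem to the Artstein inequalities $P_0(Y\in A\mid D,X,Z)\ge\contf(A\mid D,X,Z)$ for all $A\in\cF(\cY)$ and $\sfs\in\mathsf\Pi_r(P_0)$. Since $\cY=\{1,\dots,J\}$ is finite every subset is closed, so the remaining task is to compute $\contf(A\mid d,x,z)=F_\eta\big(\eY(\eta,d,x,\eV(d,x,z);\sfo,F)\subseteq A\big)$ from the explicit description \eqref{eq:multinom1}. By \eqref{eq:multinom1}, membership $j\in\eY$ is equivalent to the scalar inequality $\sfo_j(d,x)\ge\inf_{v\in\eV(d,x,z;\sfs)}\big(\max_{k\ne j}[\sfo_k(d,x)+Q_k(\eta;x,v)]-Q_j(\eta;x,v)\big)$, and $j\notin\eY$ to its strict reverse.

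The core computation is a disjoint decomposition. Because $\eY$ is a (nonempty) subset of the finite set $\cY$, the event $\{\eY\subseteq A\}$ equals the disjoint union, over subsets $B\subseteq A$, of the events $\{\eY=B\}$; on each such event one simultaneously imposes the membership inequality above for every index in $B$ and its strict reverse for every index in the complement of $B$. Taking $F_\eta$-probabilities and summing over $B\subseteq A$ produces the right-hand side of \eqref{eq:cor2}, and intersecting the resulting inequality across the observable support $(d,x,z)$ delivers the stated description of $\Theta_I(P_0)$.

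I expect the delicate part to be the bookkeeping around the infimum over measurable selections $v\in\eV$ sitting inside the indicator. Two points deserve care. First, one must confirm that the set-membership representation \eqref{eq:multinom1} is valid as an identity of \emph{random closed sets}, so that $\{\eY\subseteq A\}$ is a genuine event with well-defined $F_\eta$-probability; this is where Lemma~\ref{lem:rdset} and the measurability in Assumption~\ref{as:cf_set} enter. Second, one must verify that the decomposition over $B$ is both exhaustive and pairwise disjoint, so that tie configurations---in which several alternatives attain the maximum under a common selection $v$---are neither lost nor double counted. Note that continuity of $U\mid X,V$ is not required here, since $\eY$ is defined as the union of $\argmax$ sets over $v\in\Sel(\eV)$ and thus already absorbs such ties; this is why, unlike Corollary~\ref{cor:ex1}, no density assumption appears.
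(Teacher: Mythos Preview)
Your proposal is correct and follows essentially the same route as the paper's own proof: verify Assumptions~\ref{as:cf_dist}--\ref{as:cf_set} (the former from $D$ being a function of $(Z,X,V)$ together with $U\perp Z\mid X,V$, the latter from the explicit $\eV$ in \eqref{eq:Roy2_eV}), invoke Theorem~\ref{thm:identification}, and then compute $\contf(A\mid d,x,z)$ by writing $\{\eY\subseteq A\}=\bigcup_{B\subseteq A}\{\eY=B\}$ and translating each event via the membership criterion \eqref{eq:multinom1}. The paper's proof is terser but carries out exactly this decomposition; your additional remarks on measurability (Lemma~\ref{lem:rdset}) and disjointness are welcome elaborations rather than a different argument.
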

As in the previous example, \eqref{eq:cor2} jointly restricts $\mu$, $F$ (through $Q)$, and $\sfs$ (via $\eV$). Suppose further that $V\perp Z|X$. Then, $\sfs$ is point identified as $\sfs(z,x)=P_0(D=1|Z=z,X=x)$, which also ensures $\tilde\sfs$ in \eqref{eq:Roy2_eV} is point identified.

\subsection{Decisions as Corner Solutions or Censored Decisions}
\label{ssec:corner}
Another important class of selection models are models with censored, missing,
or interval data. Consider censored $D$ and suppose
\begin{align}
D & =\max\{D^{*},0\},\label{eq:censored_D}\\
D^{*} & =\pi^{*}(Z,X)+V.\label{eq:censored_D2}
\end{align}
This model is nested in \eqref{eq:gen_sel_eq} as $D=\pi(Z,X,V)\equiv\max\{\pi^{*}(Z,X)+V,0\}$.
If $D$ results as a corner solution of an economic agent's optimization,
we may be interested in $Y=\mu(D,X)+U$. Examples of such $D$ are the
hours of training or the amount of subsidy affecting certain outcomes.
On the other hand, if $D$ results from data censoring, it is reasonable
to consider $Y=\mu(D^{*},X)+U$. The latter $D$ can be viewed as a
special case of interval regressor \citep{Manski:2002um}, namely, $[D^{l},D^{u}]$
that satisfy
\begin{align}
D^{*} & \in[D^{l},D^{u}],\text{ a.s.}\label{eq:interval_D}
\end{align}
This latter case is related to interval data mentioned in Example \ref{ex:no_sel}, although we considered interval variable $D$ rather than interval-observed $V$ here.

Consider \eqref{eq:censored_D}--\eqref{eq:censored_D2}. Then, we
have a set-valued CF as
\begin{align*}
\boldsymbol{V}(d,x,z;\pi^{*}) & =\begin{cases}
[-\pi^{*}(z,x),\infty) & \text{if }d>0,\\
(-\infty,-\pi^{*}(z,x)] & \text{if }d=0.
\end{cases}
\end{align*}
Assume $E[U|Z,X,V]=E[U|X,V]$, which implies $E[U|D,X,V]=E[U|X,V]$. Then, it holds that $E[Y|D,X,V]=\mu(D,X)+\lambda(X,V)$
where $\lambda(x,v)\equiv E[U|X=x,V=v]$. Accordingly, define
\begin{align*}
\eY(\eta,D,X,Z;\mu,F,\pi^{*}) & \equiv\text{cl}\big\{ y\in\cY:y=\sfo(D,X)+\lambda(X,V)+\eta,V\in\Sel(\eV(D,X,Z;\pi^{*}))\big\}.
\end{align*}
Then, by Theorem \ref{thm:mean_id},
\begin{align*}
\mu(d,x)+\lambda_{l}(d,x,z) & \le E[Y|D=d,X=x,Z=z]\le\mu(d,x)+\lambda_{u}(d,x,z),
\end{align*}
where
\begin{align*}
\lambda_{l}(d,x,z) & \equiv\inf_{v\in\boldsymbol{V}(d,x,z;\pi^{*})}\lambda(x,v),\qquad\lambda_{u}(d,x,z)\equiv\sup_{v\in\boldsymbol{V}(d,x,z;\pi^{*})}\lambda(x,v).
\end{align*}

Next, consider \eqref{eq:censored_D2}--\eqref{eq:interval_D}. This
example involves two set-valued control functions:
\begin{align*}
\boldsymbol{V}(D^{l},D^{u},X,Z;\pi^{*}) & \equiv[D^{l}-\pi^{*}(Z,X),D^{u}-\pi^{*}(Z,X)],\\
\boldsymbol{D}^{*}(D^{l},D^{u}) & \equiv[D^{l},D^{u}].
\end{align*}
Note that \eqref{eq:interval_D} implies $V\in\boldsymbol{V}$ and
$D\in\boldsymbol{D}^{*}$ a.s. Assume that $E[U|D^{l},D^{u},X,V]=E[U|X,V]$
and $\mu(d^{*},x)$ is weakly increasing in $d^{*}$.\footnote{Analogous assumptions appear in \cite{Manski:2002um}, although
they do not consider control variables or a treatment selection process.} Then, define
\begin{align*}
\eY(\eta,D^{l},D^{u},X,Z;\mu,F,\pi^{*}) & \equiv\text{cl}\big\{ y\in\cY:y\in\mu(D^{*},X)+\lambda(X,V)+\eta,D^{*}\in\Sel(\boldsymbol{D}^{*}),V\in\Sel(\eV)\big\}.
\end{align*}
Then, by slightly modifying Theorem \ref{thm:mean_id} and its proof, one can show
that
\begin{align*}
\mu(d^{l},x)+\lambda_{l}(d^{l},d^{u},x,z) & \le E[Y|D^{l}=d^{l},D^{u}=d^{u},X=x,Z=z]\le\mu(d^{u},x)+\lambda_{u}(d^{l},d^{u},x,z),
\end{align*}
where
\begin{align*}
\lambda_{l}(d^{l},d^{u},x,z) & \equiv\inf_{v\in\boldsymbol{V}(d^{l},d^{u},x,z)}\lambda(x,v),\qquad\lambda_{u}(d^{l},d^{u},x,z)\equiv\sup_{v\in\boldsymbol{V}(d^{l},d^{u},x,z)}\lambda(x,v).
\end{align*}
\cite{Manski:2002um} focus on characterizing bounds on the conditional
mean $E[Y|D^{*},X]$, which is related to our parameter of interest,
$\mu$. The main difference is that we assume a selection process
and utilize control variables under different identifying assumptions.

\section{Proofs}
\subsection{Proofs of Theorems \ref{thm:identification}, \ref{thm:mean_id}, and \ref{thm:functional}}
\begin{proof}[Proof of Theorem \ref{thm:identification}]
By Assumptions \ref{as:cf_dist}, one may represent the outcome as
$Y=\sfo(D,X,U)=\sfo(D,X,Q(\eta;X,V)).$ By Assumption \ref{as:cf_set}, $V$ is a measurable selection of $\eV$, and therefore $Y$ is a measurable selection of $\eY(\eta,D,X,\eV;\sfo,F)$. 
Therefore, the model's prediction is summarized by 
\begin{align}
	Y\in \eY(\eta,D,X,\eV;\sfo,F),~a.s.
\end{align}
By Assumption \ref{as:cf_set} (ii), $\eV$ is a function of $(D,X,Z)$. Hence, one may condition on $(D,X,\eV)$ by conditioning on $(D,X,Z)$. By Artstein's inequality \citep[see][Theorem A.1.]{Molinari:2020un}, the distribution  $P_0(A|D,X,Z)$ is the conditional law of a measurable selection of $\eY$ if and only if 
\begin{align}
	P_0(A|D,X,Z)\ge \contf(A|D,X,Z),~\forall A\in\cF(\mathbb R^{d_Y}).
\end{align}
This ensures the representation of the sharp identification region by the inequalities above.
\end{proof}

\begin{proof}[Proof of Theorem \ref{thm:mean_id}]
Let $\mathfrak B\equiv \sigma(D,X,Z)$ be the $\sigma$-algebra generated by $(D,X,Z)$. By Assumptions \ref{as:cf_set} and \ref{as:cf_mean}, we may represent the model's set-valued prediction by $\eY$ in \eqref{eq:defeY_mean}, the random set of outcomes $Y=\sfo(D,X)+\lambda_D(X,V)+\eta_D$, where $\eta=(\eta_d,d\in\cD)$ is conditionally mean independent of $D$.
$\eY$ is integrable because its measurable selection $Y$ is assumed to be integrable.
Because of $Y\in \Sel^1(\eY)$, the model's prediction on the conditional mean is summarized by
\begin{align}
	E_{P_0}[Y|\mathfrak B]\in \mathbb E[\eY(\eta,D,X,Z;\sfo,F)|\mathfrak B],~a.s.,\label{eq:mean_id1}
\end{align}
where the right-hand side is the conditional Aumann expectation of $\eY$. Let $b\in\{-1,1\}.$ Then, \eqref{eq:mean_id1} is equivalent to
\begin{align}
	bE_{P_0}[Y|\mathfrak B]\le s(b, \mathbb E[\eY(\eta,D,X,Z;\sfo,F)|\mathfrak B]),~\text{ for all }b\in \{-1,1\}, \label{eq:mean_id2}
\end{align} 
where $s(b,K)\equiv \sup_{x\in K}bx$ is the support function of $K$.

Now, we use the convexification property \citep[][Theorem A.2.]{Molinari:2020un} of the Aumann expectation of random closed sets to exchange the support function and expectation in \eqref{eq:mean_id2}. Technically, this property relies on the probability space used to define $\eY$ \citep[][Sec. 2.1.2]{Molchanov:2017th}. Hence, we proceed as follows.
Let $\Omega=\mathbb R^{d_U}\times  \mathbb R^{d_D}\times\mathbb R^{d_X}\times  \mathbb R^{d_Z}$ be the sample space, and let $\mathfrak F=\mathfrak F_{\mathbb R^{d_U}}\otimes \mathfrak F_{ \mathbb R^{d_D}}\otimes \mathfrak F_{\mathbb R^{d_X}}\otimes \mathfrak F_{\mathbb R^{d_Z}}$ be the product $\sigma$-algebra, where $\mathfrak F_E$ is the Borel $\sigma$-algebra over $E$.
Let $\mathbb F$ be a probability measure on $(\Omega,\mathfrak F)$. Measurable maps $(\eta,D,X,Z)$ are defined on this space.  Consider a measurable rectangle $A=A_\eta\times A_{D,X,Z}$, where $A_\eta\subset\mathbb R^{d_U}$ and $A_{D,X,Z}\subset \mathbb R^{d_D}\times\mathbb R^{d_X}\times  \mathbb R^{d_Z}$. Then, $\mathbb F(A|\mathfrak B)=F_\eta(A_\eta)$.
By Assumption \ref{as:Fuv} and the construction of $\eta$, $F_\eta$ is atomless. Since any $A\in\mathfrak F$ can be approximated by a countable union of measurable rectangles, conclude that $\mathbb F$ is atomless over $\mathfrak B$.\footnote{An event $A'\in\mathfrak B$ is called a $\mathfrak B$-atom if $\mathbb F(0<\mathbb F(A|\mathfrak B)<\mathbb F(A'|\mathfrak B))=0$ for all $A\subset A'$ such that $A\in\mathfrak F$.} 

Therefore, we can apply the convexification theorem \citep[][Theorem A.2.]{Molinari:2020un}, which yields that $\mathbb E[\eY(\eta,D,X,Z;\sfo,F)|\mathfrak B]$ is convex and
\begin{align}
	s(b, \mathbb E[\eY(\eta,D,X,Z;\sfo,F)|\mathfrak B])=E[s(b,\eY(\eta,D,X,Z;\sfo,F))|\mathfrak B],~b\in \{1,-1\}.\label{eq:mean_id3}
\end{align}
For $b=1$,
\begin{align}
E[s(b,\eY(\eta,D,X,Z;\sfo,F))|\mathfrak B]&=E[\sup_{Y\in \Sel(\eY(\eta,D,X,Z;\sfo,F))}Y|\mathfrak B]\notag\\	
&=\mu(d,x)+E[\sup_{V\in \Sel(\eV(D,X,Z;\sfs))}\lambda_d(X,V)|\mathfrak B]\notag\\
&=\mu(d,x)+\sup_{v\in \eV(d,x,z;\sfs)}\lambda_d(x,v).\label{eq:mean_id4}
\end{align}
By \eqref{eq:mean_id2}-\eqref{eq:mean_id4}, $E_{P_0}[Y|D=d,X=x,Z=z]\le\mu(d,x)+\sup_{v\in \eV(d,x,z;\sfs)}\lambda_d(x,v)$. For $b=-1$, the argument is similar.
\end{proof}

\begin{proof}[Proof of Theorem \ref{thm:functional}]
Let $\theta\in\Theta_I(P_0)$. Let $\eW\equiv \{X\}\times \eV$.
Define
\begin{align}
    \mathfrak K_I(d;\theta)\equiv \{\kappa(d)\in\mathbb R:\kappa(d)=\int\varphi(\sfo(d,x,Q(\eta;w))d\eta dF_W(w),~W\in \Sel(\eW)\}.
\end{align}
This set collects the values of $\kappa(d)$ compatible with $\theta$ for some measurable selection $W$ of $\eW$. The sharp identification region for $\kappa(d)$ is
\begin{align}
    \mathfrak K_I(d)=\bigcup_{\theta\in\Theta_I(P_0)} \mathfrak K_I(d;\theta).
\end{align}
Hence, for the conclusion of the theorem, it suffices to show $ \mathfrak K_I(d;\theta)=[\underline{\kappa}(d;\theta), \overline{\kappa}(d;\theta)]$.

For this, we represent $\mathfrak K_I(d;\theta)$ as the Aumann expectation of a random set and apply the convexification theorem. Define
\begin{align}
\eK(d;\theta)\equiv \Big\{r\in \mathbb R:r=\int\varphi(\sfo(d,x,Q(\eta;W))d\eta,~W\in\Sel(\eW)\Big\}.    
\end{align}
Then, by construction, $\mathfrak K_I(d;\theta)$ is the Aumann expectation of $\eK(d;\theta)$. Under the assumption that the underlying probability space is non-atomic, we may apply the convexification theorem \citep[][Theorem A.2.]{Molinari:2020un}. It ensures $\mathfrak K_I(d;\theta)=\mathbb E[\eK(d;\theta)]$ is a convex closed set. Since $\varphi$ is bounded, $\mathfrak K_I(d;\theta)$ is a bounded closed interval.
Again, by Theorem A.2. of \cite{Molinari:2020un}, its upper bound is
\begin{multline}
 s(1,\eK(d;\theta))=s(1,\mathbb E[\eK(d;\theta)])=E[s(1,\eK(d;\theta))]\\
 =E[\sup_{w\in \eW}\int\varphi(\sfo(d,x,Q(\eta;W))d\eta]=E[\sup_{v\in \eV}\int\varphi(\sfo(d,x,Q(\eta;X,v))]=\overline{\kappa}(d;\theta),
\end{multline}
where we used $\eW=\{X\}\times \eV$. The argument for the lower bound is similar and is omitted.
\end{proof}

\subsection{Lemmas}
\begin{lemma}\label{lem:rdset}
	Suppose $\sfo$ is a measurable function. Then,
$\eY(\eta,D,X,\eV;\sfo,F)$ is a random closed set.	
\end{lemma}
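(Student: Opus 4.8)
The plan is to recognize $\eY$ as the closed image of the set-valued control function $\eV$ under a jointly measurable map, and then to invoke the measurable-graph characterization of random closed sets together with the measurable projection theorem, working throughout on the complete probability space $(\Omega,\mathfrak F,P)$ with $\cV,\cY$ (subsets of) Polish spaces. First I would define the composite map $g:\Omega\times\cV\to\cY$ by $g(\omega,v)\equiv\sfo\big(D(\omega),X(\omega),Q(\eta(\omega);X(\omega),v)\big)$ and verify it is $\mathfrak F\otimes\mathcal B(\cV)$-measurable. This is routine: $(\omega,v)\mapsto(\eta(\omega),X(\omega),v)$ is product-measurable, so composing with the measurable $Q$ shows $(\omega,v)\mapsto Q(\eta(\omega);X(\omega),v)$ is product-measurable, and composing once more with $\sfo$ — this is exactly where the hypothesis that $\sfo$ is measurable enters — yields joint measurability of $g$.

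Next I would reduce the definition of $\eY$, which is phrased through selections $V\in\Sel(\eV)$, to the image of the whole fiber. Concretely, I would show that for $P$-almost every $\omega$, $\{V(\omega):V\in\Sel(\eV)\}=\eV(\omega)$, so that $\{g(\omega,V(\omega)):V\in\Sel(\eV)\}=g(\omega,\eV(\omega))$ and hence $\eY(\omega)=\cl\,g(\omega,\eV(\omega))$. The inclusion $\subseteq$ is immediate; for $\supseteq$, given $v\in\eV(\omega_0)$ I would produce a measurable selection passing through $v$ at $\omega_0$ — e.g.\ via a measurable selection of the metric-projection multifunction $\omega\mapsto\{u\in\eV(\omega):\|u-v\|=d(v,\eV(\omega))\}$, which is itself a random closed set (the intersection of $\eV$ with a ball of measurable radius) and reduces to $\{v\}$ at $\omega_0$. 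This fiber-filling identity is standard but requires a little care.

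Finally I would establish measurability. The graph of $\eV$ lies in $\mathfrak F\otimes\mathcal B(\cV)$ by the Fundamental Measurability Theorem (using completeness of $\mathfrak F$), so
\[
\{(\omega,v,y):v\in\eV(\omega),\ g(\omega,v)=y\}\in\mathfrak F\otimes\mathcal B(\cV)\otimes\mathcal B(\cY),
\]
and its projection $\{(\omega,y):y\in g(\omega,\eV(\omega))\}$ onto $\Omega\times\cY$ lies in $\mathfrak F\otimes\mathcal B(\cY)$ by the measurable projection theorem. Thus $\omega\mapsto g(\omega,\eV(\omega))$ has measurable graph; since for every open $G$ the hitting event $\{\omega:\eY(\omega)\cap G\neq\emptyset\}=\{\omega:g(\omega,\eV(\omega))\cap G\neq\emptyset\}$ is the projection onto $\Omega$ of the measurable intersection of that graph with $\Omega\times G$, it is measurable. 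Weak measurability of a closed-valued map into a Polish space is equivalent to being a random closed set, which gives the claim.

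I expect the main obstacle to be the \emph{only}-measurability of $\sfo$: because $\sfo$ need not be continuous, the naive route — push a Castaing representation $\{V_n\}$ of $\eV$ through $g$ and take $\cl\{g(\cdot,V_n)\}$ — fails, since $g(\omega,\cdot)$ need not send a dense subset of $\eV(\omega)$ to a dense subset of its image, so $\cl\{g(\omega,V_n(\omega))\}$ can be strictly smaller than $\eY(\omega)$. This is precisely why the graph-measurability plus projection machinery (and the essential use of completeness of $(\Omega,\mathfrak F,P)$), together with the fiber-filling identity of the second step, is needed rather than an elementary image-of-selections argument.
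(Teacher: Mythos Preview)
Your argument is correct and takes a genuinely different route from the paper. The paper proceeds exactly via the Castaing-representation approach you flag as problematic: it takes a Castaing representation $\{V_n\}$ of $\eV$, pushes it forward to $\upsilon_n=\sfo(D,X,Q(\eta;X,V_n))$, and asserts $\eY=\cl\{\upsilon_n:n\ge1\}$ by an auxiliary lemma (Lemma~\ref{lem:castaing}). That lemma's proof does not simply appeal to density of $\{V_n(\omega)\}$ in $\eV(\omega)$---which, as you note, would be insufficient without continuity of $\sfo$---but instead attempts to build selections of $\eV$ whose images land in prescribed balls in $\cY$, by intersecting $\eV$ with preimage sets $f(\omega,\cdot)^{-1}(B_{2^{-k}}(y_n))$ and selecting from those. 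Your projection-theorem argument sidesteps this construction entirely: once $g$ is jointly measurable and $\mathrm{Gr}(\eV)\in\mathfrak F\otimes\mathcal B(\cV)$, measurable projection delivers weak measurability of $\omega\mapsto g(\omega,\eV(\omega))$ directly, and closure preserves it. The trade-off is that the paper's route, if made rigorous, yields an explicit countable family of measurable selections of $\eY$ (useful downstream, e.g.\ for distance-function or support-function computations), whereas your approach establishes measurability abstractly and relies essentially on completeness of $(\Omega,\mathfrak F,P)$ for the projection step. Your fiber-filling identity $\{V(\omega):V\in\Sel(\eV)\}=\eV(\omega)$ and the reduction $\eY(\omega)=\cl\,g(\omega,\eV(\omega))$ are handled carefully; overall the argument is cleaner and more robust to the mere measurability of $\sfo$.
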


\begin{proof}
$\eY(\eta,D,X,\eV;\sfo,F)$ being closed is immediate from the definition. We show its measurability below. Write $\eY(\eta(\omega),D(\omega),X(\omega),\eV(\omega);\sfo,F)$ as $\eY(\omega)$ for short.
Since $\eV$ is a random closed set, there is a sequence $\{V_n\}$ such that $\eV=\text{cl}(\{V_n,n\ge 1\})$ by Theorem 1.3.3 in \cite{Molchanov:2017th}. Let  $\upsilon_n(\omega)=\sfo(D(\omega),X(\omega),Q(\eta(\omega),V_n(\omega)))$ and note that $\eY=\text{cl}(\{\upsilon_n,n\ge 1\})$ by Lemma \ref{lem:castaing}.
Then, for any $x\in\cY$, the distance function
\begin{align}
	\rho(x,\eY(\omega))=\inf\{\|x-y\|,y\in \eY(\omega)\}=\inf\{\|x-\upsilon_n(\omega)\|,n\ge 1\}
\end{align}
is a random variable in $[0,\infty]$. Again, by Theorem 1.3.3 in \cite{Molchanov:2017th}, the conclusion follows.
\end{proof}

Consider a random closed set $\eX$ that is nonempty almost surely. A countable family of selections $\xi_n\in\Sel(\eX),n\ge 1$ is called the \emph{Castaing representation} of $\eX$ if $\eX=\text{cl}(\{\xi_n,n\ge1\})$. Such representation exists for any random closed set \citep{Molchanov:2017th}.
\begin{lemma}\label{lem:castaing}
Let $\eX$ be a random closed set, and let $\{\xi_n,n\ge 1\}$ be its Castaing representation.  For each $\omega\in\Omega,$ let $\eY(\omega)\equiv\text{cl}\{y\in\cY:y=f(\omega,\xi(\omega)),\xi\in\Sel(\eX)\}$ for a measurable map $f:\Omega\times\cX\to\cY$. Then $\eY$ is a random closed set with a Castaing representation $\{\upsilon_n\}$ with $\upsilon_n(\omega)=f(\omega,\xi_n(\omega))$ for $n\ge 1$.
\end{lemma}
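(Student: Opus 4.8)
The plan is to reduce the whole statement to a single pointwise set identity and then read off measurability from the distance-function characterization of random closed sets. First I would record that each $\upsilon_n$ is a genuine measurable selection of $\eY$: since $f$ is jointly measurable and $\xi_n:\Omega\to\cX$ is measurable, the map $\omega\mapsto(\omega,\xi_n(\omega))$ is measurable, hence so is its composition $\upsilon_n=f(\cdot,\xi_n(\cdot))$; and because $\xi_n\in\Sel(\eX)$, the value $\upsilon_n(\omega)=f(\omega,\xi_n(\omega))$ lies in $\eY(\omega)$ by the very definition of $\eY$. The lemma then follows once I establish, for every $\omega\in\Omega$, the identity
\[
\eY(\omega)=\text{cl}\{\upsilon_n(\omega):n\ge 1\};
\]
call this $(\star)$.

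Next I would prove $(\star)$ by two inclusions. The inclusion $\text{cl}\{\upsilon_n(\omega):n\ge1\}\subseteq\eY(\omega)$ is immediate, since $\eY(\omega)$ is closed and contains every $\upsilon_n(\omega)$. For the reverse inclusion, fix $\omega$ and an arbitrary $\xi\in\Sel(\eX)$. Because $\{\xi_n\}$ is a Castaing representation, $\eX(\omega)=\text{cl}\{\xi_n(\omega):n\ge1\}$, so $\xi(\omega)\in\eX(\omega)$ is the limit of some subsequence $\xi_{n_k}(\omega)\to\xi(\omega)$. I would then conclude $f(\omega,\xi_{n_k}(\omega))\to f(\omega,\xi(\omega))$, which places $f(\omega,\xi(\omega))$ in $\text{cl}\{\upsilon_n(\omega):n\ge1\}$; taking the closure over all selections $\xi$ gives $\eY(\omega)\subseteq\text{cl}\{\upsilon_n(\omega):n\ge1\}$. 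This last step is the crux and the main obstacle: passing the limit through $f$ requires $f(\omega,\cdot)$ to be continuous along the approximating sequence. Joint measurability of $f$ alone is not enough — if $f(\omega,\cdot)$ is merely measurable, the image of the dense set $\{\xi_n(\omega)\}$ need not be dense in $f(\omega,\eX(\omega))$, and $(\star)$ can genuinely fail. I would therefore carry out this step under the operative hypothesis that $f$ is a Carath\'eodory map (measurable in $\omega$, continuous in the selection argument), which is precisely the structure supplied by the applications, where $v\mapsto\mu(D,X,Q(\eta;X,v))$ is continuous in $v$; I would flag explicitly that continuity in the second argument is what makes the density-pushing step succeed.

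Finally, granting $(\star)$, I would close via Theorem 1.3.3 of \cite{Molchanov:2017th}. For each $y\in\cY$ the distance function satisfies
\[
\rho(y,\eY(\omega))=\rho\big(y,\text{cl}\{\upsilon_n(\omega)\}\big)=\inf_{n\ge1}\|y-\upsilon_n(\omega)\|,
\]
which is a countable infimum of measurable functions of $\omega$ (each $\upsilon_n$ being measurable) and hence measurable. By the cited characterization this shows $\eY$ is a random closed set, and by construction $\{\upsilon_n\}$ with $\upsilon_n(\omega)=f(\omega,\xi_n(\omega))$ is a Castaing representation of it. Everything outside the continuity used in $(\star)$ is routine bookkeeping about measurable compositions and countable infima, so I expect essentially all the difficulty to concentrate in justifying the limit interchange in the second paragraph.
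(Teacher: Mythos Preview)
Your argument is correct under the Carath\'eodory hypothesis you flag, but it follows a different route from the paper's. You push the density of $\{\xi_n(\omega)\}$ in $\eX(\omega)$ forward through $f(\omega,\cdot)$ via sequential limits, which is why continuity in the second argument becomes the crux. The paper instead works backward from the target space: it fixes a countable dense set $\{y_n\}$ in $\cY$, forms the level-type sets $C_{k,n}(\omega)=\{x\in\cX:f(\omega,x)\in B_{2^{-k}}(y_n)\}$, intersects them with $\eX$ to obtain auxiliary random closed sets $\eX_{k,n}$, and then argues that for each $(k,n)$ some $\xi_m$ from the original Castaing representation already selects from $\eX_{k,n}$, so that $\upsilon_{k,n}=f(\omega,\xi_m)$ is within $2^{-k+1}$ of any given $y\in\eY(\omega)$ by a triangle-inequality step. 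Your approach is more elementary and makes the regularity requirement on $f$ explicit; the paper's pull-back construction is closer in spirit to the standard proof of the Fundamental Selection Theorem in \cite{Molchanov:2017th} and avoids passing limits through $f$, though it still implicitly needs enough regularity for $\eX\cap C_{k,n}$ to be a bona fide random closed set. Both routes ultimately finish with the same distance-function criterion from Theorem 1.3.3 of \cite{Molchanov:2017th}.
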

\begin{proof}
Le $\{y_n,n\ge 1\}$ be an enumeration of a countable dense set in $\cY$. For each $n,k\ge 1$ and $\omega\in\Omega$, let $C_{k,n}(\omega)=\{x\in \cX:f(\omega,x)\cap B_{2^{-k}}(y_n)\ne\emptyset\}$. Let $\eX_{k,n}\equiv\eX\cap C_{k,n}$ if the intersection is nonempty and let $\eX_{k,n}=\eX$ otherwise. Note that $\eX_{k,n}$ itself is a random closed set. For each $k,n$, there is $m\in\mathbb N$ such that $\xi_m$ is a measurable selection of $\eX_{k,n}$.
For each $\omega$ with $y\in \eY(\omega)$, we have $y\in B_{2^{-k}}(y_n)$ for some $k,n$, and
\begin{align}
	\|y-\upsilon_{k,n}\|\le \|y-y_n\|+\|y_n-\upsilon_{k,n}\|\le 2^{-k+1},
\end{align}
where $\upsilon_{k,n}=f(\omega,\xi_m)$.
Therefore, the conclusion follows.
\end{proof}

\subsection{Proofs of Corollaries}

\begin{proof}[Proof of Corollary \ref{cor:ex1}]
We show Assumptions \ref{as:cf_set}-\ref{as:Fuv} and invoke Theorem \ref{thm:mean_id}.	First, define 
\begin{align}
\eV(D,Z,X;\sfs) & =\begin{cases}
[0,\sfs(Z,X)] & \text{if }D=1\\{}
[\sfs(Z,X),1] & \text{if }D=0.
\end{cases}
\end{align}
Then Assumption \ref{as:cf_set} (i) holds by the selection equation \eqref{eq:roy_sel1}. Assumption \ref{as:cf_set} (ii) holds because $\eV$ is a function of $(D,Z,X)$ and $\sfs$. Assumptions \ref{as:cf_mean}-\ref{as:Fuv} hold by hypothesis. By Theorem \ref{thm:mean_id}, each $\theta \equiv(\sfo,F,\sfs)$ in the sharp identified set satisfies
\begin{align*}
\sfo(d,x)+\lambda_L(d,x,z)\le E_{P_0}[Y|D=d,X=x,Z=z]\le \sfo(d,x)+\lambda_U(d,x,z).
\end{align*}
Rearranging them yields
\begin{align*}
E_{P_0}[Y|D=d,X=x,Z=z]-\lambda_U(d,x,z)\le 	\sfo(d,x)\le  E_{P_0}[Y|D=d,X=x,Z=z]-\lambda_L(d,x,z).
\end{align*}
Note that $\sfo$ does not depend on $z$. Taking the supremum of the lower bounds and taking the infimum of the upper bounds over $z\in\cZ$ yields the desired result.
\end{proof}

\begin{proof}[Proof of Corollary \ref{cor:ex3}]
The main argument is essentially the same as the proof of Corollary \ref{cor:ex1}. Hence, we omit it. Here, we derive \eqref{eq:ex_strategic1}. By Theorem \ref{thm:mean_id},
\begin{align}
	\lambda_U(d,x,z)=\sup_{v\in \eV(d,x,z;\sfs)}\lambda_d(x,v),
\end{align}
where $v=(v_1,v_2,v_s)$. By \eqref{eq:ex_strategic_max}, the identifying restrictions in \eqref{eq:ex_strategic1} follow.  One can show \eqref{eq:ex_strategic2} by a similar argument. 

\end{proof}

\begin{proof}[Proof of Corollary \ref{cor:ex4}]
We first note that, conditional on $(X,U_1,V_1,V_2)$, the endogenous variables $(Y_1,D_1,D_2)$ are a function of the instruments determined by the following triangular system:
\begin{align*}
D_{2} & =1\{\sfs_{2}(Y_{1},D_{1},Z_{2},X)\ge V_{2}\}\\
Y_{1} & =1\{\sfo_{1}(D_{1},X)\ge U_{1}\}\\
D_{1} & =1\{\sfs_{1}(Z_{1},X)\ge V_{1}\}.
\end{align*}
Therefore, Assumption \ref{as:cf_dist}	follows from $U_2\perp(Z_1,Z_2)|X,U_1,V_1,V_2$, which in turn is implied by $(U_1,U_2,V_1,V_2)\perp(Z_1,Z_2)|X$. The sets $\eV_{U_1},\eV_{1},\eV_{2}$ are defined by inverting the triangular system above with respect to $(U_1,V_1,V_2)$, which ensures Assumption \ref{as:cf_set} (i). Assumption \ref{as:cf_set} (ii) also holds because these sets are functions of $(D,Z,X)$ and $\sfs.$
  
  By Theorem \ref{thm:identification}, $\theta\in\Theta_I(P_0)$ if and only if
\begin{align}
	P_0(Y=1|D=d,X=x,Z=z)&\ge  \contf(\{1\}|D=d,X=x,Z=z)\label{eq:proof_ex4_1}\\
	P_0(Y=0|D=d,X=x,Z=z)&\ge  \contf(\{0\}|D=d,X=x,Z=z)\label{eq:proof_ex4_2}.
\end{align}
By \eqref{eq:eYrep} (and as argued in the text),
\begin{align}
 \contf(\{1\}|D=d,X=x,Z=z)&=\inf_{v\in\eV(d,x,z;\sfs)}H(d,x,v)\label{eq:proof_ex4_3}\\
  \contf(\{0\}|D=d,X=x,Z=z)&=1-\sup_{v\in\eV(d,x,z;\sfs)}H(d,x,v)\label{eq:proof_ex4_4}
 \end{align}
The identifying restriction \eqref{eq:binary_artstein1} follows from \eqref{eq:proof_ex4_1}-\eqref{eq:proof_ex4_4} and noting that $P_0(Y=0|D=d,X=x,Z=z)=1-P_0(Y=1|D=d,X=x,Z=z).$ 

The identifying restrictions \eqref{eq:binary_artstein2}-\eqref{eq:binary_artstein3} follow from applying the same argument sequentially. For example, letting $Y\equiv D_2$, $D\equiv (Y_1,D_1)$, $U\equiv V_2$, and $V\equiv (U_1,V_1)$ and applying the argument above yields \eqref{eq:binary_artstein2}.

\end{proof}

\begin{proof}[Proof of Corollary \ref{cor:ex2}]
We show Assumptions \ref{as:cf_dist}-\ref{as:cf_set} and invoke Theorem \ref{thm:identification}.	Assumption \ref{as:cf_dist} holds because $D$ is a function of $(Z,X,V)$ in \eqref{eq:nonmono_sel}, and $U\perp Z|X,V$.   Define 
\begin{align}
\eV(D,Z,X;\sfs)  =\begin{cases}
\left\{ v\in\mathbb R^2:\tilde \sfs(Z,X)+(1-Z)v_{0}+Zv_{1}\ge0\right\}  & \text{if }D=1\\
\left\{ v\in\mathbb R^2:\tilde \sfs(Z,X)+(1-Z)v_{0}+Zv_{1}\le 0\right\}  & \text{if }D=0,
\end{cases}
\end{align}
where $\tilde{\sfs}(z,x)=\sfs(0,x)+z(\sfs(1,x)-\sfs(0,x)).$
 Assumption \ref{as:cf_set} (i) holds by \eqref{eq:Roy2} and \eqref{eq:nonmono_sel}. Also,  Assumption \ref{as:cf_set} (ii) holds because $\eV$ is a function of $(D,Z,X)$ and $\sfs$.
By Theorem \ref{thm:identification}, $\theta$ is in the sharp identified set if and only if $P_0(A|D,X,Z)\ge \contf(A|D,X,Z)$ holds.
 
For the main result, it remains to show \eqref{eq:multinom3}. Let $A\subset \{1,\dots J\}$ and write the model's prediction as $\eY$ for short. Then,
\begin{align*}
	&\{\eY\subseteq A\}= \bigcup_{B\subseteq A} \{\eY=B\}\\
&=	\bigcup_{B\subseteq A}\Big(\bigcap_{j_\ell\in B}\Big\{\sfo_{j_\ell}(D,X)\ge \inf_{V\in \Sel(\eV)}\Big(\max_{k\ne j_\ell}[\sfo_k(D,X)+Q_k(\eta;X,V)]-Q_{j_\ell}(\eta;X,V)\Big)\Big\}\Big)\\
&\qquad \cap\Big(\bigcap_{j_m\in A\setminus B}\Big\{\sfo_{j_n}(D,X)< \inf_{V\in \Sel(\eV)}\Big(\max_{k\ne j_m}[\sfo_k(D,X)+Q_k(\eta;X,V)]-Q_{j_m}(\eta;X,V)\Big)\Big\}\Big).
\end{align*}
 Conditioning on $(D,X,Z)$ and evaluating the probability on the right-hand side by $F_\eta$ yields \eqref{eq:multinom3}.

\end{proof}

\section{Empirical Application}\label{app_sec:application}
Throughout, the structural parameter for this application is $\theta \equiv(\mu, c_L,c_U,g,\sfs,F,F_{\tilde X,V})$. 

\subsection{Target Objects}
Let $X\equiv(X_{HIV},\tilde X)$. We show that the ASF and the structural switching probability can be expressed as functions of the structural parameter $\theta$. We write $\mu(d,x)=\mu(d,x_{HIV},\tilde x)$.

\bigskip
\noindent
\textbf{Average Structural Function:} 
	\begin{multline}
	\varphi(\theta)=\text{ASF}(d,x_{HIV})=\int 3\times [F(c_U-\mu(d,x_{HIV},\tilde x)-g(v))-F(c_L-\mu(d,x_{HIV},\tilde x)-g(v))]\\
	+6\times [1-F(c_U-\mu(d,x_{HIV},\tilde x)-g(v))]dF_{\tilde X,V}(\tilde x,v).
	\end{multline}
Similarly, the average treatment effect $\text{ASF}(1,x_{HIV})-\text{ASF}(0,x_{HIV})$ can be expressed as a function of $\theta$.

\bigskip
\noindent
\textbf{Structural Switching Probability:} 
\begin{multline}
P\big(Y(0,x_{HIV})=0,\;Y(1,x_{HIV})>0\big) \\
= \int \Big[F\big(c_L-\mu(0,x_{HIV},\tilde x)-g(v)\big)
      -F\big(c_L-\mu(1,x_{HIV},\tilde x)-g(v)\big)\Big]_+
      \, dF_{(\tilde X,V)}(\tilde x,v).
\end{multline}

\subsection{Sharp Identifying Restrictions}
Let $F\equiv Q^{-1}$, and let
\begin{align*}
\contf(\{0\}|d,x,z)&=	F(c_L -\mu(d,x)-\sup_{v\in\eV(d,x,z)}g(v))\\
\capf(\{0\}|d,x,z)&=F(c_L-\mu(d,x)- \inf_{v\in\eV(d,x,z)}g(v))\\
\contf(\{6\}|d,x,z)&=1-F( c_U -\mu(d,x)-\inf_{v\in\eV(d,x,z)}g(v))\\
\capf(\{6\}|d,x,z)&=1-F(c_U -\mu(d,x)-\sup_{v\in\eV(d,x,z)}g(v)).
\end{align*}
The following proposition characterizes the sharp identification region. 
\begin{proposition}\label{prop:empirical}
Suppose Assumptions \ref{as:cf_dist}-\ref{as:cf_set} hold. Then,  $\theta \equiv(\sfo,c_L,c_U,F,\sfs)$ is in the sharp identification region $\Theta_I(P_0)$ if and only if
\begin{align}
&\contf(\{0\}|D,X,Z)\le P_0(Y=0|D,X,Z)\le \capf(\{0\}|D,X,Z)\label{eq:app_id1}\\
&\contf(\{6\}|D,X,Z)\le P_0(Y=6|D,X,Z)\le \capf(\{6\}|D,X,Z),~a.s.,\label{eq:app_id2}
\end{align}	
and $\sfs\in \mathsf \Pi_r(P_0)$.
\end{proposition}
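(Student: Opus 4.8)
The plan is to specialize Theorem~\ref{thm:identification} to this ordered-outcome model and then collapse the resulting finite system of Artstein inequalities onto a core determining class consisting of the two extreme singletons $\{0\}$ and $\{6\}$. First I would invoke Theorem~\ref{thm:identification}: under Assumptions~\ref{as:cf_dist}--\ref{as:cf_set}, $\theta\in\Theta_I(P_0)$ if and only if $\sfs\in\mathsf\Pi_r(P_0)$ and $P_0(Y\in A\mid D,X,Z)\ge\contf(A\mid D,X,Z)$ for every $A\in\cF(\cY)$. Since $\cY=\{0,3,6\}$ is finite this is a finite system, and by the duality $\capf(A)=1-\contf(A^{c})$ the restriction for $A$ and the restriction for $A^{c}$ combine into a single two-sided bound. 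It therefore suffices to prove that $\{0\},\{6\},\{0,3\},\{3,6\}$ is a core determining class and that the inequalities attached to these four sets are exactly \eqref{eq:app_id1}--\eqref{eq:app_id2}.

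Next I would read off the possible realizations of $\eY$. Because $U=g(V)+Q(\eta)$ with $\eV(D,X,Z;\sfs)$ an interval (by \eqref{eq:Roy1_eV}) and $g$ continuous, the index $\mu(D,X)+g(V)+Q(\eta)$ sweeps out an interval as $V$ runs over $\Sel(\eV)$; with ordered cutoffs $c_L<c_U$, $\eY$ is then always an interval of consecutive categories, i.e. one of $\{0\},\{3\},\{6\},\{0,3\},\{3,6\},\{0,3,6\}$. Tracking the thresholds in $\eta$ at which each category enters or leaves $\eY$ yields a dichotomy conditional on $(d,x,z)$: if the range of $g$ over $\eV$ is at most $c_U-c_L$ then $\eY=\{0,3,6\}$ occurs with conditional probability zero, while if it exceeds $c_U-c_L$ then $\eY=\{3\}$ occurs with conditional probability zero. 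Thus, writing $p_A$ for the $\eta$-probability that $\eY=A$ given $(d,x,z)$, the key fact is $p_{\{3\}}\,p_{\{0,3,6\}}=0$.

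With this in hand the reduction is routine. The sets $\emptyset$ and $\{0,3,6\}$ give trivial inequalities. For the disconnected set $\{0,6\}$, the fact that $\eY$ is always an interval of consecutive categories forces $\contf(\{0,6\})=p_{\{0\}}+p_{\{6\}}=\contf(\{0\})+\contf(\{6\})$, so its restriction is the sum of those for $\{0\}$ and $\{6\}$. For $\{3\}$, adding the restrictions for $\{0,3\}$ and $\{3,6\}$ and substituting the total-probability identity $P_0(Y\in\{0,3,6\}\mid d,x,z)=1$ gives $P_0(Y=3\mid d,x,z)\ge p_{\{3\}}-p_{\{0,3,6\}}$, which is the desired $\contf(\{3\})=p_{\{3\}}$ bound precisely because $p_{\{3\}}\,p_{\{0,3,6\}}=0$. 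Hence the four sets are core determining; rewriting the $\{0,3\}$ and $\{3,6\}$ restrictions through $\capf(\{6\})=1-\contf(\{0,3\})$ and $\capf(\{0\})=1-\contf(\{3,6\})$ turns them into the upper bounds of \eqref{eq:app_id1}--\eqref{eq:app_id2}, while the restrictions for $\{0\}$ and $\{6\}$ are the lower bounds.

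It remains to verify the displayed closed forms. Using $F=Q^{-1}$, the event $\{\eY\subseteq\{0\}\}$ equals $\{\eta\le F(c_L-\mu(d,x)-g(v))\ \forall v\in\eV\}=\{\eta\le F(c_L-\mu(d,x)-\sup_{v\in\eV}g(v))\}$, whose $\eta$-measure is the stated $\contf(\{0\}\mid d,x,z)$; the expressions for $\contf(\{6\})$, $\capf(\{0\})$ and $\capf(\{6\})$ follow by the same computation with the roles of $\inf$ and $\sup$ exchanged, with $c_U$ in place of $c_L$, and via $\capf(A)=1-\contf(A^{c})$. The crux of the argument is the core-determining reduction of the third step---in particular the redundancy of the middle category $\{3\}$---which rests entirely on the interval geometry of $\eY$ and on the mutual exclusivity of $\{\eY=\{3\}\}$ and $\{\eY=\{0,3,6\}\}$ conditional on $(d,x,z)$; the remaining steps are direct substitution.
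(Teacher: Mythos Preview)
Your proposal is correct and follows essentially the same route as the paper: invoke Theorem~\ref{thm:identification}, determine that $\eY$ can only be an interval of consecutive categories (the paper does this by an explicit two-case analysis on whether $c_L-\inf g\lessgtr c_U-\sup g$), then reduce the full Artstein system by showing $\{0,6\}$ is redundant via additivity and $\{3\}$ is redundant, and finally pass to the upper bounds through $\capf(A)=1-\contf(A^{c})$. Your redundancy argument for $\{3\}$---adding the $\{0,3\}$ and $\{3,6\}$ restrictions and invoking $p_{\{3\}}\,p_{\{0,3,6\}}=0$---is more explicit than the paper's, which simply notes the duality with $\{0,6\}$ and leaves the reader to check that $\capf(\{0,6\})=\capf(\{0\})+\capf(\{6\})$ whenever $\contf(\{3\})>0$; the underlying fact is the same.
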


\begin{proof}[\rm Proof of Proposition \ref{prop:empirical}]
As before, we write $\eY$ for $\eY(\eta,D,X,\eV;\sfo,F)$.
Let $\varepsilon=Q(\eta)$. Then,
\begin{align*}
0\in \eY &~~ \text{ if } 	\mu(D,X)+\inf_{V\in\Sel(\eV)}g(V) +\varepsilon\le c_L\\
3\in \eY &~~ \text{ if } 	c_L<\mu(D,X)+\sup_{V\in\Sel(\eV)}g(V) +\varepsilon \cap \mu(D,X)+\inf_{V\in\Sel(\eV)}g(V) +\varepsilon \le c_U\\
6\in \eY &~~ \text{ if }    c_U<\mu(D,X)+\sup_{V\in\Sel(\eV)}g(V) +\varepsilon,
\end{align*}
which is equivalent to
\begin{align*}
0\in \eY &~~ \text{ if } 	 \varepsilon\le c_L-\mu(D,X)-\inf_{V\in\Sel(\eV)}g(V)\\
3\in \eY &~~ \text{ if } 	c_L-\mu(D,X)-\sup_{V\in\Sel(\eV)}g(V) <\varepsilon \cap \varepsilon \le c_U- \mu(D,X)-\inf_{V\in\Sel(\eV)}g(V)\\
6\in \eY &~~ \text{ if }    c_U-\mu(D,X)-\sup_{V\in\Sel(\eV)}g(V)<\varepsilon.
\end{align*}
Hence, if  $c_L-\inf_{V\in\Sel(\eV)}g(V)<c_U- \sup_{V\in\Sel(\eV)}g(V)$,
\begin{align*}
\eY&=\begin{cases}
	\{6\}   & c_U -\mu(D,X)-\inf_{V\in\Sel(\eV)}g(V)<\varepsilon\\
	\{3,6\} &c_U-\mu(D,X)- \sup_{V\in\Sel(\eV)}g(V)<\varepsilon \le c_U -\mu(D,X)-\inf_{V\in\Sel(\eV)}g(V)\\
	\{3\}   & c_L-\mu(D,X)-\inf_{V\in\Sel(\eV)}g(V) < \varepsilon \le c_U-\mu(D,X)- \sup_{V\in\Sel(\eV)}g(V)\\
	\{0,3\} & c_L - \mu(D,X)-\sup_{V\in\Sel(\eV)}g(V)  <\varepsilon  \le c_L -\mu(D,X)-\inf_{V\in\Sel(\eV)}g(V)\\
	\{0\}   & \varepsilon \le c_L -\mu(D,X)-\sup_{V\in\Sel(\eV)} g(V).
\end{cases}\end{align*}
If $c_U- \sup_{V\in\Sel(\eV)}g(V)\le c_L-\inf_{V\in\Sel(\eV)}g(V)$,
\begin{align*}
\eY&=\begin{cases}
	\{6\}   & c_U -\mu(D,X)-\inf_{V\in\Sel(\eV)}g(V)<\varepsilon\\
	\{3,6\} &c_L-\mu(D,X)-\inf_{V\in\Sel(\eV)}g(V)<\varepsilon \le c_U -\mu(D,X)-\inf_{V\in\Sel(\eV)}g(V)\\
	\{0,3,6\}   & c_U-\mu(D,X)- \sup_{V\in\Sel(\eV)}g(V) < \varepsilon \le c_L-\mu(D,X)-\inf_{V\in\Sel(\eV)}g(V)\\
	\{0,3\} & c_L - \mu(D,X)-\sup_{V\in\Sel(\eV)}g(V)  <\varepsilon  \le c_U-\mu(D,X)- \sup_{V\in\Sel(\eV)}g(V)\\
	\{0\}   & \varepsilon \le c_L -\mu(D,X)-\sup_{V\in\Sel(\eV)} g(V).
\end{cases}	
\end{align*}
Let $F(\cdot)=Q^{-1}(\cdot)$. Then,
\begin{align}
\contf(\{0\}|d,x,z)&=	F(c_L -\mu(d,x)-\sup_{v\in\eV(d,x,z)} g(v))\label{eq:nu0}\\
\contf(\{3\}|d,x,z)&= [F(c_U-\mu(d,x)- \sup_{v\in\eV(d,x,z)}g(v)) \notag\\
&\qquad- F(c_L-\mu(d,x)-\inf_{v\in\eV(d,x,z)}g(v))]\vee 0\\
\contf(\{6\}|d,x,z)&=1-F( c_U -\mu(d,x)-\inf_{v\in\eV(d,x,z)}g(v))\label{eq:nu6}\\
\contf(\{0,3\}|d,x,z) &=F(c_U -\mu(d,x)-\sup_{v\in\eV(d,x,z)}g(v))\\
\contf(\{3,6\}|d,x,z) &=1-F(c_L-\mu(d,x)- \inf_{v\in\eV(d,x,z)}g(v))\\
\contf(\{0,6\}|d,x,z) &=\contf(\{0\}|x)+\contf(\{6\}|x).\label{eq:red}
\end{align}
Note that
\begin{align}
\capf(\{0\}|x)&=1-\contf(\{3,6\}|x)=F(c_L-\mu(d,x)- \inf_{v\in\eV(d,x,z)}g(v))\label{eq:nustar0}\\
\capf(\{6\}|x)&=1-\contf(\{0,3\}|x)=1-F(c_U -\mu(d,x)-\sup_{v\in\eV(d,x,z)}g(v)).\label{eq:nustar6}
\end{align}
Assumptions \ref{as:cf_dist}-\ref{as:cf_set} ensure that we can invoke Theorem \ref{thm:identification}.
By Theorem \ref{thm:identification}, $\theta$ is in the sharp identification region if and only if $P_0(A|d,x,z)\ge \contf(A|d,x,z),A\in\cC.$
The restriction $P_0(\{0,6\}|d,x,z)\ge \contf(\{0,6\}|d,x,z)$ is redundant because \eqref{eq:red} shows $\contf(\{0,6\}|d,x,z)$ is the sum of $\contf(\{0\}|x)$ and $\contf(\{6\}|x)$.
Note that, for any $A\in \cC$, $P_0(A|d,x,z)\ge \contf(A|d,x,z)$ is equivalent to $P_0(A^c|d,x,z)\le \capf(A^c|d,x,z)$.
Hence, $P_0(\{3\}|d,x,z)\ge \contf(\{3\}|d,x,z)$ is also redundant. This leaves us with the following restrictions
\begin{align*}
P_0(\{0\}|d,x,z)&\ge \contf(\{0\}|d,x,z)\\
P_0(\{6\}|d,x,z)&\ge \contf(\{6\}|d,x,z)\\
P_0(\{0,3\}|d,x,z)&\ge \contf(\{0,3\}|d,x,z)\\
P_0(\{3,6\}|d,x,z)&\ge \contf(\{3,6\}|d,x,z).
\end{align*}
Again,  for any $A\in \cC$, $P_0(A|d,x,z)\ge \contf(A|d,x,z)$ is equivalent to $P_0(A^c|d,x,z)\le \capf(A^c|d,x,z)$. Hence, the restrictions above are equivalent to
\begin{align*}
P_0(\{0\}|d,x,z)&\ge \contf(\{0\}|d,x,z)\\
P_0(\{6\}|d,x,z)&\ge \contf(\{6\}|d,x,z)\\
P_0(\{6\}|d,x,z)&\le \capf(\{6\}|d,x,z)\\
P_0(\{0\}|d,x,z)&\le \capf(\{0\}|d,x,z).
\end{align*}
This completes the proof.
\end{proof}

\subsection{Random Coefficient Selection Model and Parameter Restrictions}\label{ssec:rc_selection_model}
In the random coefficient selection model of Section \ref{sec:empirical_illustration}, let the latent variables have the Gaussian copula representation in 
\begin{align}
U = W_1, 
\quad 
V_0 = W_2, 
\quad 
V_1 = \frac{1}{\lambda}\log(\Phi(W_3)),~\lambda>0,\label{eq:gaussian_copula}
\end{align}
with $(W_1, W_2, W_3) \sim \mathcal{N}(0, \Sigma)$, a trivariate standard normal random variables with correlation matrix $\Sigma$. Below, let $\rho_{ij}=\text{Corr}(W_i,W_j)$.

Note that $W_1 \mid W_2 = w_2, W_3 = w_3 \sim \mathcal{N}(\mu_{1|23}, \sigma^2_{1|23})$ with $\mu_{1|23} = \Sigma_{12:13} \Sigma_{23}^{-1} \begin{bmatrix} w_2 \\ w_3 \end{bmatrix}$ and variance $\sigma^2_{1|23} = 1 - \Sigma_{12:13} \Sigma_{23}^{-1} \Sigma_{12:13}'$ where $\Sigma_{12:13} = (\rho_{12}, \rho_{13})$ and $\Sigma_{23}$ is the correlation matrix of $(W_2,W_3)$. This allows us to represent $U$ as in \eqref{eq:g_Q} with 
\begin{align}
g(v)=
\rho_{12}  v_0 + \rho_{13}  \Phi^{-1}(e^{\lambda v_1}),~~~\text{ and }~~~Q(\eta)=\sqrt{1-\rho^2_{12}-\rho^2_{13}}\Phi^{-1}(\eta).	
\end{align}

One of the additional restrictions considered in Section \ref{sec:empirical_illustration} is the monotone treatment selection (MTS) $E[Y(d)|D=1]\ge E[Y(d)|D=0]$. We impose this assumption as follows. Recall that treated units ($D=1$) are selected from higher $(W_2,W_3)$. Meanwhile, $U=W_1$ shifts the potential outcome vertically. We therefore impose MTS by ensuring that $E[U|W_2,W_3]$ is increasing in both $W_2$ and $W_3$. This is achieved by imposing 
\begin{align}
\rho_{12}-\rho_{23}\rho_{13}\ge0, ~~~\text{and}~~~ \rho_{13}-\rho_{23}\rho_{12}\ge0.
\end{align}

\subsection{Confidence Intervals}
We outline how we construct confidence intervals (CIs) using \cite{KaidoZhang:2024aay}. Toward this end, we write the variables $(D,X,Z)$ as $X$ in order to keep the notation below consistent with the one used in their paper.

Let $\varphi:\Theta\to\mathbb R$, and let $\varphi^*\in\mathbb R$.
Consider the following hypothesis:
\begin{align}
    H_0:\varphi(\theta)=\varphi^*,~~v.s.~~H_1:\varphi(\theta)\ne \varphi^*.
\end{align}
For example, $\varphi(\theta)$ can be the ASF as in the previous section.
A CI for the structural object $\varphi(\theta)$ is obtained by inverting a cross-fit likelihood-ratio (LR) test.  

Their test statistic is constructed as follows. First, split the sample $i=1,\dots,n$ into two subsamples $\mathcal{S}_0$ and $\mathcal{S}_1$ and compute
\begin{align}
	T_n(\varphi^*)\equiv\frac{\cL_0(\hat\theta_1)}{\cL_0(\hat\theta_0)}\equiv\frac{\prod_{i\in \mathcal{S}_0}q_{\hat\theta_1}(Y_i|X_i)}{\sup_{\theta\in \{\theta':\varphi(\theta')=\varphi^*\}}\prod_{i\in \mathcal{S}_0}q_{\theta}(Y_i|X_i)}.\label{eq:def_Tn}
\end{align} 
This is a split-sample likelihood ratio statistic, where a likelihood function (see below) is evaluated at an unrestricted estimator $\hat\theta_1$ and a restricted estimator $\hat\theta_0$.

The unrestricted estimator $\hat\theta_1$ is constructed from sample $\mathcal{S}_1$. We use the minimizer of a sample criterion function $
    \hat{\mathsf Q}_1(\theta)\equiv\sup_{A\in\cC}\sum_{i\in \mathcal{S}_1}\{\contf(A|X_i)-\hat P_1(A|X_i)\}^2_+$ \citep[see, e.g.,][]{chernozhukov2007estimation,Chernozhukov:2013aa} as $\hat\theta_1$, where $\hat P_1$ is the empirical (conditional) distribution of $Y_i$.\footnote{Other choices of $\hat\theta_1$ are also possible as long as they are calculated from $\mathcal{S}_1$.} 
    
The restricted estimator $\hat\theta_0$ is constructed from $\mathcal{S}_0$, 
\begin{align}
    \hat\theta_0\in\argmax_{\theta\in \{\theta':\varphi(\theta')=\varphi^*\}}\prod_{i\in \mathcal{S}_0}q_{\theta}(Y_i|X_i).
\end{align}
 Here, the function $\theta\mapsto q_\theta$ is called the least-favorable-pair (LFP) based density. While we refer to \cite{KaidoZhang:2024aay} for details, we note that this density $q_\theta$ is available in closed form for our application.\footnote{Its derivation is available upon request.}  It represents the worst-case distribution for controlling the test's size when testing a parameter value $\theta$ satisfying the restriction $\varphi(\theta)=\varphi^*$ 
against an unrestricted parameter value $\hat\theta_1$ (treated as a fixed parameter value by conditioning on $\mathcal{S}_1$).

Define the cross-fit LR statistic by \begin{align}
S_n(\varphi^*)\equiv\frac{T_n(\varphi^*)+T_n^{\text{swap}}(\varphi^*)}{2}.	\label{eq:def_Sn}
\end{align}
$T_n^{\text{swap}}(\varphi^*)$ is defined similarly to $T_n(\varphi^*)$ while swapping the roles of $\mathcal{S}_0$ and $\mathcal{S}_1$.
Recall that $\varphi(\theta)\in\mathbb R$ is the target object. 

Define a CI by
\begin{align}
    CI_n\equiv\big\{\varphi^*\in\mathbb R:S_n(\varphi^*)\le \frac{1}{\alpha}\big\}.\label{eq:def_ci}
\end{align}
In our application, we construct a grid of $K=200$ equally spaced points over the range of $\varphi(\theta)$. For each $\varphi^*$ in this grid, we compute $S_n(\varphi^*)$ and compare it to $1/\alpha$, where we use $\alpha=0.05.$ Each CI reported in Section \ref{sec:empirical_illustration} consists of the smallest and largest values of of $\varphi^*$ that are not rejected by the test.

Let $\mathcal P_{\theta,x}\equiv\{Q\in\mathcal M(\Sigma_Y,\cX):Q(A|x)\ge \contf(A|x),~A\in \cC\}$ be the set of conditional probabilities of $Y$ satisfying Artstein's inequality at $\theta$. Let $\cS=\cY\times\cX$, and let $\Delta(\cS^n)$ be the space of probability measures on the product space $\cS^n$. The set of joint distributions of the observable variables compatible with $\theta$ is 
\begin{align}
	\mathcal P^n_\theta\equiv\Big\{P^n\in\Delta(\cS^n):P^n=\bigotimes_{i=1}^n P_i, P_i(\cdot|x)\in\cP_{\theta,x},~	\forall i,~ x\in\cX \Big\}.
\end{align}
Define the sharp identification for $\varphi(\theta)$ by
\begin{align}
\mathcal H_{P^n}[\varphi]\equiv\{\varphi^*:\varphi(\theta)=\varphi^*, P^n\in\cP^n_\theta,\text{ for some }\theta\in\Theta\}.
\end{align}
The CI in \eqref{eq:def_ci} satisfies
\begin{align}
 \inf_{P^n\in \mathcal P^n_\theta,\theta\in\Theta}\inf_{\varphi\in \mathcal H_{P^n}[\varphi]}P^n(\varphi^*\in CI_n)\ge 1-\alpha.
\end{align}
 That is, the CI covers each element of the sharp identification region of $\varphi(\theta)$ with probability at least $1-\alpha$ uniformly over a family of data generating processes compatible with the model. This statement holds in any finite sample, and is called the universal coverage property \citep[][Corollary 1]{KaidoZhang:2024aay}.

\end{document}